\DeclareMathOperator*{\argmax}{arg\,max}
\DeclareMathOperator*{\argmin}{arg\,min}
\declaretheorem[name=Proposition,numberwithin=section]{proposition}
\declaretheorem[name=Lemma,numberwithin=section]{lemma}
\declaretheorem[name=Theorem,numberwithin=section]{theorem}
\declaretheorem[name=Definition,numberwithin=section]{definition}
\declaretheorem[name=Remark,numberwithin=section]{remark}
\declaretheorem[name=Assumption]{assumption}
\newcounter{subassumption}
\title{Steering opinion dynamics through control of social networks}
\author[1,2]{Andrew Nugent}
\author[2,*]{Susana N. Gomes}
\author[2]{Marie-Therese Wolfram}
\affil[1]{MathSys CDT, University of Warwick, Coventry CV4 7AL, UK}
\affil[2]{Mathematics Institute, University of Warwick, Coventry CV4 7AL, UK}
\affil[*]{Corresponding author, susana.gomes@warwick.ac.uk}
\begin{document}

\maketitle

\begin{abstract}
    In this paper we propose a novel control approach for opinion dynamics on evolving networks. The controls modify the strength of connections in the network, rather than influencing opinions directly, with the overall goal of steering the population towards a target opinion. This requires that the social network remains sufficiently connected, the population does not break into separate opinion clusters, and that the target opinion remains accessible. 
    
    We present several approaches to addressing these challenges, considering questions of controllability, instantaneous control and optimal control. Each of these approaches provides a different view on the complex relationship between opinion and network dynamics and raises interesting questions for future research.
\end{abstract}

{\textbf {In this paper we introduce a novel type of control problem for opinion formation on adaptive networks, in which the control variable affects the evolution of the underlying network rather than individuals' opinions. We present various control strategies and analyse under which conditions opinions can/cannot be steered towards a given target, then corroborate and extend our analytical results with computational experiments and a study of optimal controls. We highlight the advantages and disadvantages of each approach, as well as proposing several directions for future research. 
}}
\section{Introduction}
Fundamental models of opinion formation, such as those of DeGroot \cite{degroot1974reaching}, Hegselmann and Krause \cite{hegselmann2002opinion}, and Deffuant \textit{et al.} \cite{deffuant2000mixing} have been repeatedly extended and adapted to create the rich and varied literature of modern opinion dynamics. Many such models extend the idea of bounded confidence to that of a more general, typically non-linear, interaction function \cite{motsch2014heterophilious,piccoli2021control,nugent2023evolving}. The interaction function describes how the distance between individuals' opinions affects whether they interact and the weight this interaction would be given, however it can also be interpreted as a probability of those individuals interacting over some short time period \cite{nugent2023bridging}. This interaction function creates an `instantaneous' network of potential interactions, based entirely on individuals' current opinions, that is sometimes referred to as a state-dependent network \cite{berner2023adaptive}.

Various authors have considered the question of controlling opinion dynamics of this form (sometimes in the more general setting of interacting particle systems) by introducing a control variable that directly affects the evolution of each individual's opinion \cite{albi2014kinetic,qian2011adaptive,albi2016optimal,liu2018polarizability,herty2018suboptimal}. In the context of opinion dynamics this could represent an external effect such as advertising. The typical goal of such a control is to bring the population to consensus, or more specifically to consensus at a given target opinion. As such controls can influence opinions directly they can be highly effective in guiding the population towards a particular target opinion and so the question of optimal control is often considered. Some works have also studied the impact of introducing `strategic agents' whose opinion is controlled \cite{debuse2023automatic}.

To make opinion dynamics more representative of the real world, it is also common to include a social network \cite{lorenz2007continuous,amblard2004role,gabbay2007effects}, which may be static, evolve independently, or evolve coupled to individuals' opinions. It is important to note that the edge-weights of this social network are introduced as additional state variables and so, unlike the interaction function, are not determined by individuals' current opinions. In order to interact, individuals require a non-zero connection in the social network and a non-zero interaction function. Recently, ideas around evolving networks, such as those considered for the Kuraomoto model of coupled oscillators \cite{berner2019multiclusters}, have also been applied to opinion dynamics \cite{nugent2023evolving}. Here the evolution of the network is used to model changing social relationships, which are affected by the history of individuals' interactions. The goal of this paper is to study the potential impact of a control applied to the evolution of the social network. That is, controls gradually alter the extent to which pairs of individuals interact, rather than directly affecting their opinions. Such a control must work within the range of the population's current opinions, while also accounting for the non-linear interaction function, possibility of the population breaking into clusters, and the impact of the initial network structure. 

A related concept of `edge-based' controls, also referred to as a `decentralised adaptive strategy', has previously been addressed with regard to other interacting particle systems \cite{de2009decentralized,delellis2010synchronization,rajapakse2011dynamics}, such as the Kuramoto model \cite{de2008adaptive} and Chua's circuits \cite{yu2012distributed}. A recent review of adaptive dynamical networks, including discussion of these works, can be found in \cite{berner2023adaptive} (note that in this context the term adaptive networks is also used to refer to state-dependent networks, such as those generated by the interaction function). The focus in many prior works has been on providing equations for the evolution of edge weights and showing that these guarantee the stability of the fully synchronised state. This is somewhat different to the setup considered in this paper, in which a control variable will be introduced for each edge weight, and the goal is to determine how these control variables should be set over time to achieve consensus at a particular target. This is closer to the setting considered by Piccoli and Duteil in \cite{piccoli2021control}, in which each individual has a mass representing their influence in the population and control variables are introduced to affect the evolution of these masses. This can be considered as a specific case of network control, in which all edges connecting to a given individual are identical. Here we consider more general network structures and adapt the network dynamics considered in \cite{nugent2023evolving} by replacing the appearance of the interaction function in the network dynamics with these new control variables. 

The remainder of the paper is structured as follows. Section \ref{Section: Model Formulation} describes precisely the mathematical setting and motivates the form of control we will consider. This system is then analysed in Section \ref{Section: Model Analysis} in three ways: Section \ref{Section: Controllability} presents several analytic results about the system's controllability; \ref{Section: Instantaneous control} studies the performance of a candidate control, inspired by the instantaneous control in Piccoli and Duteil \cite{piccoli2021control}; and Section \ref{Section: Optimal control} attempts to improve upon this by considering the question of optimal control. Finally, Section \ref{Section: Conclusion} concludes with possible future research directions. Several proofs and additional numerical examples are provided in Appendices. 

\section{Model Formulation} \label{Section: Model Formulation}

We begin by presenting a general mathematical model for opinion dynamics on an evolving network. We consider a population of $N$ individuals and define $\Lambda = \{1,\dots,N\}$. Fix initial opinions $x(0)=(x_1(0),\dots,x_N(0))\in[-1,1]^N$ and the edge weights of an initial network $w(0)\in[0,1]^{N \times N}$. We assume $w_{ii}(t) = 1$ for all $i\in\Lambda$ and $t\geq0$, meaning that individuals always give their own opinion maximal weight. For clarity we assume individuals are labelled such that $x_1(0)\leq x_2(0) \leq\dots\leq x_N(0)$. Let $x_i(t)$ denote the opinion of individual $i\in\Lambda$ at time $t\geq0$ and $w_{ij}(t)$ the weight of the edge between individuals $i$ and $j$ at time $t$. 

We consider opinion dynamics based on the general formulation in \cite{motsch2014heterophilious,blondel2010continuous,lacker2018mean}. As in \cite{nugent2023evolving} we also introduce dynamics for the edge weights in the form of ODEs. However in this paper, these dynamics are driven by a control variable $u\in L^\infty(\mathds{R}^+ ; \mathds{R}^{N \times N})$. Throughout this paper we consider a control that can affect all edges (except $w_{ii}$) at all times and has perfect information about the current state of the system. We recognise that such a setup is not realistic but it will serve as a starting point for future research. In summary, we consider the following non-linear coupled ODE system
\begin{subequations} \label{eqn: ODE system} 
     \begin{align} 
        \frac{dx_i}{dt} &= \frac{1}{k_i(t)} \sum_{j=1}^N w_{ij}\, \phi(|x_j - x_i|) \,(x_j - x_i) \quad & i\in\Lambda \,\label{eqn: opinion ODE}, \\
        \frac{dw_{ij}}{dt} &= f(u_{ij},w_{ij}) \quad & i,j\in\Lambda \,, i\neq j \,, \label{eqn: weight ODE}
    \end{align}
\end{subequations}
where $f:\mathds{R}\times[0,1]\rightarrow\mathds{R}$ describes the effect of the control and $k_i$ denotes an individuals' in-degree. The in-degree describes the extent to which an individual is influenced by others in the network, and is given by 
\begin{equation} \label{Eqn: Opinion formation ODE}
    k_i(t) = \sum_{j=1}^N w_{ij}(t) \,.
\end{equation}
The function $\phi:[0,2]\rightarrow[0,1]$ in \eqref{eqn: opinion ODE} is an interaction function describing how the difference between individuals' opinions affects the strength/rate of their interactions. 

To ensure that the system \eqref{eqn: ODE system} is well-posed, and that opinions and weights remain in the desired intervals, we introduce the following assumptions on the interaction function and controls.
\begin{assumption} \label{Assumptions on phi}
    The interaction function $\phi$ satisfies 
\begin{enumerate}[label=\alph*)]
    \item $\phi(r) \in [0,1]$ for all $r\in[0,2]$. \label{Assumption: phi positive}
    \item $\phi(0) > 0$. \label{Assumption: phi(0) > 0}
    \item $\phi$ is Lipschitz continuous, with Lipschitz constant $L_\phi$. \label{Assumption: phi integrable}
\end{enumerate}
\end{assumption}

A common interaction function in opinion dynamics is the bounded confidence function
\begin{equation}
    \phi_R(r) = 
    \begin{cases}
        1 & \text{if } 0 \leq r \leq R \,,\\
        0 & \text{if } r > R \,,
    \end{cases}
\end{equation}
for some fixed $R\in[0,2]$. This function has a discontinuity at $R$ and so does not satisfy Assumption \ref{Assumptions on phi}. However, we may consider instead a smoothed version of $\phi_R$, obtained by taking its convolution with a compactly supported mollifier. As discussed in \cite{nugent2023bridging} this corresponds to adding selection noise to the confidence bound, so is a reasonable replacement.  

\begin{definition}
    We call an interaction function $\phi$ a smoothed bounded confidence function with radius $R$, if it satisfies Assumption \ref{Assumptions on phi} and there exists $R>0$ such that $\phi(r) = 1$ for all $r\in[0,R]$. 
\end{definition}

We also introduce assumptions on the form of the control. 

\begin{assumption} \label{Assumptions on f}
    The control function $f:\mathds{R}\times[0,1]\rightarrow\mathds{R}$ satisfies 
\begin{enumerate}[label=\alph*)]
    \item $f(0,w_{ij}) = 0$ for all $w_{ij}\in[0,1]$, so that edge weights remain constant if uncontrolled. \label{Assumption: no effect for no control}
    \item $f(u_{ij},0) \geq 0$ for all $u_{ij}\in\mathbb{R}$, so that edge weights remain non-negative. \label{Assumption: f positive at w=0}
    \item $f(u_{ij},1) \leq 0$ for all $u_{ij}\in\mathbb{R}$, so that edge weights do not exceed $1$. \label{Assumption: f negative at w=1}
    \item $f$ is bounded and integrable. \label{Assumption: f bounded}
\end{enumerate}
\end{assumption}

This final assumption on the boundedness of $f$ highlights a key feature of this problem: edge weights $w_{ij}$ change continuously in time with a finite speed, meaning edges cannot be switched on/off instantaneously. As a result, the maximal value of $f$ will play a major role in determining the controllability of \eqref{eqn: ODE system}. Note this does not mean that controls cannot be switched on/off instantaneously, but rather that their effect is not instantaneous. 

When Assumption \ref{Assumptions on phi} and Assumption \ref{Assumptions on f} hold, Proposition 3.1 from \cite{nugent2023evolving} ensures that $x_i(t)\in[-1,1]$ and $w_{ij}(t)\in[0,1]$ for all $i,j\in\Lambda$ and $t\geq0$.

This work focuses on a form of control function $f$ motivated by the memory weight dynamics discussed in \cite{nugent2023evolving} and similar to those discussed in \cite{gkogkas2021continuum}, 
\begin{align} \label{Eqn: Memory weight controls}
    f(u_{ij},w_{ij}) &= s(u_{ij}) \, \big( \ell(u_{ij}) - w_{ij} \big) \,.
\end{align}
Here $s:\mathds{R}\rightarrow\mathds{R}^+$ describes the rate at which $w_{ij}$ changes when controlled and $\ell:\mathds{R}\rightarrow[0,1]$ describes the target towards which $w_{ij}$ is directed. Similarly to \cite{piccoli2021control}, in which an opinion dynamics model with evolving masses is considered, we aim to control the population to consensus at a desired value $x^*\in[-1,1]$. The method for selecting the control $u\in L^\infty(\mathds{R}^+ ; \mathds{R}^{N \times N})$ in order to achieve this target is referred to as a control strategy. 

\begin{remark}
    Other forms of control function could be considered, and the controllability of \eqref{eqn: ODE system} is naturally dependent on this choice. As we will see throughout this paper, controlling \eqref{eqn: ODE system} poses several major challenges, hence we study a control function that has the potential to significantly alter the network's structure. By contrast, in \cite{nugent2023evolving} the authors also introduce logistic weight dynamics, motivating a control function of the form 
    \begin{align} \label{Eqn: Logistic weight controls}
        f(u_{ij},w_{ij}) &=  u_{ij} \,w_{ij} \, (1 - w_{ij}) \,,
    \end{align}
    in which $u_{ij}$ controls the rate at which $w_{ij}$ is increasing or decreasing. This form of control cannot add new edges, creating a strong dependence on the initial network that severely limits both the control's effectiveness and our ability to analyse it. 
\end{remark}

To motivate our assumptions on $\phi$, $x^*$ and $x(0)$, we first establish some basic conditions for consensus. 

\begin{definition} \label{Definition: consensus}
    We say the population reaches consensus if, for all $i,j\in\Lambda$,
    \begin{equation*}
        \lim_{t\rightarrow\infty} x_i(t) = \lim_{t\rightarrow\infty} x_j(t) \,.
    \end{equation*}
    Moreover, we say the population reaches consensus at a point $x^*\in[-1,1]$ if, for all $i\in\Lambda$,
    \begin{equation*}
        \lim_{t\rightarrow\infty} x_i(t) = x^*.
    \end{equation*}
\end{definition}

This definition is commonly used when considering consensus or synchronisation \cite{delellis2010synchronization}. Denote the minimum and maximum opinions in the population by 
\begin{align*}
    x_m(t) = \min_{i\in\Lambda} x_i(t) \,,\quad
    x^M(t) = \max_{i\in\Lambda} x_i(t) \,,
\end{align*}
and the opinion diameter $D(t) = x^M(t) - x_m(t)$. Definition \ref{Definition: consensus} of consensus is then equivalent to requiring $D(t)\rightarrow0$ as $t\rightarrow\infty$. The following provides a useful characterisation of consensus at a point. 

\begin{proposition} \label{Propostion: x^* must remain in the opinion interval}
    Assume that the population reaches consensus. Then the population reaches consensus at a point $x^*\in[-1,1]$ iff $x^*\in [x_m(t),x^M(t)]$ for all $t\geq 0$. 
\end{proposition}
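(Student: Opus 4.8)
The plan is to reduce the statement to a single structural fact about the dynamics --- that the extreme opinions are monotone --- and then finish with an elementary nested-intervals argument. First I would show that $x_m$ is non-decreasing and $x^M$ is non-increasing on $[0,\infty)$. The point is that $k_i(t)=\sum_j w_{ij}(t)\ge w_{ii}(t)=1>0$, so \eqref{eqn: opinion ODE} is well posed and each $x_i$ is Lipschitz in $t$ (its velocity is bounded by $2$, since $\sum_j w_{ij}\phi(|x_j-x_i|)|x_j-x_i|\le 2k_i$); and whenever $x_i(t)=x_m(t)$ every term $x_j(t)-x_i(t)$ is non-negative, so $\dot x_i(t)\ge 0$ because $w_{ij}\ge0$ and $\phi\ge0$ by Assumption \ref{Assumptions on phi}, with the symmetric inequality $\dot x_i(t)\le 0$ holding when $x_i(t)=x^M(t)$. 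Since $x_m=\min_i x_i$ is a minimum of finitely many Lipschitz functions it is absolutely continuous, hence differentiable a.e., and at a differentiability point $t$ one compares $x_m(t+h)\le x_i(t+h)$ for an index $i$ attaining the minimum at $t$ and lets $h\to 0^-$ to get $x_m'(t)\ge\dot x_i(t)\ge0$; integrating shows $x_m$ is non-decreasing, and the argument for $x^M$ is symmetric.

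With this in hand, the consensus hypothesis does the rest. Since $x_m$ is non-decreasing and bounded above by $1$, and $x^M$ is non-increasing and bounded below by $-1$, both converge; and because the population reaches consensus, $\lim_{t\to\infty}D(t)=\lim_{t\to\infty}\big(x^M(t)-x_m(t)\big)=0$, so the two limits coincide at a common value $\bar x:=\lim_{t\to\infty}x_m(t)=\lim_{t\to\infty}x^M(t)\in[-1,1]$. Squeezing $x_m(t)\le x_i(t)\le x^M(t)$ then gives $x_i(t)\to\bar x$ for every $i$, so the population actually reaches consensus at $\bar x$, and moreover $\bigcap_{t\ge0}[x_m(t),x^M(t)]=\{\bar x\}$ because the intervals are nested with lengths tending to $0$.

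To conclude, I would read off the equivalence. If $x^*\in[x_m(t),x^M(t)]$ for all $t\ge0$, then $x^*\in\bigcap_{t\ge0}[x_m(t),x^M(t)]=\{\bar x\}$, so $x^*=\bar x\in[-1,1]$ and the population reaches consensus at $x^*$. Conversely, if the population reaches consensus at some $x^*\in[-1,1]$, then $x_m(t)\to x^*$ and $x^M(t)\to x^*$, and by the monotonicity just established $x_m(t)\le x^*\le x^M(t)$ for every $t$, i.e. $x^*\in[x_m(t),x^M(t)]$.

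The only genuinely delicate step is the monotonicity of the extreme opinions: one must handle the non-differentiability of $t\mapsto\min_i x_i(t)$ and $t\mapsto\max_i x_i(t)$ and the possibility that several individuals attain the extreme opinion simultaneously. This is a standard differential-inequality (Dini-derivative) argument, and once it is in place everything else is routine bookkeeping with bounded monotone sequences.
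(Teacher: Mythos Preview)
Your proposal is correct and follows essentially the same route as the paper: both arguments hinge on the monotonicity of the extreme opinions ($x_m$ non-decreasing, $x^M$ non-increasing) and then finish with an elementary squeezing. The paper simply cites this monotonicity from \cite{nugent2023evolving}, whereas you supply a self-contained Dini-derivative sketch; and where the paper treats the two directions separately (squeezing for $\Leftarrow$, contradiction for $\Rightarrow$), you unify them by first identifying the common limit $\bar x$ and observing $\bigcap_{t\ge0}[x_m(t),x^M(t)]=\{\bar x\}$, which is a tidy repackaging but not a genuinely different idea.
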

\begin{proof}
    $(\Leftarrow)$ Assume $x^*\in [x_m(t),x^M(t)]$ for all $t\geq 0$. As the population reaches consensus, $\lim_{t\rightarrow\infty} x_m(t) = \lim_{t\rightarrow\infty} x^M(t) \,,$
    so $D(t)\rightarrow0$. As $x^*\in [x_m(t),x^M(t)]$, $|x_m(t) - x^*| < D(t)$ so $x_m(t)\rightarrow x^*$. Therefore $x_i(t)\rightarrow x^*$ for all $i\in\Lambda$ and we have consensus at $x^*$. 

    $(\Rightarrow)$ Assume that the population reaches consensus at $x^*$, but that there exists a time $s\geq0$ at which $x^*\notin [x_m(s),x^M(s)]$. As $x_m$ is increasing and $x^M$ is decreasing (see Proposition 2.1 in \cite{nugent2023evolving}), we have that $x^*\notin [x_m(t),x^M(t)]$ for all $s \geq t$. If $x^* < x_m(s)$ then $|x_1(t) - x^*| \geq |x_m(s) - x^*| > 0$ for all $t\geq0$. Similarly if $x^* > x^M(s)$ then $|x_1(t) - x^*| \geq |x^M(s) - x^*| > 0$ for all $t\geq0$. In both cases this makes convergence of $x_1$ to $x^*$ impossible, giving a contradiction. 
\end{proof}

From Proposition \ref{Propostion: x^* must remain in the opinion interval} it is clear that we will at least require $x^*\in[x_m(0),x^M(0)]$ to have any hope of reaching consensus at $x^*$. Of course, we will also require that consensus itself is possible. We introduce the following definitions to help clarify when this is not the case.  

\begin{definition}
     For a given interaction function $\phi:[0,2]\rightarrow[0,1]$ we denote the set of roots of $\phi$ by 
    \begin{equation}
        \mathcal{R}_\phi = \{ r \in [0,2] : \phi(r) = 0 \}.
    \end{equation}
    If $\mathcal{R}_\phi$ is empty then define $r^* = 2$, otherwise let $r^* = \inf (\mathcal{R}_\phi)$. 
\end{definition}

By Assumption \ref{Assumptions on phi}, $r^* > 0$. If $\phi$ is a smoothed bounded confidence function with radius $R$, we will also have that $0<R<r^*$.

\begin{definition}
     For a given $r>0$, an opinion profile $x(t)$ is called an $r$-chain if $|x_{i+1}(t) - x_i(t)|<r$ for all $i\in\Lambda$. 
\end{definition}

\begin{proposition} \label{Proposition: gaps of size r^* cannot be closed}
    If $\phi$ is decreasing and there exists $i\in\Lambda$ such that $|x_{i+1}(0) - x_i(0)|>r^*$, then the population will not reach consensus. 
\end{proposition}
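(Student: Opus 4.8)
The plan is to show the $i$-th gap can never close, by exploiting the fact that once two subpopulations are more than $r^*$ apart in opinion they cannot influence one another at all. First I would record a preliminary fact about $\phi$: either $\mathcal{R}_\phi=\emptyset$, in which case $r^*=2$ and the hypothesis $|x_{i+1}(0)-x_i(0)|>2$ is vacuous on $[-1,1]^N$; or $\mathcal{R}_\phi$ is nonempty, hence compact (as $\phi$ is continuous), so $r^*=\min\mathcal{R}_\phi$ and $\phi(r^*)=0$, and since $\phi$ is decreasing and non-negative, $\phi(r)=0$ for every $r\ge r^*$. Thus any pair of individuals whose opinions differ by more than $r^*$ contributes a vanishing term to \eqref{eqn: opinion ODE}.

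Next I would split $\Lambda$ at the gap into $L=\{1,\dots,i\}$ and $R=\{i+1,\dots,N\}$ and track $a(t)=\max_{j\in L}x_j(t)$, $b(t)=\min_{k\in R}x_k(t)$ and $g(t)=b(t)-a(t)$; by the initial ordering $g(0)=x_{i+1}(0)-x_i(0)>r^*$. The central claim is that $g(t)>r^*$ for all $t\ge0$, and I would prove it by a continuity/bootstrap argument: let $T=\sup\{t\ge0:g(s)>r^*\ \forall s\in[0,t]\}$, which is positive since $g$ is continuous and $g(0)>r^*$. On $[0,T)$ every cross-pair $(j,k)\in L\times R$ has $|x_k-x_j|=x_k-x_j\ge g>r^*$, so $\phi$ kills all cross terms and \eqref{eqn: opinion ODE} reduces, within each group, to a flow driven only by opinions of the same group, whose right-hand side evaluated at an index attaining the group maximum (resp. minimum) is non-positive (resp. non-negative) because $w_{ij}\ge0$, $\phi\ge0$ and $k_i\ge w_{ii}=1>0$. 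By the same computation as in Proposition 2.1 of \cite{nugent2023evolving}, $a$ is then non-increasing and $b$ non-decreasing on $[0,T)$, so letting $t\uparrow T$ gives $g(T)=b(T)-a(T)\ge b(0)-a(0)=g(0)>r^*$. By maximality of $T$ and continuity this forces $T=\infty$.

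Finally, since $g(t)>r^*$ for all $t$ and $D(t)=x^M(t)-x_m(t)\ge b(t)-a(t)=g(t)$, we get $D(t)>r^*>0$ for all $t$, so $D(t)\not\to0$ and the population does not reach consensus. The one genuine obstacle is the apparent circularity — decoupling of the two groups is what keeps the gap above $r^*$, while the gap being above $r^*$ is what produces the decoupling — which is exactly what the first-exit-time formulation resolves; a minor technical point is the non-differentiability of the max/min over a group, handled either via one-sided (Dini) derivatives or by invoking the monotonicity statement of \cite{nugent2023evolving} on each decoupled subsystem.
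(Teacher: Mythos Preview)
Your proof is correct and follows the same idea the paper sketches (deferring details to Proposition~3.2 of \cite{nugent2023evolving}): the gap cannot close because $\phi$ vanishes across it. Your version is more explicit, turning the informal ``encounter a root of $\phi$'' into the clean observation that $\phi$ decreasing, non-negative, and $\phi(r^*)=0$ force $\phi\equiv0$ on $[r^*,2]$, which decouples the two groups and lets the first-exit-time argument run.
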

\begin{proof}
    See, for example, Proposition 3.2 in \cite{nugent2023evolving}. The fundamental idea is that once a gap in the opinion profile of size bigger than $r^*$ appears, the closest individuals on either side of this gap will be unable to move closer than $r^*$ as they will encounter a root of $\phi$. 
\end{proof}

Motivated by Proposition \ref{Propostion: x^* must remain in the opinion interval} and Proposition \ref{Proposition: gaps of size r^* cannot be closed}, we introduce the following assumptions. 
\begin{assumption} \label{Assumptions on x(0) and x^*} 
    The initial opinions $x(0)$ and consensus target $x^*$ satisfy
    \begin{enumerate}[label=\alph*)]
        \item $x^*\in[x_m(0),x^M(0)]$. \label{Assumption: x^* in initial opinion interval}
        \item $x(0)$ is an $r^*$-chain. \label{Assumption: x(0) is an r^* chain} 
    \end{enumerate}
\end{assumption}

These Assumptions ensure we are operating in an environment in which the opinions and networks are well-defined and control to consensus may be feasible. The question of which consensus targets are achievable then depends on the initial network and the speed at which controls can alter the network structure.

\section{Model Analysis} \label{Section: Model Analysis}

In this Section we consider controls of the form \eqref{Eqn: Memory weight controls} for given functions $s:\mathds{R}\rightarrow\mathds{R}^+$ and $\ell:\mathds{R}\rightarrow[0,1]$. We assume that $s(0)=0$, so that $f$ satisfies Assumption \ref{Assumptions on f}. 

We first show which consensus targets are guaranteed to be achievable when using an edge-creating control $u^+\in\mathds{R}^+$ and an edge removing control $u^-\in\mathds{R}^-$, then investigate the performance of a candidate instantaneous control, and finally address questions of optimal control for example $s$ and $\ell$ functions. 

\subsection{Controllability} \label{Section: Controllability}

The first result of this Section, Proposition \ref{Proposition: Uncontrollability}, shows that the model setup and Assumptions described in Section \ref{Section: Model Formulation} do not guarantee controllability, indeed we can always find situations in which the system is not controllable.

\begin{proposition} \label{Proposition: Uncontrollability}
    Fix some $N > 1$. Let the interaction function $\phi$ satisfy Assumption \ref{Assumptions on phi} and the control function $f(u_{ij},w_{ij})$ satisfy Assumption \ref{Assumptions on f}. Then, for any $x^*\in(-1,1)$ there exist initial opinions $x(0)\in[-1,1]^N$ and initial edge weights $w(0)\in[0,1]^{N\times N}$ satisfying Assumption \ref{Assumptions on x(0) and x^*} for which control to consensus at $x^*$ is not possible. 
\end{proposition}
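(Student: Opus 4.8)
The plan is to construct a counterexample by exploiting the finite speed at which the control can alter the network, as flagged in the discussion after Assumption \ref{Assumptions on f}. The key observation is that $f$ is bounded, say $|f| \leq M$, so starting from an edge weight $w_{ij}(0) = 0$ the weight can grow at most like $Mt$; in particular there is a fixed positive time before any newly-created edge becomes appreciably strong. Meanwhile, the opinion dynamics \eqref{eqn: opinion ODE} keep running, and along any existing edges opinions drift together. The strategy is to pick $x(0)$ and $w(0)$ so that, no matter what the controller does, a gap of size exceeding $r^*$ opens up in the opinion profile before the network can be rearranged to prevent it — after which Proposition \ref{Proposition: gaps of size r^* cannot be closed} (or more precisely the underlying mechanism of encountering a root of $\phi$) forecloses consensus, hence a fortiori consensus at $x^*$.

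Concretely, I would proceed as follows. First, fix $x^* \in (-1,1)$ and choose a small parameter $\varepsilon > 0$. Place a single individual, say individual $1$, at an initial opinion $x_1(0)$ very close to one endpoint of $[x_m(0), x^M(0)]$ that is on the far side of $x^*$, and cluster the remaining $N-1$ individuals tightly together near the opposite end, still arranging matters so that $x^* \in [x_m(0), x^M(0)]$ and consecutive gaps are initially below $r^*$ (Assumption \ref{Assumptions on x(0) and x^*} must hold). Set the initial network so that individual $1$ is isolated from the cluster, i.e. $w_{1j}(0) = w_{j1}(0) = 0$ for $j \neq 1$, while the cluster is internally well-connected. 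Then individual $1$ does not move at all until the controller builds up an edge from $1$ to the cluster; but building such an edge takes time at least on the order of $1/M$ to reach any non-negligible weight, and even once built, the resulting motion of $x_1$ has speed bounded in terms of $M$ and the weights, so $x_1$ can be moved only slowly. Meanwhile the cluster, being internally connected, contracts — the key point is that in order to later merge individual $1$ with the cluster one must close the gap between $x_1$ and the nearest cluster member, but if that gap is chosen initially to be close to $r^*$ and the cluster drifts (even slightly) away from individual $1$, or simply fails to move fast enough toward it, the gap reaches $r^*$ and becomes uncloseable. One then needs a quantitative lemma: the total displacement of any individual over all time is controlled, or at least the displacement achievable within the window before the gap would close is too small, uniformly over admissible controls.

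The main obstacle I anticipate is making the timing argument genuinely uniform over all control strategies $u \in L^\infty(\mathds{R}^+; \mathds{R}^{N\times N})$, since the controller sees the full state and could in principle begin strengthening the edge $w_{1j}$ immediately at $t=0$. So the counterexample must be robust to that: one wants the gap between individual $1$ and the cluster so close to $r^*$, and the cluster so far from individual $1$, that even a control which instantly switches on at maximal rate cannot drag $x_1$ across the necessary distance before a root of $\phi$ intervenes. This likely forces a careful choice where the relevant gap is \emph{already essentially $r^*$} and only the specific structure of $\phi$ near $r^*$ (and the fact that $\phi$ need not be decreasing here, unlike in Proposition \ref{Proposition: gaps of size r^* cannot be closed}) has to be handled — possibly by choosing $\phi$, or rather the data relative to a given $\phi$, so that the relevant configuration stalls. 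An alternative, cleaner route that sidesteps the dynamical estimate: choose $x(0)$ with the $N-1$ cluster members and individual $1$ separated by a gap \emph{strictly} larger than $r^*$ already at $t=0$ — but this violates Assumption \ref{Assumption: x(0) is an r^* chain}, so it is not available, which is precisely why the timing/speed argument seems unavoidable. I would therefore invest the bulk of the proof in the estimate bounding how far $x_1$ can travel before the opinion gap hits $r^*$, deriving it from the explicit bound $|dx_i/dt| \leq \max_j |x_j - x_i| \leq D(t)$ together with the weight bound from $|f|\le M$, and then choosing $\varepsilon$ small enough at the end.
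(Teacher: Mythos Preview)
Your proposal targets the wrong obstruction. You aim to force a break in the $r^*$-chain and then invoke the mechanism behind Proposition \ref{Proposition: gaps of size r^* cannot be closed}, but this cannot work in the generality claimed. First, Assumption \ref{Assumptions on phi} allows $\phi$ with no roots at all (e.g.\ $\phi(r)=e^{-r}$), in which case $r^*=2$; since the diameter $D(t)$ is non-increasing and $D(0)\le 2$, no gap can ever grow to exceed $r^*$, and your argument collapses entirely. Second, even when $r^*<2$, Proposition \ref{Proposition: gaps of size r^* cannot be closed} requires $\phi$ decreasing, which is not part of Assumption \ref{Assumptions on phi}; if $\phi$ has an isolated zero at $r^*$ and is positive just beyond it, a gap that momentarily exceeds $r^*$ can still be closed. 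You flagged this yourself but did not resolve it, and it is not a technicality --- it is fatal to the approach.

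The paper's construction is quite different, and in fact you have the initial network backwards. Rather than isolating individual $1$, the paper places $x_1(0)$ \emph{just below} $x^*$ (within distance $\varepsilon$) with the remaining $N-1$ individuals clustered well above $x^*$ (but still within $r^*$ of $x_1$, so Assumption \ref{Assumptions on x(0) and x^*} holds), and takes $w_{1j}(0)\geq\tfrac12$ so that individual $1$ is \emph{strongly} connected to everyone. Boundedness of $f$ then means these strong edges cannot be cut fast enough: there is a fixed time $\tau>0$, independent of $\varepsilon$, during which $w_{1j}(t)\geq\tfrac14$, $\phi(|x_j-x_1|)\geq c>0$, and $x_j-x_1$ stays bounded below by a positive constant, forcing $\tfrac{dx_1}{dt}$ to be bounded below by a positive constant. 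Choosing $\varepsilon$ smaller than this constant times $\tau$ guarantees $x_1(\tau)>x^*$, so $x^*$ has exited $[x_m(\tau),x^M(\tau)]$ and Proposition \ref{Propostion: x^* must remain in the opinion interval} finishes. The obstruction is the target leaving the opinion interval, not the chain breaking --- and this works for every $\phi$ satisfying Assumption \ref{Assumptions on phi}.
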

\begin{proof}
    Without loss of generality, assume $x^*\in(-1,0]$. We will construct a range of $x(0)$ and $w(0)$ values for which control is not possible. Take $x_1(0)\in[x^* - \varepsilon, x^*)$ for some small $\varepsilon$ to be determined. Assume that $\varepsilon$ is sufficiently small that $x_1(0)\in[-1,1]$. Let $\nu = \min(1 - x^*, \frac{1}{2}r^*) > 0$ and take $x_i(0)\in[x^* + \frac{1}{2}\nu, x^* + \nu]$ for all $i\in\Lambda\setminus\{1\}$. This gives a setup in which $x_1$ is `far away' from the rest of the population, while maintaining the required $r^*$-chain. Take $w_{1j}(0) \geq \frac{1}{2}$ for all $j\in\Lambda$. The other entries of $w(0)$ may take any value in $[0,1]$. These initial conditions and $x^*$ satisfy Assumption \ref{Assumptions on x(0) and x^*}. 
    
    We now show that if $\varepsilon$ is chosen to be sufficiently small, there exists a time $\tau>0$ such that $x_i(\tau) > x^*$ for all $i\in\Lambda$, hence by Proposition \ref{Propostion: x^* must remain in the opinion interval} consensus at $x^*$ is impossible. This is achieved using bounds on $w_{1j}$, $\phi(|x_j - x_i|)$ and $x_i$.

    As $f$ satisfies Assumption \ref{Assumptions on f} it is bounded, hence there exists a constant $\mathcal{S} > 0$ such that $|f(u_{ij},w_{ij})| < \mathcal{S}$ for all $u_{ij}\in\mathds{R}$ and $w_{ij}\in[0,1]$. Hence 
    \begin{equation*}
        |w_{1j}(t) - w_{1j}(0)| \leq \int_0^t \big| f\big(u_{ij}(s),w_{ij}(s)\big) \big| ds \leq \mathcal{S}t \,.
    \end{equation*}
    As $w_{1j}(0)\geq \frac{1}{2}$, for $t\leq \frac{1}{4\mathcal{S}}$ we have $w_{1j}(t) \geq \frac{1}{4}$ for all $j\in\Lambda$.

    As $\varepsilon$ can be chosen sufficiently small that $D(0)<r^*$, there exists a constant $c > 0$ such that $\phi(r) > c$ for all $r\in[0,D(0)]$. As $D(t) \leq D(0)$ for all $t\geq0$, $\phi\big(|x_j(t) - x_i(t)|\big) > c$ for all $i,j\in\Lambda$ and $t\geq 0$. 

    Finally, for any $i\in\Lambda$, consider
    \begin{align*}
        \bigg| \frac{dx_i}{dt} \bigg| 
        &= \bigg| \frac{1}{k_i} \sum_{j=1}^N w_{ij}\,\phi\big(|x_j(t) - x_i(t)|\big)\,(x_j - x_i) \bigg|  \\
        &\leq \frac{1}{k_i} \sum_{j=1}^N w_{ij}\,\phi\big(|x_j(t) - x_i(t)|\big)\,|x_j - x_i| \\
        &\leq \sum_{j=1}^N |x_j - x_i| \\
        &\leq N r^* \,.
    \end{align*}
    Hence $x_i(t) \geq x_i(0) - t N r^*$. Specifically, for $i\in\Lambda\setminus\{0\}$ and $t\leq\frac{\nu}{4Nr^*}$, $x_i(t) \geq x^* + \frac{1}{4}\nu$. 

    Combining these bounds we have the following 
    \begin{align*}
        \frac{dx_1}{dt} &= \frac{1}{k_1} \sum_{j=1}^N w_{1j}\,\phi\big(|x_j(t) - x_1(t)|\big)\,(x_j - x_1) \\
        &\geq \frac{1}{N} \sum_{j=1}^N \frac{1}{4}\,c\,(x_j - x_1) & \text{for } t < \frac{1}{4\mathcal{S}} \\
        &\geq \frac{1}{N} \sum_{j=1}^N \frac{1}{4}\,c\,\bigg(x^* + \frac{1}{4}\nu - x^*\bigg) & \text{for } t < \min\bigg(\frac{1}{4\mathcal{S}}, \frac{\nu}{4Nr^*}\bigg) \\
        &= \bigg(\frac{c\nu}{16} \bigg) \bigg(\frac{N-1}{N}\bigg) & \text{for } t < \min\bigg(\frac{1}{4\mathcal{S}}, \frac{\nu}{4Nr^*}\bigg) \,.
    \end{align*}
    Thus for $t \leq \tau := \min\big(\frac{1}{4\mathcal{S}}, \frac{\nu}{4Nr^*}\big) $ we have $x_1(t) \geq x_1(0) + \big(\frac{c\nu}{16} \big) \big(\frac{N-1}{N}\big) t$. Note that the definition of $\tau$ is independent of $\varepsilon$, so by taking $\varepsilon < \big(\frac{c\nu}{16} \big) \big(\frac{N-1}{N}\big) \tau$ we ensure that $x_1(\tau) > x^*$. In addition, $x_i(\tau) \geq x^* + \frac{1}{4}\nu > x^*$ for $i\in\Lambda\setminus\{0\}$. Hence at time $\tau$ all opinions lie above $x^*$, so control to consensus at $x^*$ is impossible. 
\end{proof}

Hence it is not possible to provide a control function $f$ satisfying Assumption \ref{Assumptions on f}, and so not possible to provide a control strategy to determine $u$, that guarantees controllability for all $x^*$. Instead we consider the initial conditions $x(0)$ and $w(0)$, as well as the consensus target $x^*$, to be fixed and ask when the system can be controlled for these fixed values. 

We begin by considering the simple case in which $w_{ij}(0)=0$ for all $i\neq j$, and the control acts only to create new edges. Recall that $w_{ii}$ is always equal to $1$ for all $i\in\Lambda$.

\begin{definition}
    A network $w\in[0,1]^{N\times N}$ is called empty if $w_{ij}=0$ for all $i\neq j$ and non-empty if there exists $i\neq j$ such that $w_{ij} > 0$. 
\end{definition}

\begin{proposition} \label{Proposition: Controllability from empty intial network}
    Let $\phi$ satisfy Assumption \ref{Assumptions on phi}, $x(0)$ and $x^*$ satisfy Assumption \ref{Assumptions on x(0) and x^*}, and $w(0)$ be an empty network. Assume also that there exists a control value $u^+\in\mathbb{R}$ for which  
    \begin{align} \label{Eqn: u^+ definition}
        s^+ = s(u^+) > 0  \,,\quad
        \ell^+ = \ell(u^+) > 0 \,.
    \end{align}
    That is, the control $u^+$ can be used to create new edges. Then there exists a control $u\in L^\infty(\mathds{R}^+; \{0,u^+\}^{N\times N})$ such that the solution of \eqref{eqn: ODE system} reaches consensus at $x^*$. 
\end{proposition}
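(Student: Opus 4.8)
The plan is to exploit the freedom afforded by the empty initial network — any set of edges may be switched on at any time, a switched-on edge $w_{ij}$ drifts monotonically to $\ell^+$, and a switched-off edge is frozen — to first drive the population into two tight clusters straddling $x^*$, and then tune a single switching time so that these clusters merge exactly at $x^*$. The standing requirement throughout is that consecutive opinion gaps stay below $r^*$, so that the interaction coefficients stay bounded below and no unbridgeable gap is created (cf.\ Proposition~\ref{Proposition: gaps of size r^* cannot be closed}); this forces the construction to proceed gradually rather than switching many edges on at once. Fix $p$ with $x_p(0)\le x^*\le x_{p+1}(0)$, which exists because $x(0)$ is sorted and $x^*\in[x_m(0),x^M(0)]$, and set $\gamma := r^*-\max_i\big(x_{i+1}(0)-x_i(0)\big)>0$, positive since $x(0)$ is an $r^*$-chain; pick a small $\delta\in(0,\gamma)$.

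\textbf{Step 1 (sequential zipping).} I would build a subroutine that collapses the block $\{1,\dots,p\}$ onto $x_p(0)$ while keeping $x_p$ fixed: switch on $w_{12}$ alone and wait until $x_1$ is within $\delta$ of $x_2(0)$; then also switch on $w_{23}$ and wait until $x_1,x_2$ are within $\delta$ of $x_3(0)$; continue up to switching on $w_{p-1,p}$. In phase $k$ the individuals $p,\dots,N$ have no active out-edge and are frozen at their initial values; the leader $x_k$ is pulled monotonically to $x_{k+1}(0)$ and then each $x_j$ ($j<k$) follows, all converging to $x_{k+1}(0)$, so after a finite time the block $\{1,\dots,k\}$ lies within $\delta$ below it. The key estimate is that at the start of phase $k$ the block sits within $\delta$ below $x_k(0)$, hence along phase $k$ every internal gap is at most $\big(x_{k+1}(0)-x_k(0)\big)+\delta\le r^*-(\gamma-\delta)<r^*$, every relevant $\phi$ value is at least $\min_{[0,\,r^*-(\gamma-\delta)]}\phi>0$, and so the $r^*$-chain, hence connectedness, is preserved. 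Running the mirror-image routine on $\{p+1,\dots,N\}$ (switching on $w_{N,N-1},w_{N-1,N-2},\dots,w_{p+2,p+1}$) collapses that block onto $x_{p+1}(0)$; the two routines use disjoint edges and no edge joins the two blocks, so they run in parallel. The preparation stage ends with two clusters: one within $\delta$ below $a:=x_p(0)$ with $x_p$ exactly at $a$, the other within $\delta$ above $b:=x_{p+1}(0)$ with $x_{p+1}$ exactly at $b$, and $a\le x^*\le b$, $b-a<r^*$.

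\textbf{Step 2 (merging and tuning).} For $\tau\in[-\infty,\infty]$ let $\sigma_\tau$ keep all previously activated edges on and, if $\tau\ge0$, switch on $w_{p,p+1}$ now and $w_{p+1,p}$ after additional time $\tau$, and symmetrically for $\tau<0$. For every finite $\tau$ the interaction network then contains the path $1,2,\dots,N$ with all consecutive gaps $<r^*$, so by the standard consensus result for \eqref{eqn: ODE system} on a connected network (see \cite{nugent2023evolving}) the population reaches consensus at some $g(\tau)\in[a,b]$; for $\tau=+\infty$ only $w_{p,p+1}$ ever fires, the $b$-cluster stays at $b$, and everything converges to $b$, while $\tau=-\infty$ gives convergence to $a$. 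Thus $g(-\infty)=a\le x^*\le b=g(+\infty)$, and once $g$ is shown continuous on the compactified parameter interval, the intermediate value theorem yields $\tau^\star$ with $g(\tau^\star)=x^*$; the corresponding $\{0,u^+\}^{N\times N}$-valued control proves the proposition.

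The main obstacle is the continuity of the consensus map $\tau\mapsto g(\tau)$: since $g(\tau)=\lim_{t\to\infty}x_1(t;\tau)$ and continuous dependence on parameters is only a finite-horizon statement, I would supplement it with a diameter-decay bound $D(t)\le C\mathrm e^{-\lambda t}$ valid once the network is connected, with $C,\lambda$ independent of $\tau$ (the active weights are eventually bounded below uniformly in $\tau$, and $\phi$ is bounded below along the trajectory because the gaps stay away from $r^*$). This makes $t\mapsto x_1(t;\tau)$ a locally uniform Cauchy family as $t\to\infty$, so $g$ is a locally uniform limit of continuous functions and hence continuous, with the limits as $\tau\to\pm\infty$ following because a late-activated edge then acts on an already nearly-merged population. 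The remaining ingredients — order preservation of the $x_i$, finite-time near-convergence within each zipping phase, and the connected-network consensus statement — I would quote or adapt from \cite{nugent2023evolving}.
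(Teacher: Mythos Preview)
Your proposal is correct and follows essentially the same strategy as the paper: identify the two pivot individuals straddling $x^*$, sequentially gather the rest of the population toward them via one-sided chain edges (so that the pivots themselves have no out-edges and remain frozen), and then tune a switching parameter for the pivot-to-pivot edges and apply the intermediate value theorem to land at $x^*$. The paper parameterises the final step by switch-\emph{off} times $(T_{ab},T_{ba})$ rather than your switch-on delay $\tau$, and isolates the continuity of the consensus map as Lemma~\ref{Lemma: F continuity} in Appendix~\ref{Appendix: Proofs}, but the decomposition and the key observations are the same.
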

\begin{proof}
    The approach of the proof is to construct such a control. We proceed in three steps, firstly addressing the simplest case of $N=2$ individuals. In the second step we reduce the general case to the $N=2$ case by identifying the the closest individuals to $x^*$ above and below and gathering the rest of the population towards these two individuals. Finally in the third step we apply the $N=2$ case once this gathering is complete. 

    \textbf{Step 1:} Consider first the case that $N = 2$. Clearly, if $x_1(0) = x_2(0)$ then we immediately have consensus at $x^* = x_1(0) = x_2(0)$, so assume $x_1(0) < x_2(0) $. We then define times $T_{12}$ and $T_{21}$ and set the controls $u_{ij}$ for $ij = 12,\,21$ according to
    \begin{equation} \label{Eqn: Controllability from empty network control scheme}
        u_{ij}(t) = 
        \begin{cases}
            u^+ & \text{if } 0 \leq t \leq T_{ij} \,,\\
            0 & \text{if } t > T_{ij} \,.
        \end{cases}
    \end{equation}
    That is, $T_{ij}$ gives the time at which $u_{ij}$ is switched off, hence a time after which $w_{ij}$ remains fixed.
    We also define
    \begin{equation} \label{Eqn: F definition}
        F(T_{12},T_{21}) = \lim_{t\rightarrow\infty} x_1(t) \,,
    \end{equation}
    for the controlled dynamics with \eqref{Eqn: Controllability from empty network control scheme}. As long as $T_{12}$ and $T_{21}$ are not both zero there will exist an edge between $x_1$ and $x_2$, in addition $|x_1(0) - x_2(0)| < r^*$, so for $(T_{12},T_{21})\in\mathds{R}^2\setminus(0,0)$ the system reaches consensus. $F(T_{12},T_{21})$ therefore gives the consensus opinion. We now show that there exists values of $T_{12}$ and $T_{21}$ such that $F(T_{12},T_{21})=x^*$, meaning the system reaches consensus at the desired target $x^*$. 
    
    First consider $x^* \leq \frac{1}{2}(x_1(0)+x_2(0))$ and fix a value of $T_{21} > 0$. By Lemma \ref{Lemma: F continuity} in Appendix \ref{Appendix: Proofs}, for a fixed $T_{21}>0$ the function $F(\cdot,T_{21})$ is continuous. As $w_{12}(0) = w_{21}(0) = 0$, $F(0,T_{21}) = x_1(0)$ and $F(T_{21},T_{21}) =\frac{1}{2}(x_1(0)+x_2(0))$. Hence by the intermediate value theorem, for any given $x^*\in[x_1(0),\frac{1}{2}(x_1(0)+x_2(0))]$ there exists $(T_{12},T_{21})$ such that the system reaches consensus at $x^*$. By an analogous argument, the same can be achieved for $x^*\in[\frac{1}{2}(x_1(0)+x_2(0)), x_2(0)]$. This proves the claim in the case $N = 2$. 
    
    \textbf{Step 2:} For $N > 2$ we begin by gathering individuals towards points on either side of $x^*$. Assume there are no individuals $i$ with $x_i(0) = x^*$. Define individual $a$ such that $x_a(0) < x^*$ and $x_i(0) \leq x_a(0)$ for all individuals $i$ with $x_i(0) < x^*$. That is, $a$ is the closest individual whose initial opinion is strictly below $x^*$. If this initial opinion is not unique then choose the individual with the highest index. Similarly define individual $b$ to be the closest individual whose initial opinion is strictly above $x^*$. If this initial opinion is not unique then choose the individual with the lowest index. As there are no individuals with $x_i(0) = x^*$, we will have $b = a+1$ and so $|x_a(0) - x_b(0)| < r^*$. All individuals with $x_i(0) \leq x_a(0)$ will be brought upwards towards $x_a(0)$, while all individuals with $x_i(0) \geq x_b(0)$ will be brought down towards $x_b(0)$. This is visualised in the diagram in Figure \ref{fig:step 2 diagram}. We will then apply the $N=2$ case to $x_a$ and $x_b$. 

    \begin{figure}[H]
\centering
\resizebox{.8\textwidth}{!}{%
\begin{circuitikz}
\tikzstyle{every node}=[font=\Large]
\draw [ color={rgb,255:red,120; green,120; blue,120}, line width=1pt, short] (7.5,11.75) -- (17.5,11.75);
\draw [ color={rgb,255:red,120; green,120; blue,120}, line width=1pt, dashed] (7.5,11.75) -- (6.25,11.75);
\draw [ color={rgb,255:red,120; green,120; blue,120}, line width=1pt, dashed] (17.5,11.75) -- (18.75,11.75);
\draw [line width=1.5pt, short] (12.5,12.25) -- (12.5,11.25);
\draw [ fill={rgb,255:red,0; green,0; blue,0} , line width=0.3pt ] (8,11.75) circle (0.25cm);
\draw [ fill={rgb,255:red,0; green,0; blue,0} , line width=0.6pt ] (10,11.75) circle (0.25cm);
\draw [ fill={rgb,255:red,0; green,0; blue,0} , line width=0.6pt ] (11.5,11.75) circle (0.25cm);
\draw [ fill={rgb,255:red,0; green,0; blue,0} , line width=0.6pt ] (15.5,11.75) circle (0.25cm);
\draw [ fill={rgb,255:red,0; green,0; blue,0} , line width=0.6pt ] (17,11.75) circle (0.25cm);
\draw [line width=1.5pt, ->, >=Stealth] (8,11.75) -- (9,11.75);
\draw [line width=1.5pt, ->, >=Stealth] (10,11.75) -- (10.75,11.75);
\draw [line width=1.5pt, ->, >=Stealth] (17,11.75) -- (16.25,11.75);
\node [font=\Large] at (12.6,12.75) {$x^*$};
\node [font=\Large] at (8,11) {$x_{a-2}$};
\node [font=\Large] at (10,11) {$x_{a-1}$};
\node [font=\Large] at (11.5,12.5) {$x_{a}$};
\node [font=\Large] at (15.5,12.5) {$x_{b}$};
\node [font=\Large] at (17,11) {$x_{b+1}$};

\draw [decorate, decoration = {calligraphic brace,mirror,amplitude=7pt}, line width=1.5pt] (11.5,11.1) --  (15.5,11.1);
\node [font=\large] at (13.5,10.5) {$|x_a - x_b| < r^*$};

\end{circuitikz}
}%

\caption{Diagram for Step 2 of the proof of Proposition \ref{Proposition: Controllability from empty intial network}. In this step individuals are sequentially gathered towards the central individuals $x_a$ and $x_b$ whose opinions are the closest to $x^*$ above and below respectively.}
\label{fig:step 2 diagram}

\end{figure}

    
    Recall that opinions are ordered so that $x_1(0)\leq x_2(0)\leq\dots\leq x_N(0)$. To explain how individuals are gathered upwards towards $x_a$ we assume $x_1(0) < x_a(0)$. If this is not the case then there must be individuals with $x_i(0) \geq x_b(0)$, where a similar argument holds. We know $|x_1(0) - x_2(0)|<r^*$, so setting $u_{12} = u^+$ will cause $x_1$ to move towards $x_2$ as 
    \begin{equation*}
        \frac{dx_1}{dt} = \frac{1}{k_1} \sum_{j=1}^N w_{1j}\, \phi(x_j - x_1) \,(x_j - x_1) = \frac{w_{21}}{1 + w_{21}} \phi(x_2 - x_1) \,(x_2 - x_1) > 0 \,,
    \end{equation*}
    for $t > 0$. As no controls have yet been applied to $x_2$, $x_2(t) = x_2(0)$. Hence for any $\varepsilon > 0$ we will eventually have $|x_1(t) - x_2(t) | < \varepsilon$. Specifically we can choose $\varepsilon$ sufficiently small that $|x_1(t) - x_3(0) | < r^*$, as we know $|x_2(t) - x_3(t) | = |x_2(0) - x_3(0) | < r^*$. Once this has occurred we set $u_{23} = u^+$ and repeat (each time waiting until $|x_1(t) - x_j(0) | < r^*$). In this way, all individuals $i$ with $x_i(0) \leq x_a(0)$ can be sequentially gathered upwards towards $x_a(0)$ without breaking the $r^*$-chain. In a similar way, all individuals $j$ with $x_j > x_b(0)$ can be gathered downwards towards $x_b(0)$. 
    
    \textbf{Step 3:} Once this gathering process is complete (at a time denoted by $T$) we will have $D(T) < r^*$ and so $|x_i(t) - x_j(t)| < r^*$ for all $i,j$ and all $t \geq T$. This can be seen in the diagram in Figure \ref{fig:step 3 diagram} and also in the example in Figure \ref{fig: Controllability from empty network example} in which $T$ is approximately 10. Using the $N=2$ case, $x_a$ and $x_b$ can be controlled to consensus at $x^*$ and thus, by the chain of connections, all individuals will be brought to consensus at $x^*$.  

    
\begin{figure}[H]
\centering
\resizebox{.8\textwidth}{!}{%
\begin{circuitikz}
\tikzstyle{every node}=[font=\Large]
\draw [ color={rgb,255:red,120; green,120; blue,120}, line width=1pt, short] (7.5,11.75) -- (17.5,11.75);
\draw [ color={rgb,255:red,120; green,120; blue,120}, line width=1pt, dashed] (7.5,11.75) -- (6.25,11.75);
\draw [ color={rgb,255:red,120; green,120; blue,120}, line width=1pt, dashed] (17.5,11.75) -- (18.75,11.75);
\draw [line width=1.5pt, short] (12.5,12.25) -- (12.5,11.25);
\draw [ fill={rgb,255:red,0; green,0; blue,0} , line width=0.3pt ] (10.5,11.75) circle (0.25cm);
\draw [ fill={rgb,255:red,0; green,0; blue,0} , line width=0.6pt ] (11,11.75) circle (0.25cm);
\draw [ fill={rgb,255:red,0; green,0; blue,0} , line width=0.6pt ] (11.5,11.75) circle (0.25cm);
\draw [ fill={rgb,255:red,0; green,0; blue,0} , line width=0.6pt ] (15.5,11.75) circle (0.25cm);
\draw [ fill={rgb,255:red,0; green,0; blue,0} , line width=0.6pt ] (16,11.75) circle (0.25cm);
\draw [line width=1.5pt, ->, >=Stealth] (11.5,11.75) -- (12.3,11.75);
\draw [line width=1.5pt, ->, >=Stealth] (15.5,11.75) -- (14.7,11.75);
\node [font=\Large] at (12.6,12.75) {$x^*$};
% \node [font=\Large] at (10.5,12.5) {$x_{a-2}$};
% \node [font=\Large] at (11,11) {$x_{a-1}$};
\node [font=\Large] at (11.5,12.5) {$x_{a}$};
\node [font=\Large] at (15.5,12.5) {$x_{b}$};
% \node [font=\Large] at (16,11) {$x_{b+1}$};

\draw [decorate, decoration = {calligraphic brace,mirror,amplitude=7pt}, line width=1.5pt] (10,11.1) --  (16.5,11.1);
\node [font=\large] at (13.5,10.5) {$D(T) < r^*$};

\end{circuitikz}
}%

\caption{Diagram for Step 3 of the proof of Proposition \ref{Proposition: Controllability from empty intial network}. In this step the $N=2$ case is used to control $x_a$ and $x_b$ to $x^*$. All other individuals are connected to one of these two and so follow accordingly.}
\label{fig:step 3 diagram}

\end{figure}

    If there is an individual $i$ with $x_i(0) = x^*$ then instead gather all individuals towards $x_i$, while leaving individual $i$ at their initial position (this is the same as setting $a=b=i$). As the network is initially empty, individual $i$ will not move if no controls are applied to their weights. Here there is no need to apply the $N=2$ case. 
    
\end{proof}

Figure \ref{fig: Controllability from empty network example} shows a numerical example of the control described in Proposition \ref{Proposition: Controllability from empty intial network} applied with $s^+ = \ell^+ = 1$. We use a smoothed bounded confidence interaction function with $r^* = 0.6$ and $R = 0.3$, which would typically lead to opinion clustering if no control was applied. A population of size $N = 50$ is used with each $x_i(0)$ chosen uniformly at random in the interval $[-1,1]$. Individuals $a$ and $b$ are identified and the problem is first solved for these $N=2$ individuals. This is done by repeatedly testing values of $T_{12}$ and $T_{21}$ until a suitable pair is found (as the ODE is solved numerically with a fixed timestep over a finite time interval the exact optimal values cannot be used). The gathering approach described is then implemented until the opinion diameter is sufficiently small, then the $N=2$ case is used to bring the population to consensus. 
\begin{figure}[ht!]
    \centering
    \includegraphics[width = .9\linewidth, trim = {1cm 0cm 1cm 1cm},clip]{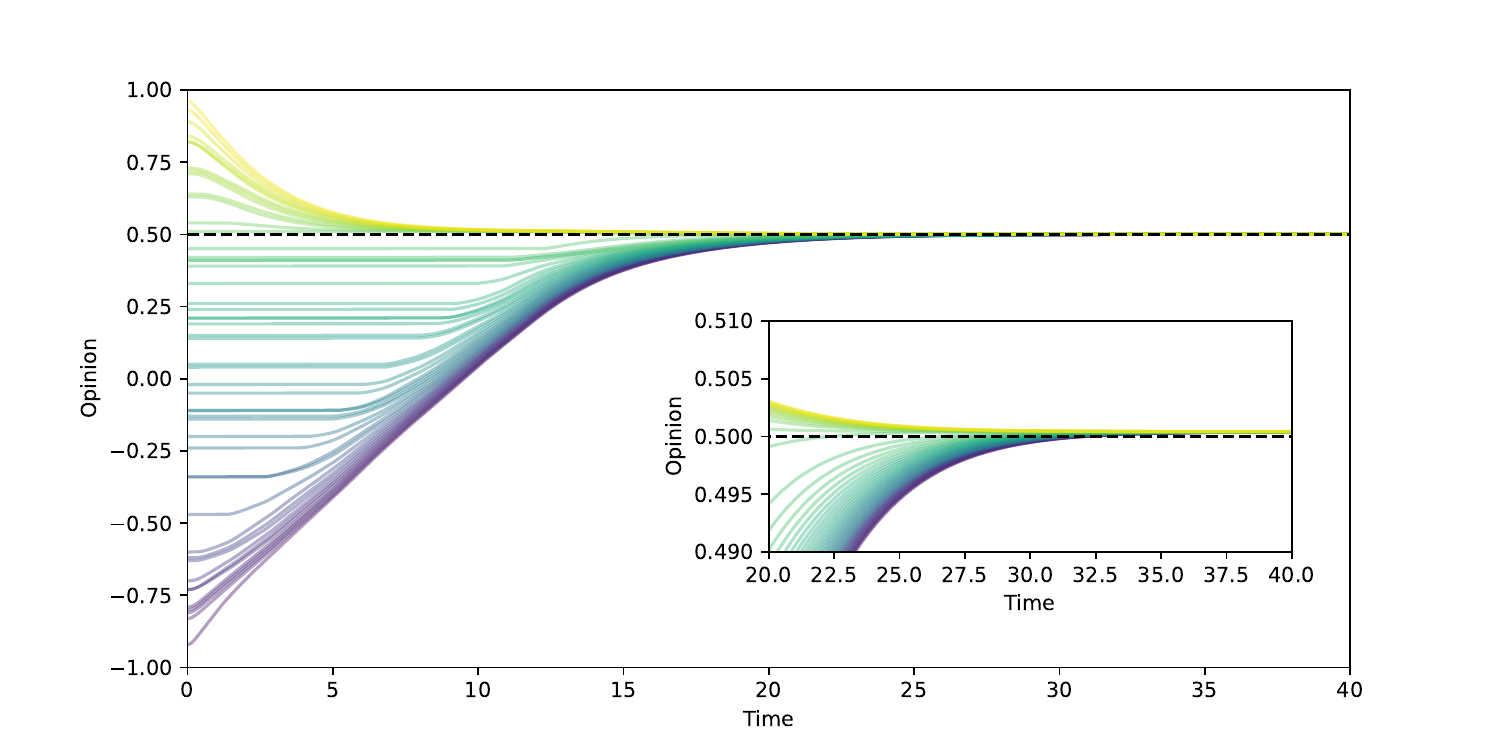}
    \caption{Example of the control method described in Proposition \ref{Proposition: Controllability from empty intial network}. Beginning with an empty network, edges are created to gather the population closer to individuals near the target opinion $x^* = 0.5$. The solution for the $N=2$ case is then used to ensure that those closest to $x^*$ reach consensus at exactly this point. Opinion trajectories are coloured according to individuals' initial opinions. The target opinion $x^*$ is indicated by a black dashed line. The inset plot shows the final approach to consensus at $x^*$.}
    \label{fig: Controllability from empty network example}
\end{figure}

When considering non-empty initial networks, Proposition \ref{Propostion: x^* must remain in the opinion interval} and Proposition \ref{Proposition: gaps of size r^* cannot be closed} show that it is vital that the control can act sufficiently quickly to prevent $x^*$ leaving the interval $[x_m(t),x^M(t)]$ or gaps of size $r^*$ appearing in the opinion profile. The following lemma provides a useful bound on the distance an individual can travel if their edge weights are reduced exponentially quickly after some time $t_0$.  

\begin{lemma} \label{Lemma: bounds on distance travelled}
    Let $\phi$ satisfy Assumption \ref{Assumptions on phi}. For a starting time $t_0\geq 0$, fix opinions $x(t_0)\in[-1,1]^N$, and network $w(t_0)\in[0,1]^{N \times N}$. Assume there exists a control value $u^-\in\mathbb{R}$ for which  
    \begin{align} \label{Eqn: u^- definition}
        \mathcal{S} = s(u^-) > 0 \,,\quad
        \ell(u^-) = 0 \,.
    \end{align}
    Then, for an individual $i$, setting $u_{ij}(t) = u^-$ for all $j\neq i$ and $t\geq t_0$ gives
    \begin{equation*}
        |x_i(t) - x_i(t_0)| < \frac{D(t_0)\, (k_i(t_0)-1)}{\mathcal{S}} \,.
    \end{equation*}
\end{lemma}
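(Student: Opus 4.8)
The plan is to solve the weight ODEs in closed form, feed the resulting in-degree into the opinion ODE, and then integrate a scalar differential inequality.

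First I would observe that with $u_{ij}(t)=u^-$ for all $j\neq i$ and $t\geq t_0$, the control function \eqref{Eqn: Memory weight controls} together with \eqref{Eqn: u^- definition} reduces to $\frac{dw_{ij}}{dt} = \mathcal{S}\big(\ell(u^-) - w_{ij}\big) = -\mathcal{S}\,w_{ij}$, so that $w_{ij}(t) = w_{ij}(t_0)\,e^{-\mathcal{S}(t-t_0)}$ for every $j\neq i$, while $w_{ii}\equiv 1$. Summing over $j$ then gives the explicit in-degree
\begin{equation*}
    k_i(t) = 1 + \big(k_i(t_0)-1\big)\,e^{-\mathcal{S}(t-t_0)}\,, \qquad t\geq t_0\,.
\end{equation*}

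Next, in \eqref{eqn: opinion ODE} the $j=i$ term vanishes, so using $\phi\leq 1$ (Assumption \ref{Assumptions on phi}\ref{Assumption: phi positive}), the bound $|x_j-x_i|\leq D(t)$, and the fact that $D$ is non-increasing (Proposition 2.1 in \cite{nugent2023evolving}, already invoked above, since $x_m$ is non-decreasing and $x^M$ non-increasing), one gets
\begin{equation*}
    \bigg|\frac{dx_i}{dt}\bigg| \;\leq\; \frac{1}{k_i(t)} \sum_{j\neq i} w_{ij}(t)\, D(t_0) \;=\; \frac{D(t_0)\,(k_i(t)-1)}{k_i(t)}\,.
\end{equation*}
Substituting the explicit form of $k_i(t)$ and discarding the denominator using $k_i(t)\geq 1$ yields $\big|\frac{dx_i}{dt}\big| \leq D(t_0)\,(k_i(t_0)-1)\,e^{-\mathcal{S}(t-t_0)}$. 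Integrating from $t_0$ to $t$ then gives
\begin{equation*}
    |x_i(t) - x_i(t_0)| \;\leq\; D(t_0)\,(k_i(t_0)-1)\,\frac{1 - e^{-\mathcal{S}(t-t_0)}}{\mathcal{S}} \;<\; \frac{D(t_0)\,(k_i(t_0)-1)}{\mathcal{S}}\,,
\end{equation*}
where the last inequality uses $1 - e^{-\mathcal{S}(t-t_0)} < 1$ (and $D(t_0)(k_i(t_0)-1)>0$; the degenerate case $k_i(t_0)=1$, in which all off-diagonal weights vanish and $x_i$ is constant, may be excluded or noted separately).

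The computation is a one-line integration once the pieces are assembled, so I do not anticipate a genuine obstacle; the only points requiring a little care are keeping the exponentially decaying weights inside the $1/k_i(t)$ normalisation rather than bounding $1/k_i(t)$ crudely, and using the monotonicity of $D$ to replace the time-dependent diameter $D(t)$ by the constant $D(t_0)$ before integrating.
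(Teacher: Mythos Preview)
Your proof is correct and follows essentially the same approach as the paper: solve the weight ODE explicitly to get $w_{ij}(t)=w_{ij}(t_0)e^{-\mathcal{S}(t-t_0)}$, bound the integrand in the opinion ODE using $\phi\leq 1$, $|x_j-x_i|\leq D(t_0)$ and $1/k_i\leq 1$, then integrate the exponential. The paper's version is slightly terser (it does not bother writing out $k_i(t)$ explicitly before discarding the $1/k_i$ factor), but the argument is the same.
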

\begin{proof}
    For simplicity take $t_0 = 0$. Set $u_{ij}(t) = u^-$ for all $j\neq i$ and $t\geq0$. Then the solution to \eqref{eqn: weight ODE} is 
    \begin{equation}
        w_{ij}(t) = w_{ij}(0)\, e^{-\mathcal{S}t} \,,
    \end{equation}
    and so for any $t\geq 0$
    \begin{align*}
        \big| x_i(t) - x_i(0) \big| 
        &\leq \sum_{j\neq i} \int_{0}^t  \frac{1}{k_i(s)} \,  w_{ij}(s) \, \phi\big(|x_j(s) - x_i(s)|\big) \,\big|x_j(s) - x_i(s)\big| \,ds\, \\
        &\leq \sum_{j\neq i} D(0) \int_{0}^t   w_{ij}(s) \,ds\, \\
        &\leq D(0)\,\frac{1}{\mathcal{S}} \bigg(1 - e^{-\mathcal{S}t}\bigg) \sum_{j\neq i} w_{ij}(0) \\
        &\leq \frac{D(0)\, (k_i(0)-1)}{\mathcal{S}} \,.
    \end{align*} 
\end{proof}

Lemma \ref{Lemma: bounds on distance travelled} shows that if $\mathcal{S}$ is sufficiently large, an individual's opinion can be trapped within a small interval around its current position. This can be used to prevent individuals breaking an $r$-chain or bound individuals above or below $x^*$, and so will prove crucial in showing controllability. 

\begin{theorem} \label{Theorem: Controllability from non-empty intial network}
    Let $\phi$ be a smoothed bounded confidence function with radius $R$. Fix initial conditions $x(0)\in[-1,1]^N$, $w(0)\in[0,1]^{N \times N}$ and a consensus target $x^*$, satisfying Assumption \ref{Assumptions on x(0) and x^*}. Also assume that $x^*\neq x_m(0),\,x^M(0)$. Assume $u^+$ and $u^-$ exist as defined in \eqref{Eqn: u^+ definition} and \eqref{Eqn: u^- definition} respectively. Then for $\mathcal{S}$ sufficiently large, there exists a control $u\in L^\infty(\mathds{R}^+; \{0,u^+,u^-\}^{N\times N})$ such that the solution of \eqref{eqn: ODE system} reaches consensus at $x^*$. Specifically, define $d_1$ and $d_2$ by 
    \begin{align} \label{Eqn: d_1 and d_2 vals}
        d_1 = \max_{i\in\{1,\dots,N-1\}} \frac{k_i(0) + k_{i+1}(0) - 2}{r^* - |x_i(0) - x_{i+1}(0)|}\,,\quad
        d_2 = \max_{\substack{i\in\Lambda \\ x_i(0)\neq x^* }}  \frac{k_i(0)-1}{|x_i(0) - x^*|} \,,
    \end{align}
    in which case we take 
    \begin{equation} \label{Eqn: statement S bound}
        \mathcal{S} > D(0) \max\bigg\{ d_1, d_2, \frac{4 (N-1)}{R} \bigg\} \,.
    \end{equation}
\end{theorem}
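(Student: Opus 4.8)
The plan is to construct the control explicitly, following the three-stage blueprint of Proposition~\ref{Proposition: Controllability from empty intial network} — reduce to two key individuals, gather the rest of the population onto them, then solve the two-individual problem — with one new ingredient needed to cope with the non-empty initial network: a \emph{freezing} mechanism. By Lemma~\ref{Lemma: bounds on distance travelled}, switching $u_{ij}=u^-$ on every edge of an individual $i$ from a time $t_0$ pins its opinion inside an interval of radius $D(t_0)(k_i(t_0)-1)/\mathcal{S}\le D(0)(N-1)/\mathcal{S}$ about $x_i(t_0)$, for all $t\ge t_0$. The bound~\eqref{Eqn: statement S bound} is exactly what makes this radius small enough in three distinct senses, one per invariant the control must maintain.

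\emph{Invariants.} First I would verify that, provided no edge is ever grown beyond what the gathering step uses, the following hold for all $t\ge0$: (a) every consecutive gap satisfies $|x_{i+1}(t)-x_i(t)|<r^*$, so the population remains an $r^*$-chain, $D(t)\le D(0)$, and — since $\phi$ is a smoothed bounded confidence function — $\phi$ stays bounded away from $0$ along the chain; and (b) no individual ever crosses $x^*$, so $x^*\in[x_m(t),x^M(t)]$ and, by Proposition~\ref{Propostion: x^* must remain in the opinion interval}, consensus at $x^*$ remains attainable. Invariant (a) is the content of the $d_1$ term of~\eqref{Eqn: statement S bound}: two frozen neighbours $i,i+1$ have combined displacement at most $D(0)(k_i(0)+k_{i+1}(0)-2)/\mathcal{S}<r^*-|x_i(0)-x_{i+1}(0)|$. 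Invariant (b) is the $d_2$ term the same way, via $D(0)(k_i(0)-1)/\mathcal{S}<|x_i(0)-x^*|$ for each $i$ with $x_i(0)\ne x^*$; and $x^*\ne x_m(0),x^M(0)$ ensures at least one individual lies strictly on each side of $x^*$, which is precisely what keeps $x^*$ strictly interior under freezing — and is why that hypothesis is needed.

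\emph{Construction.} With the invariants available the stages mirror Proposition~\ref{Proposition: Controllability from empty intial network}. If some $x_i(0)=x^*$, set $a=b=i$, freeze $i$, and gather the rest toward it (no two-individual step needed); otherwise let $a$ be the largest index with $x_a(0)<x^*$ and $b=a+1$, so $|x_a(0)-x_b(0)|<r^*$. From $t=0$ I would put $u^-$ on every edge, then re-create edges one at a time with $u^+$ exactly as in Step~2 of Proposition~\ref{Proposition: Controllability from empty intial network}: pull the outermost not-yet-gathered individual toward its inward neighbour along a fresh directed edge (kept on, so the gathered prefix moves accordion-fashion and the $r^*$-chain among it is automatic), each time waiting until the gap is small enough for the next chain constraint, and stopping the left group just short of $x_a$ and the right group just short of $x_b$ so the controlled motion never tests invariant (b). The $4(N-1)/R$ term of~\eqref{Eqn: statement S bound} makes every frozen individual's residual drift less than $R/4$, which lets the gathering finish with the whole population inside an interval of width $<R$, on which $\phi\equiv1$. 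Finally, with $|x_a-x_b|<R$ and everyone chained to $x_a$ or $x_b$, apply the two-individual construction of Step~1: by Lemma~\ref{Lemma: F continuity} the limiting value of $x_a$ depends continuously on the switch-off times $(T_{ab},T_{ba})$ and sweeps the interval between the current positions of $x_a$ and $x_b$, which by invariant (b) still brackets $x^*$, so the intermediate value theorem picks times driving $x_a\to x^*$; the rest of the population, chained on, follows.

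\emph{Main obstacle.} The delicate part is the bookkeeping in the gathering stage: for each individual one must account simultaneously for a large \emph{controlled} displacement along its single active chain edge (which we steer and cut off at will) and a small \emph{uncontrolled} drift from its decaying edges (bounded by Lemma~\ref{Lemma: bounds on distance travelled}), and check that growing a chain edge to at most $\ell^+\le1$ inflates $k_i$ only boundedly, so that the crude bound $k_i\le N$ keeps all three parts of~\eqref{Eqn: statement S bound} in force throughout. One must also confirm that each chase terminates with the gap below any prescribed tolerance despite the target itself drifting, and that applying the two-individual argument on a network that is only \emph{nearly} empty (rather than exactly empty as in Proposition~\ref{Proposition: Controllability from empty intial network}) leaves its conclusion intact — which follows from the same continuity lemma and $\phi\equiv1$ on the final window.
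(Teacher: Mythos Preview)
Your outline matches the paper's approach closely through the invariants and the gathering stage --- the $d_1$ and $d_2$ terms are used exactly as you describe, and the sequential gathering toward $a$ and $b$ proceeds essentially as in Proposition~\ref{Proposition: Controllability from empty intial network}. However, there is a genuine gap in the final two-individual stage, and two smaller issues elsewhere.

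\textbf{The final-stage gap.} You propose to reuse the switch-off scheme and Lemma~\ref{Lemma: F continuity} from the empty-network proposition, flagging the residual edges as a detail that ``follows from the same continuity lemma''. The paper explicitly cannot do this. With a non-empty initial network the edges $w_{aj}$ for $j\neq b$ are never exactly zero, only exponentially small; individuals $a$ and $b$ therefore remain weakly coupled to the rest of the population for all time, and the limiting consensus point depends on the full $N$-dimensional dynamics. Lemma~\ref{Lemma: F continuity} is proven only for the $N=2$ ODE and does not apply. The paper instead introduces a \emph{different} control scheme on the pair --- keep $u_{ab},u_{ba}$ at $u^-$ until switch-\emph{on} times $T_{ab},T_{ba}$, then set them to $u^+$ forever --- defines a new function $G(T_{ab},T_{ba})=\lim_{t\to\infty}x_a(t)$ for the \emph{full} system, and proves a separate continuity result (Lemma~\ref{Lemma: G continuity}) for $G$, exploiting the linearity obtained from $\phi\equiv1$ on the final window. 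The IVT endpoints are also different: rather than $F(0,T_{21})=x_1(0)$ (which relied on $w_{12}(0)=0$), one argues that taking $T_{ab}\gg T_{ba}$ lets $x_b$ be pulled below $x^*$ before $w_{ab}$ starts growing. Under your switch-off scheme, ``immediate switch-off'' no longer freezes $x_a$, so the needed endpoint for the IVT is not available.

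\textbf{Smaller points.} First, the gathering step by itself only achieves $D<r^*$, not $D<R$: the paper inserts an extra stage in which $a$ is temporarily connected to $b$ (or both to $i'$ in Case~2) to shrink the gap below $R$, and it is \emph{this} step --- not the gathering --- that the $4(N-1)/R$ term protects, by ensuring that re-freezing afterwards still keeps $x_a$ and $x_b$ on their respective sides of $x^*$. Second, your shortcut for $x_{i'}(0)=x^*$ (set $a=b=i'$ and skip the two-individual step) does not work: freezing $i'$ does not hold $x_{i'}$ exactly at $x^*$, so the final IVT argument is still required. The paper instead selects $a$ and $b$ from among those $j$ with $x_j(0)\neq x^*$ and uses $i'$ only as a bridge when $|x_a(0)-x_b(0)|$ may exceed $r^*$.
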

\begin{proof}
    The approach of this proof is similar to Proposition \ref{Proposition: Controllability from empty intial network}. Again we proceed in several steps. In Step 1 we identify individuals $a$ and $b$ whose initial opinions are the closest below/above $x^*$ and gather all other individuals towards $x_a$ or $x_b$. This gives an opinion radius below $r^*$, which is further reduced below $R$ by temporarily connecting individuals $a$ and $b$. This step requires two cases, as the setup is slightly different if an individual has an initial opinion of exactly $x^*$. In Step 2, control to consensus at $x^*$ is managed by specifying controls for the central individuals $a$ and $b$. A key difference from Proposition \ref{Proposition: Controllability from empty intial network} is that any non-zero initial edges cannot be removed in finite time, so Lemma \ref{Lemma: bounds on distance travelled} must be applied to bound this potential error. The applications of Lemma \ref{Lemma: bounds on distance travelled} provide the lower bound \eqref{Eqn: statement S bound} on $\mathcal{S}$.

    \textbf{Step 1:} Firstly note that, as $x(0)$ is an $r^*$-chain, $d_1$ is well-defined. As $\mathcal{S} > \Big( D(0)\,\max\big\{ d_1, d_2 \big\} \Big)$ by Lemma \ref{Lemma: bounds on distance travelled}, setting $u_{ij} = u^-$ for all $t\geq0$ gives
    \begin{equation} \label{Eqn: pairwise distance ineq}
        |x_i(t) - x_{i+1}(t)| < r^*
    \end{equation}
    for all $i\in\Lambda$ and 
    \begin{equation} \label{Eqn: non-crossing x^* ineq}
        |x_i(t) - x^*| > 0
    \end{equation}
    for all $i\in\Lambda$ with $x_i(0)\neq x^*$. Note that if $r^*=2$ then \eqref{Eqn: pairwise distance ineq} is always satisfied and the condition on $d_1$ can be removed from \eqref{Eqn: statement S bound}.

    We now describe the process by which individuals are gathered towards a pair of individuals near $x^*$.

    \textbf{Case 1:} $x_i(0)\neq x^*$ for all $i\in\Lambda$.
    
    Here we can identify individuals $a$ and $b$ as in Proposition \ref{Proposition: Controllability from empty intial network}. As before we use $u^+$ to sequentially gather more extreme individuals towards $x_a$ and $x_b$. Here $x_a(t)$ and $x_b(t)$ will not be fixed but, by setting all other controls to $u^-$, \eqref{Eqn: pairwise distance ineq} and \eqref{Eqn: non-crossing x^* ineq} ensure they do not cross $x^*$ or move in a way that could break the $r^*$-chain. The gathering process described in Proposition \ref{Proposition: Controllability from empty intial network}, combined with the exponentially fast removal of all other edges, also ensures that no other pair of individuals breaks the $r^*$-chain. We say that this gathering process is complete when all individuals are within a distance $\varepsilon > 0$ of their target (either $x_a$ or $x_b$). As $x_i(0)\neq x^*$ for all $i\in\Lambda$, we will have $b=a+1$ and so $|x_a(t)-x_b(t)|<r^*$ for all $t\geq0$. Hence, by choosing $\varepsilon$ sufficiently small, there exists a time $T$ at which $x_a(T) < x^*$, $x_b(T) > x^*$ and $D(T)<r^*$. 
    
    From here we can simplify the problem by reducing the diameter from $r^*$ to below $R$, meaning that the interaction function $\phi$ becomes equal to $1$. If $|x_a(T) - x^*| < R/2$ and $|x_b(T) - x^*| < R/2$ then $|x_a(T) - x_b(T)| < R$ and we are done. If only $|x_a(T) - x^*| < R/2$, then create an edge from $b$ to $a$, bringing $x_a$ and $x_b$ within a distance $R$. This can be achieved while maintaining $|x_b - x^*| > R/2$. If only $|x_b(T) - x^*| < R/2$, then instead create an edge from $a$ to $b$ to bring $x_a$ and $x_b$ within a distance $R$. Again, this can be achieved while maintaining $|x_a - x^*| > R/2$. Once the desired radius has been achieved (at a time denoted $t_1$), Lemma \ref{Lemma: bounds on distance travelled} gives that setting controls back to $u^-$ will prevent both $x_a$ and $x_b$ crossing $x^*$ if 
    \begin{equation} \label{Eqn: Additional req on S for a,b}
        \frac{D(t_1)(k_i(t_1) - 1)}{\mathcal{S}} < \frac{R}{4} \,.
    \end{equation}
    As $\mathcal{S} > D(0) \frac{4 (N-1)}{R} $, \eqref{Eqn: Additional req on S for a,b} is satisfied and so there exists a time $T_R \geq T$ at which $x_a(T_R) < x^*$, $x_b(T_R) > x^*$ and $D(T_R)<R$. 

    \textbf{Case 2:} There exists $i'\in\Lambda$ such that $x_{i'}(0) = x^*$.
    
    In this case individuals $a$ and $b$ should be chosen from those $j\in\Lambda$ with $x_j(0)\neq x^*$. This raises the possibility that $|x_a(0) - x_b(0)| > r^*$. As such, once the gathering towards $x_a$ and $x_b$ is complete, both individuals may need to be temporarily connected to $x_{i'}$ to bring the opinion diameter $D$ below $r^*$ and then below $R$, after which point individual $i'$ could simply be guided towards either $x_a$ or $x_b$. This can be done in much the same way as in the previous case, by temporarily connecting either $a$, or $b$, or both, to $i'$ until the desired radius is reached (depending on the distances of $x_a$ and $x_b$ from $x^*$). Indeed, \eqref{Eqn: statement S bound} already ensures that $\mathcal{S}$ is sufficiently large for this to be done without breaking the $r^*$-chain and without $x_a$ and $x_b$ crossing $x^*$. The situation is essentially the same if there are multiple individuals whose initial condition is exactly $x^*$. 

    \textbf{Step 2:} As in Proposition \ref{Proposition: Controllability from empty intial network}, the problem is now reduced to ensuring convergence of $x_a(t)$ and $x_b(t)$ to $x^*$, as the chain of connections is such that all other individuals' opinions tend towards one of these. We begin from a time $T_R$ at which $x_a(T_R) < x^*$, $x_b(T_R) > x^*$ and $D(T_R)<R$. Hence from this time onwards $\phi(|x_j - x_i|) = 1$ for all $i,j\in\Lambda$. Unlike Proposition \ref{Proposition: Controllability from empty intial network}, we cannot now simply reduce to the case $N=2$, as individuals $a$ and $b$ may remain connected to others in the population, albeit with an exponentially decaying weight. 
    
    Note that, up until time $T_R$, $u_{ij} = u^-$ for $i=a,b$ and $j\in\Lambda\setminus\{i\}$.
    Hence \eqref{Eqn: non-crossing x^* ineq} tells us that leaving $u_{ij}$ at $u^-$ will prevent consensus.
    We therefore define times $T_{ab},T_{ba} \in (T_R,\infty)$ such that, for $ij = ab,\,ba$
    \begin{equation} \label{Eqn: Controllability from non-empty network control scheme}
        u_{ij}(t) = 
        \begin{cases}
            u^- & \text{if } 0 \leq t \leq T_{ij} \,,\\
            u^+ & \text{if } t > T_{ij} \,.
        \end{cases}
    \end{equation}
    That is, at times $T_{ij}$ we switch these edges from exponentially decreasing towards 0 to exponentially increasing towards $\ell^+$. Also define
    \begin{equation} \label{Eqn: G definition}
        G(T_{ab},T_{ba}) = \lim_{t\rightarrow\infty} x_a(t) \,,
    \end{equation}
    for the controls given in \eqref{Eqn: Controllability from non-empty network control scheme}. The function $G$ gives the limiting opinion of individual $a$. As $T_{ba} < \infty$ and $w_{bi}$ is exponentially decreasing for all $i\neq a,b$, $G$ will also give the limiting opinion of individual $b$, and thus the location of consensus for the whole population. 
    
    Pick some initial guess for the pair $(\Tilde{T}_{ab}, \Tilde{T}_{ba})$ with both values in the time interval $(T_R,\infty)$. If $G(\Tilde{T}_{ab}, \Tilde{T}_{ba}) > x^*$ then fix $T_{ba} = \Tilde{T}_{ba}$. By Lemma \ref{Lemma: G continuity} in Appendix \ref{Appendix: Proofs}, for a fixed $T_{ba} = \Tilde{T}_{ba} > T_R$, the function $G(T_{ab},T_{ba})$ is continuous in $T_{ab}$. If $T_{ab}$ is made extremely large ($T_{ab}\gg \Tilde{T}_{ba}$), then $x_b$ decreases towards $x_a < x^*$ and, if $T_{ab}$ is sufficiently large, $x_b$ will be sufficiently below $x^*$ before time $T_{ab}$ that $G(T_{ab},\Tilde{T}_{ba}) < x^*$. Hence, by the intermediate value theorem there is a value of $T_{ab}$ for which $G(T_{ab}, \Tilde{T}_{ba}) = x^*$. Due to the persistent (exponentially small) edge weights between individuals $a$ and $b$ and the rest of the population, this ideal value of $T_{ab}$ will need to account for the dynamics of the whole system at $t\rightarrow\infty$. As such its exact value would be impractical to compute in most cases. However, using the continuity of $G$ we have shown that such a value, and therefore such a control, exists. 
    
    If $G(\Tilde{T}_{ab}, \Tilde{T}_{ba}) < x^*$ then use an analogous argument for a fixed $T_{ab} = \Tilde{T}_{ab}$. 
\end{proof}

\begin{remark}
    In the case of an empty starting network, having $x^* = x_{i'}(0)$ for some $i'\in\Lambda$ simplifies the problem significantly. However, in the case of a non-empty starting network, such an $i'$ may instead create additional difficulty as it may form a crucial link in the $r^*$-chain but may repeatedly move above and below $x^*$, making it in some sense unreliable as a point to control towards. 
\end{remark}

Figure \ref{fig: Controllability from non-empty network example} shows a numerical example of the control described in Theorem \ref{Theorem: Controllability from non-empty intial network} using the same setup as described for Figure \ref{fig: Controllability from empty network example}. The initial network $w(0)$ is an Erdos-Renyi random graph with edge probability $p=5/N$ \cite{erdHos1960evolution} with edge weights then given uniformly in the interval $[0,1]$. As previously, suitable values of $\Tilde{T}_{ab}, \Tilde{T}_{ba}$ were identified by iteratively solving the dynamics (this required solving the complete dynamics rather than the $N=2$ case only). Note that we used $s(u_{ij}) = u_{ij}^2$ and so $\mathcal{S}=1$, which is far below the value indicated in \eqref{Eqn: statement S bound} (of approximately 1200), but in this case control is clearly still possible.
\begin{figure}[ht!]
    \centering
    \includegraphics[width = .9\linewidth, trim = {1cm 0cm 1cm 1cm},clip]{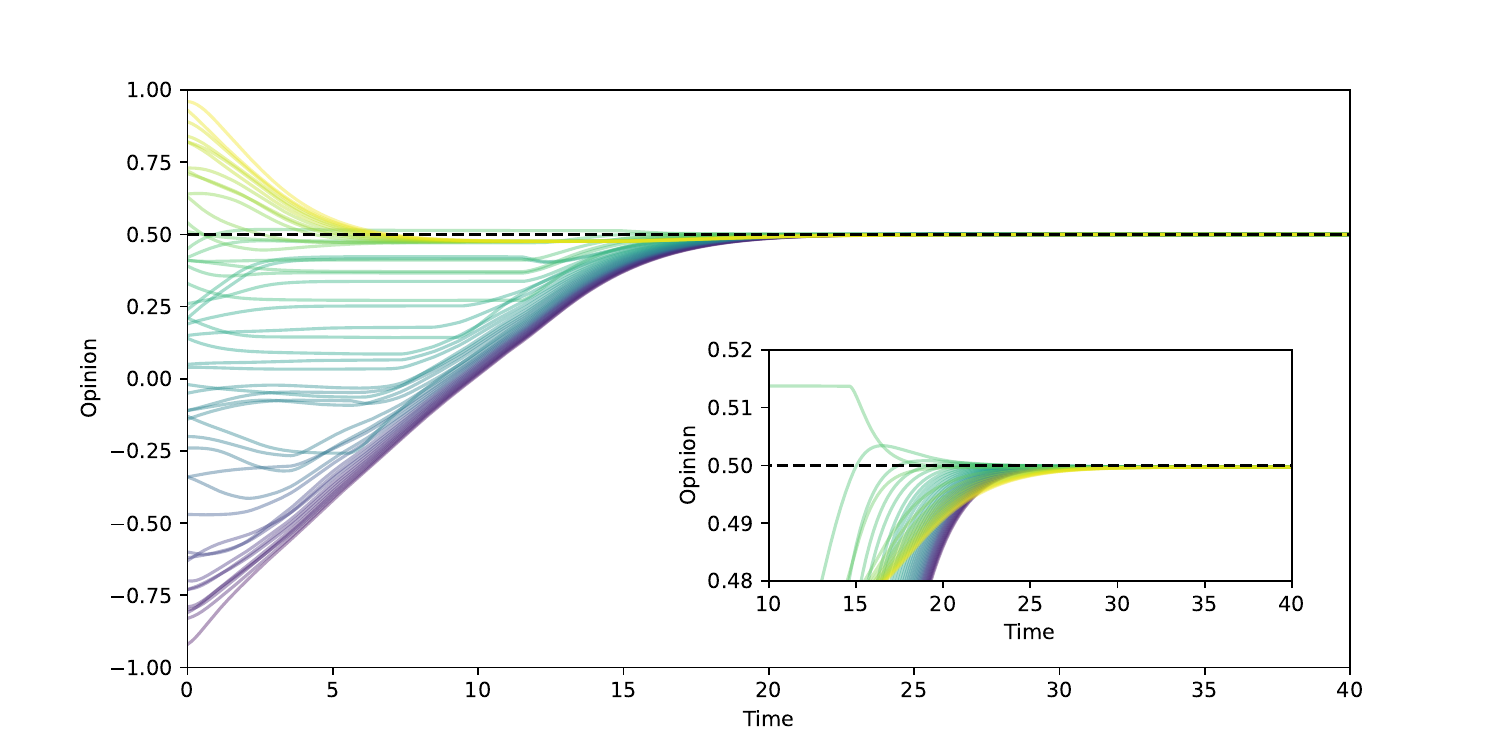}
    \caption{Example of the control method described in Theorem \ref{Theorem: Controllability from non-empty intial network}. Beginning with a non-empty network, edges are created to gather the population closer to individuals near the target opinion $x^* = 0.5$ and removed to prevent crossing this value or splitting the population into multiple clusters. Those individuals closest to $x^*$ are then controlled to consensus precisely at this point. Opinion trajectories are coloured according to individuals' initial opinions. The target opinion $x^*$ is indicated by a black dashed line. The inset plot shows the final approach to consensus at $x^*$.}
    \label{fig: Controllability from non-empty network example}
\end{figure}

One might reasonably expect that consensus at a desired point could be achieved without such drastic controls. The approach described in the proof effectively erases the initial network to ensure that the $r^*$-chain is not broken and the target point remains inside the range of opinions, but it may well be the case that this is unnecessary and indeed, allowing more of the initial network to remain might encourage faster convergence to consensus. Therefore, having established that control is possible, we now aim to improve our control strategy. 

\subsection{Instantaneous control} \label{Section: Instantaneous control}

We next consider an explicit candidate control strategy, inspired by the approach in \cite{piccoli2021control}. In \cite{piccoli2021control}, a model is analysed where each individual has an evolving, non-zero mass that determines their influence, with the total mass preserved across the population. Giving an individual mass $m_j$ is equivalent to setting $w_{ij} = m_j$ for all $i\in\Lambda$ (although the weight dynamics \eqref{eqn: weight ODE} considered in this paper would not preserve the total mass). In the setup in \cite{piccoli2021control} the population always reaches consensus, so it is sufficient to control the location of this consensus. The more general network setting considered in this paper poses additional challenges, but it is possible that a similar approach may provide a viable control strategy. 

Analogously to the mass-weighted mean opinion considered in \cite{piccoli2021control}, we introduce here the degree-weighted mean opinion, given by 
\begin{equation}
    \Bar{x}(t) = \frac{\sum_{i=1}^N k_i^{\text{out}}(t) \, x_i(t)}{\sum_{i=1}^N k_i^{\text{out}}(t)} \,.
\end{equation}
Note that this definition makes use of the out-degree, given by 
\begin{equation} \label{Eqn: out-degree}
    k_i^{\text{out}}(t) = \sum_{j=1}^N w_{ji}(t) \,.
\end{equation}
While the in-degree $k_i$ describes the extent to which individual $i$ is connected to the rest of the population, $k_i^{\text{out}}$ describes the extent to which the population is connected to individual $i$ and thus more closely reflects the idea of individual $i$'s influence and thus their `mass'. As each $k_i^{\text{out}}(t)\geq 1$ for all $t\geq0$, $\Bar{x}(t)$ is always well-defined. 

\begin{proposition} \label{Proposition: Sufficient to control mean}
    If the population reaches consensus, then $\lim_{t\rightarrow\infty} x_i(t) = \lim_{t\rightarrow\infty} \Bar{x}(t)$ for all $i=1,\dots,N$. 
\end{proposition}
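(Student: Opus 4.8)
The plan is a straightforward squeeze argument, exploiting the fact that $\Bar{x}(t)$ is a convex combination of the individual opinions. First I would unpack what consensus gives us: by the remark following Definition \ref{Definition: consensus}, consensus is equivalent to $D(t)\to 0$. Combining this with the monotonicity already recorded in the excerpt (and cited from \cite{nugent2023evolving}) — namely that $x_m(t)$ is increasing and $x^M(t)$ is decreasing — both $x_m(t)$ and $x^M(t)$ are monotone and bounded in $[-1,1]$, hence convergent; since $D(t)=x^M(t)-x_m(t)\to 0$ their limits coincide at a common value $x_\infty$. Because $x_i(t)\in[x_m(t),x^M(t)]$ for every $i$ and $t$, it follows that $\lim_{t\to\infty}x_i(t)=x_\infty$ for all $i$.

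Next I would observe that $\Bar{x}(t)$ lies in the same shrinking interval. Writing $\lambda_i(t) = k_i^{\text{out}}(t)\big/\sum_{j=1}^N k_j^{\text{out}}(t)$, we have $\lambda_i(t)\geq 0$ (indeed each $k_i^{\text{out}}(t)\geq 1$ as noted just before the proposition, so the denominator is strictly positive and $\Bar{x}$ is well-defined) and $\sum_{i=1}^N \lambda_i(t)=1$. Hence $\Bar{x}(t)=\sum_{i=1}^N \lambda_i(t)\,x_i(t)$ is a convex combination of the $x_i(t)$, so
\begin{equation*}
    x_m(t) \;\leq\; \Bar{x}(t) \;\leq\; x^M(t) \qquad \text{for all } t\geq 0 .
\end{equation*}

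Finally, letting $t\to\infty$ and using that both $x_m(t)$ and $x^M(t)$ converge to $x_\infty$, the squeeze theorem gives $\lim_{t\to\infty}\Bar{x}(t)=x_\infty=\lim_{t\to\infty}x_i(t)$ for every $i$, which is the claim. There is essentially no hard step here: the only point needing a little care is the first one — justifying that all individual opinions share a single limit — and that follows immediately from the monotonicity of $x_m$ and $x^M$ quoted from \cite{nugent2023evolving} together with $D(t)\to 0$. The boundedness of the weights is not even needed beyond the already-noted positivity $k_i^{\text{out}}\geq 1$ that makes $\Bar x$ well-defined.
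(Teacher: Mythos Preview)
Your proof is correct and in fact takes a slightly different, cleaner route than the paper's. The paper simply pushes the limit through the quotient defining $\Bar{x}$, writing
\[
\lim_{t\to\infty}\Bar{x}(t)=\frac{\sum_i \lim_{t\to\infty}k_i^{\text{out}}(t)\cdot x^*}{\sum_i \lim_{t\to\infty}k_i^{\text{out}}(t)}=x^*,
\]
which tacitly assumes each $k_i^{\text{out}}(t)$ converges as $t\to\infty$; that is not guaranteed, since the controls (and hence the weights) could oscillate indefinitely. Your squeeze argument sidesteps this entirely: because $\Bar{x}(t)$ is a convex combination of the $x_i(t)$ at every instant, it lies in $[x_m(t),x^M(t)]$ regardless of whether the weights settle down, and the conclusion follows from $D(t)\to 0$ alone. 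So your approach is both more elementary and more robust; the paper's is a touch shorter but relies on an unstated (and in general unjustified) convergence of the out-degrees.
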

\begin{proof}
    Let $x^*$ be the value at which the population reaches consensus, that is for all $i\in\Lambda$, $\lim_{t\rightarrow\infty} x_i(t) = x^*$. 
    Also note that $k_i^{\text{out}}(t)\geq1$, so for all $i\in\Lambda$, $\lim_{t\rightarrow\infty} k_i^{\text{out}}(t) \geq 1 > 0$. 
    Hence 
    \begin{align*}
        \lim_{t\rightarrow\infty} \Bar{x}(t) = \frac{\sum_{i=1}^N \lim\limits_{t\rightarrow\infty} k_i^{\text{out}}(t) \, \lim\limits_{t\rightarrow\infty} x_i(t)}{\sum_{i=1}^N \lim\limits_{t\rightarrow\infty} k_i^{\text{out}}(t)} = \frac{\sum_{i=1}^N \lim\limits_{t\rightarrow\infty} k_i^{\text{out}}(t) \, x^*}{\sum_{i=1}^N \lim\limits_{t\rightarrow\infty} k_i^{\text{out}}(t)} = x^* \,.
    \end{align*}
\end{proof}
Hence controlling the value of the cost function 
\begin{equation} \label{Eqn: Degree-weighted mean}
    V(t) = \frac{1}{2} \big(\Bar{x}(t) - x^* \big)^2 \,
\end{equation}
to $0$ is equivalent to controlling the population to consensus at $x^*$. Differentiating \eqref{Eqn: Degree-weighted mean} gives 
\begin{equation}
    \frac{dV}{dt} = \frac{\big(\Bar{x}(t) - x^* \big)}{\sum_{i=1}^N k_i^{\text{out}}} \, \Bigg( \sum_{i=1}^N \sum_{j=1}^N w_{ij}\,\phi\big(x_j - x_i \big)\,(x_j - x_i) + (x_i - \Bar{x}) \, f(u_{ji},w_{ji}) \Bigg) \,.
\end{equation}
The most negative value of $dV/dt$ is achieved by setting 
\begin{equation} \label{Eqn: instantaneous minimusation of V control strategy}
    u_{ij} = 
    \begin{cases}
        \argmax\limits_{u\in[-M,M]} f(u,w_{ij}) & \text{if } (\Bar{x} - x^*)(\Bar{x} - x_j) > 0 \\
        \argmin\limits_{u\in[-M,M]} f(u,w_{ij}) & \text{if } (\Bar{x} - x^*)(\Bar{x} - x_j) < 0 \,.
    \end{cases}
\end{equation}
That is, we set $u_{ij}$ to be positive for all $j$ if $\Bar{x}$ and $x_j$ are on the same side of $x^*$ and $u_{ij}$ to be negative for all $j$ if $\Bar{x}$ and $x_j$ are on different sides of $x^*$. Note that the control $u_{ij}$ depends on the opinion of individual $i$ only through their contribution to $\Bar{x}$. In addition, it is not possible for $(\Bar{x} - x_2)$ to have the same sign for all $j\in\Lambda$, since $\Bar{x}$ is a convex combination of $x_j$'s, hence there will always be at least one individual $j$ for whom $u_{ij}$ is positive for all $i\in\Lambda$, and thus at least $N-1$ weights must be increasing at any given time. 

To investigate the effectiveness of this control we perform 10,000 simulations in which initial opinions are chosen uniformly at random in the interval $[-1,1]$ and  a weighted Erdos-Renyi random network is created to give the initial network $w(0)$. A consensus target $x^*$ is then chosen uniformly at random inside the initial opinions. Each simulation is run until the opinions have reached a steady state, determined by the distance between opinion vectors at consecutive timepoints falling below a given threshold. For all these simulations we use an exponential interaction function, specifically
\begin{equation} \label{Eqn: exponential interaction function}
    \phi(r) = e^{-r} \,.
\end{equation}
Note that, when the network $w$ is fixed, this interaction function guarantees consensus \cite{nugent2023evolving}. For the control dynamics we take $s$ and $\ell$ given by 
\begin{equation} \label{Eqn: ess and ell}
    s(u) = \mathcal{S} u^2 \,, \quad
    \ell(u) = \frac{1}{2}(u+1) \,,
\end{equation}
and so consider $u_{ij}\in[-1,1]$ (meaning $M=1$). The value of $\mathcal{S}$ determines the speed with which the control acts, and we take $\mathcal{S}=1$ in these simulations. Two example timeseries can be found in Appendix \ref{Appendix: Additional Examples Inst}. From each simulation we ask two questions: has the population reached consensus, and has the population been guided to the consensus target?  

To address the first question we calculate the opinion diameter at the end of each simulation. For all simulations the final opinion diameter was of the order $10^{-5}$, clearly indicating consensus. Results for each simulation, along with a local average, can be found in Figure \ref{fig:final diameter} in Appendix \ref{Appendix: Additional Examples}. The emergence of consensus is perhaps unsurprising. If $\Bar{x}>x^*$ then all edges connecting to individuals with opinions below $\Bar{x}$ will be increasing in weight. Specifically, the edge connecting the individual with the maximum opinion to the individual with the minimum opinion will be increasing in weight. As the interaction function is always positive this will draw the maximum opinion down towards the minimum opinion. If $\Bar{x} < x^*$ then the opposite occurs, yet in either case the opinion diameter is shrinking due to this strengthening of edges. If $\Bar{x} = x^*$ then controls switch off, but we know from \cite{nugent2023evolving} that for an exponential interaction function on a fixed connected network, consensus is guaranteed. As it is highly likely that our randomly generated $w(0)$ will be connected, we thus expect consensus regardless of the value of $x^*$ (although not necessarily at $x^*$). 

To address the question of consensus at $x^*$, we calculate the maximum distance from $x^*$ at the end of the simulation (time $T$), 
\begin{equation*} \label{Eqn: max distance}
    d(x^*) = \max_{i\in\Lambda} | x_i(T) - x^* | \,.
\end{equation*}
Note that the maximum distance cannot exceed $1 + |x^*|$. The results for the 10,000 simulations are shown in Figure \ref{fig:final max distance}. We observe a region of consensus targets, approximately $[-0.5,0.5]$ in which the final maximum distance is extremely small. This indicates that the control is reliably successful for targets in this region. Outside this region the final maximum distance grows almost linearly in $|x^*|$, indicating that the control struggles to achieve consensus above $0.5$ or below $-0.5$. 

\begin{figure}[ht!]
    \centering
    \includegraphics[width = \linewidth]{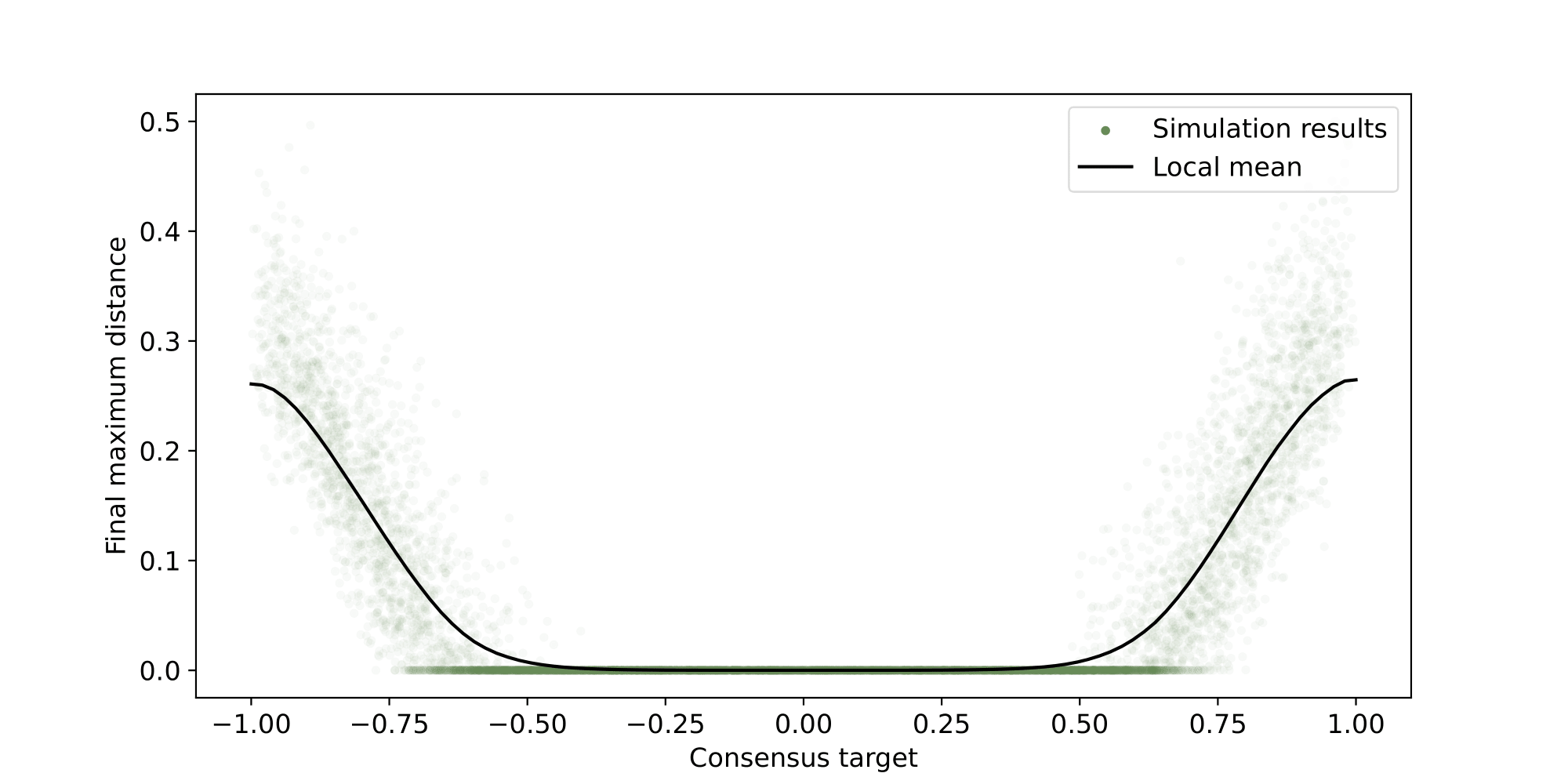}
    \caption{Results of repeated tests of the instantaneous control \eqref{Eqn: instantaneous minimusation of V control strategy} with $M=1$. Each simulation uses uniformly random initial conditions and a weighted Erdos-Renyi random initial network and runs until opinions have reached a steady state. Each point shows the value of the maximum distance from $x^*$, given by \eqref{Eqn: max distance}, at the end of the simulation. The black line shows the local mean.}
    \label{fig:final max distance}
\end{figure}

It is important to note that, although the results in Figure \ref{fig:final max distance} indicate this control strategy is typically successful for $x^*\in[-0.5,0.5]$, this success is not guaranteed. In fact, as previously noted, Proposition \ref{Proposition: Uncontrollability} shows that there exist initial conditions and targets for which this control will fail. As the results presented in Figure \ref{fig:final max distance} were generated using random $x(0)$, $w(0)$ and $x^*$, it is also worth noting that Proposition \ref{Proposition: Uncontrollability} gives a range of initial conditions for which the system is not controllable that has a strictly positive probability of occurring. Thus, if the experiments described were repeated a sufficiently large number of times, we would eventually see some instances with consensus targets in the interval $[-0.5,0.5]$ for which the control is not successful. 

Moreover, this failure cannot necessarily be remedied by increasing $\mathcal{S}$ (the speed with which the control acts). As $w_{ii}(t)=1$ for all $i\in\Lambda$ and $t\geq0$, $\Bar{x}$ cannot take any arbitrary value in the opinion interval, even if the other edge weights could be changed instantaneously. This means that for certain initial conditions control to consensus at $x^*$ is not possible for any $\mathcal{S}$ since, no matter how quickly the control acts, $\Bar{x}$ cannot be controlled to $x^*$ sufficiently quickly. We provide an example of this in Appendix \ref{Appendix: Additional Examples Inst}. 

Having examined the performance of this control strategy for an exponential interaction function, we now consider a smoothed bounded confidence interaction function, under which obtaining consensus is significantly more challenging. Figure \ref{fig:BC disaster} shows that this control is incapable of achieving consensus under such an interaction function. This is due to the fact that the form of the control \eqref{Eqn: instantaneous minimusation of V control strategy} does not include any information about the interaction function and hence cannot take into account the complex behaviours it may cause. For example, in the lower panel of Figure \ref{fig:BC disaster} we observe a situation in which half the population is connected exclusively to individuals outside their confidence bound, meaning they do not interact and their opinions remain constant. Any previous success of this control strategy appears to be heavily reliant on the exponential interaction function allowing interactions at all distances and hence promoting consensus. 

\begin{figure}[ht!]
    \centering
    \includegraphics[width = .8\linewidth]{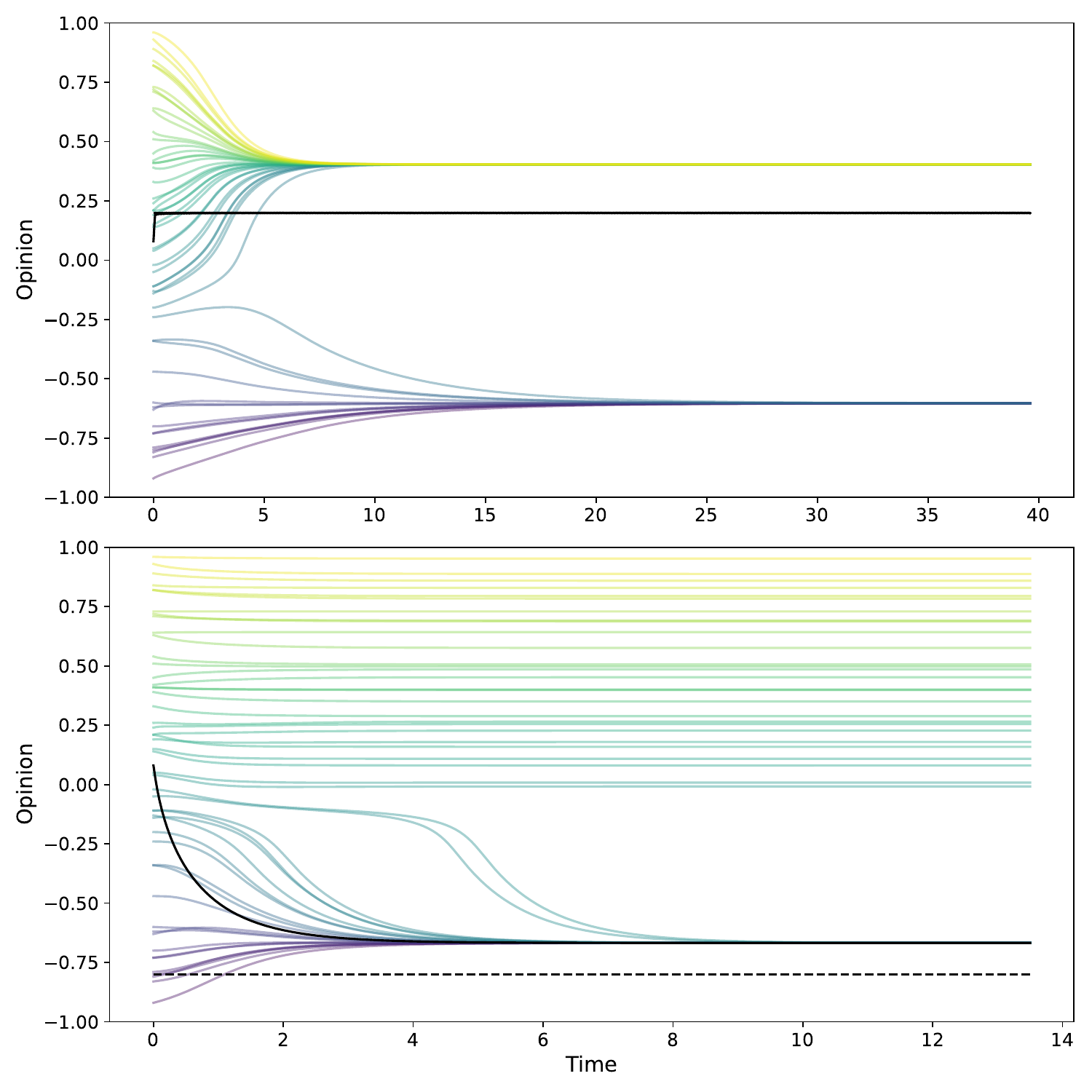}
    \caption{Example implementations of the instantaneous control \eqref{Eqn: instantaneous minimusation of V control strategy} with a smoothed bounded confidence interaction function. Opinion trajectories are coloured according to individuals' initial opinions. This control fails for both the moderate target of $+0.2$ (top panel) and the more extreme target of $-0.5$ (bottom panel) as the population does not reach consensus. The target opinion $x^*$ is indicated by a black dashed line. The degree-weighted mean opinion is given by the solid black line. Note that in the top panel the dashed line is not visible as the degree-weighted mean opinion is almost immediately brought to the target.}
    \label{fig:BC disaster}
\end{figure}

It is possible to help the control account for the interaction function by adapting the cost function $V$ \eqref{Eqn: Degree-weighted mean}. One possibility is to include a term of the form
\begin{equation*}
    \mathcal{E}(t) := \sum_{i,j} w_{ij} \,\varphi\big( |x_i(t) - x_j(t)| \big), \quad \varphi(r) := \int_0^r s \, \phi(s) \,ds \,,
\end{equation*}
as this describes the energy of the system with a fixed network \cite{motsch2014heterophilious,nugent2023evolving}. Minima of $\mathcal{E}$ correspond to stationary opinion states, hence including this term would encourage the control to bring the system towards the nearest stationary state. However, minimising $\mathcal{E}$ does not necessarily encourage consensus. The population may instead be guided rapidly towards a clustered state, preventing consensus and thus preventing consensus at $x^*$. This makes a function involving $\mathcal{E}$ a poor candidate for creating consensus at $x^*$. 

Another solution to the clustering arising from the bounded confidence interaction function is to allow the control more information about the resulting dynamics. In the following Section \ref{Section: Optimal control} we consider an optimal control problem for \eqref{eqn: ODE system} that hopes to remedy the failure observed in Figure \ref{fig:BC disaster} and improve upon the drastic approach in Theorem \ref{Theorem: Controllability from non-empty intial network} of removing the entire initial network. 

\subsection{Optimal Control} \label{Section: Optimal control}

In this section we continue considering controls of the form \eqref{Eqn: Memory weight controls}, with the restriction that, for some fixed $M>0$, $u_{ij}(t)\in[-M,M]$ for all $i,j\in\Lambda$ and $t\geq0$. We denote this set of admissible controls by $\mathcal{U}_{\text{ad}}$. We also consider a finite time horizon, $T > 0$. For a given control $u$ we introduce the cost functional $\mathcal{C}(u)$, which includes both a cost of controls and a cost associated to the distance from the target opinion $x^*$, 
\begin{equation} \label{Eqn: cost functional}
    \mathcal{C}(u) = \int_0^T \alpha \sum_{i=1}^N \sum_{j=1}^N u_{ij}(s)^2 + \beta \sum_{i=1}^N \big(x_i(s) - x^* \big)^2 \,\,ds \,,
\end{equation}
where $(x_i)_{i\in\Lambda}$ is the solution to \eqref{eqn: ODE system} with controls \eqref{Eqn: Memory weight controls}. Including the distance from $x^*$ inside the integral, rather than as a terminal cost, promotes controlling opinions to consensus at $x^*$ as quickly as possible. The constants $\alpha,\beta\in\mathds{R}^+$ are chosen to balance the relative costs of control and distance from $x^*$.

Next we introduce two co-states $p\in\mathds{R}^N$ and $q\in\mathds{R}^{N\times N}$ for $x$ and $w$ respectively. The control theory Hamiltonian \cite{evans1983introduction} associated to \eqref{Eqn: cost functional} is then given by 
\begin{align} 
    H(x,w,p,q,u) &= \sum_{i=1}^N \frac{dx_i}{dt} \, p_i \,+\, \sum_{i=1}^N \sum_{j=1}^N \frac{dw_{ij}}{dt}\,q_{ij} - \alpha \sum_{i=1}^N \sum_{j=1}^N u_{ij}^2 - \beta \sum_{i=1}^N \big(x_i - x^* \big)^2 \,, \\ 
    &= \sum_{i=1}^N \sum_{j=1}^N \bigg( \frac{p_i}{k_i}\,w_{ij}\,\phi(x_j - x_i)(x_j - x_i) + q_{ij}\,s(u_{ij})\big( \ell(u_{ij}) - w_{ij} \big)- \alpha u_{ij}^2 \bigg) - \beta \sum_{i=1}^N \big(x_i - x^* \big)^2 \label{Eqn: Hamiltonian}\,.
\end{align}
Moreover, by the Pontryagin Maximum Principle \cite{evans1983introduction}, the optimal control $\Tilde{u}$ and corresponding states $(\Tilde{x},\Tilde{w})$ and co-states $(\Tilde{p},\Tilde{q})$ satisfy the following 
\begin{enumerate}[label = (\roman*)]
    \item $\dfrac{d\Tilde{x}}{dt} = \nabla_p H(\Tilde{x},\Tilde{w},\Tilde{p},\Tilde{q},\Tilde{u})$ and $\dfrac{d\Tilde{w}}{dt} = \nabla_q H(\Tilde{x},\Tilde{w},\Tilde{p},\Tilde{q},\Tilde{u})$, meaning the original dynamics \eqref{eqn: ODE system} are satisfied. 
    \item $\dfrac{d\Tilde{p}}{dt} = -\nabla_x H(\Tilde{x},\Tilde{w},\Tilde{p},\Tilde{q},\Tilde{u})$ and $\dfrac{d\Tilde{q}}{dt} = -\nabla_w H(\Tilde{x},\Tilde{w},\Tilde{p},\Tilde{q},\Tilde{u})$, referred to as the adjoint dynamics. 
    \item $p(T) = 0$ and $q(T) = 0$, the terminal conditions for $p$ and $q$. 
    \item $H(\Tilde{x},\Tilde{w},\Tilde{p},\Tilde{q},\Tilde{u}) = \max\limits_{u\in\mathcal{U}_{\text{ad}}} H(\Tilde{x},\Tilde{w},\Tilde{p},\Tilde{q},u)$, referred to as the maximisation principle for $\Tilde{u}$.
\end{enumerate}
From the Hamiltonian \eqref{Eqn: Hamiltonian}, the adjoint dynamics are given by 
\begin{subequations} \label{Eqn: Adjoint dynamics}
    \begin{align}
        \frac{dp_i}{dt} &= 2\beta\,(x_i - x^*) - \sum_{j=1}^N \Big( \phi'(x_j - x_i)\,(x_j - x_i) + \phi(x_j - x_i) \Big) \bigg( \frac{w_{ji}}{k_j}\,p_j - \frac{w_{ij}}{k_i}\,p_i \bigg) \,, \label{Eqn: Adjoint p} \\ 
        \frac{dq_{ij}}{dt} &= q_{ij}\,s(u_{ij}) - \frac{p_i}{k_i}\phi(x_j - x_i)\,(x_j - x_i) + \frac{p_i}{k_i^2} \sum_{r=1}^N w_{ir}\,\phi(x_r - x_i)\,(x_r - x_i) \,. \label{Eqn: Adjoint q }
    \end{align}
\end{subequations}

It is not possible to identify a priori the control $\Tilde{u}$ that maximises $H(\Tilde{x},\Tilde{w},\Tilde{p},\Tilde{q},u)$ for general functions $s$ and $\ell$, hence for the remainder of this section we will again work with $s$ and $\ell$ given in \eqref{Eqn: ess and ell}. For these functions an explicit expression for $\Tilde{u}$ can be obtained (see Appendix \ref{Appendix: Max principle} for derivation). Define lower and upper bounds, $b_l$ and $b_u$ respectively, by
\begin{align*}
    b_l(w) = 
    \begin{cases}
        -\infty & \text{ if } w = 0 \\
        -\frac{\alpha}{\mathcal{S}w}  & \text{ if } w > 0
    \end{cases}
    \,,\quad\quad
    b_u(w) = 
    \begin{cases}
        \infty & \text{ if } w = 1 \\
        \frac{\alpha}{\mathcal{S}(1-w)}   & \text{ if } w < 1
    \end{cases}
    \,,
\end{align*}
then the control $\Tilde{u}=(\Tilde{u}_{ij})_{i,j\in\Lambda}$ is given by 
\begin{equation} \label{Eqn: bang bang optimal controls}
    \Tilde{u}_{ij} = 
    \begin{cases}
        -M & \text{if }\, q_{ij} < b_l(w_{ij}) \\
        0 & \text{if }\, b_l(w_{ij}) < q_{ij} < b_u(w_{ij}) \\
        M & \text{if }\,  q_{ij} > b_u(w_{ij}) \,.
    \end{cases}
\end{equation}

It is interesting to note that in both Proposition \ref{Proposition: Controllability from empty intial network}, Theorem \ref{Theorem: Controllability from non-empty intial network} and the optimal control setup \eqref{Eqn: bang bang optimal controls}, the full range of controls offered by \eqref{Eqn: Memory weight controls} is not utilised. Instead only a single value $u^+$ that creates edges and a single value $u^-$ that removes edges are used. This type of control, which switches between extreme values, is commonly known as a bang-bang control. Note that this does not mean edge weights always take integer values, indeed it is often crucial that they do not, only that controls always act to their fullest extent. 

To identify the optimal controls we perform a Forward-Backward Sweep (FBS) over \eqref{eqn: ODE system} and \eqref{Eqn: Adjoint dynamics}, using a 4th order Runga-Kutta numerical scheme, to iteratively improve the controls. Note that during the search for the optimal control we allow continuous controls, rather than limiting to bang-bang controls, but observe convergence towards bang-bang controls of the form \eqref{Eqn: bang bang optimal controls}. We use an initial guess of $u\equiv0$, that is we begin with no control. 

In the following we use the same initial conditions as for Figure \ref{fig: Controllability from non-empty network example} and again take $\mathcal{S}=1$. Additional examples, using different initial conditions, can be found in Appendix \ref{Appendix: Additional Examples}. Our overall aim is to use these examples to make inferences about the general nature of optimal control strategies for this system. 

The trajectory of opinions under the optimal control is shown in Figure \ref{fig:optimal_control_dynamics}. We observe that the trajectory is quite different from that in Figure \ref{fig: Controllability from non-empty network example} as individuals do not wait to be `gathered' towards $x^*$. This leads to significantly faster dynamics that appear to reach consensus before the final time $t=10$, whereas the implementation of the method described in Theorem \ref{Theorem: Controllability from non-empty intial network} does not do so until approximately $t = 30$. Another feature of the dynamics in Figure \ref{fig:optimal_control_dynamics} is that, due to the controls, individuals' opinions rarely cross over. This may help in maintaining the $r^*$-chain and keeping the $x^*$ within the opinion interval. 

\begin{figure}[ht!]
    \centering
    
    \begin{subfigure}{\linewidth}
        \centering
        \includegraphics[width = .8\linewidth, trim = {1cm 0cm 2cm 1cm}, clip]{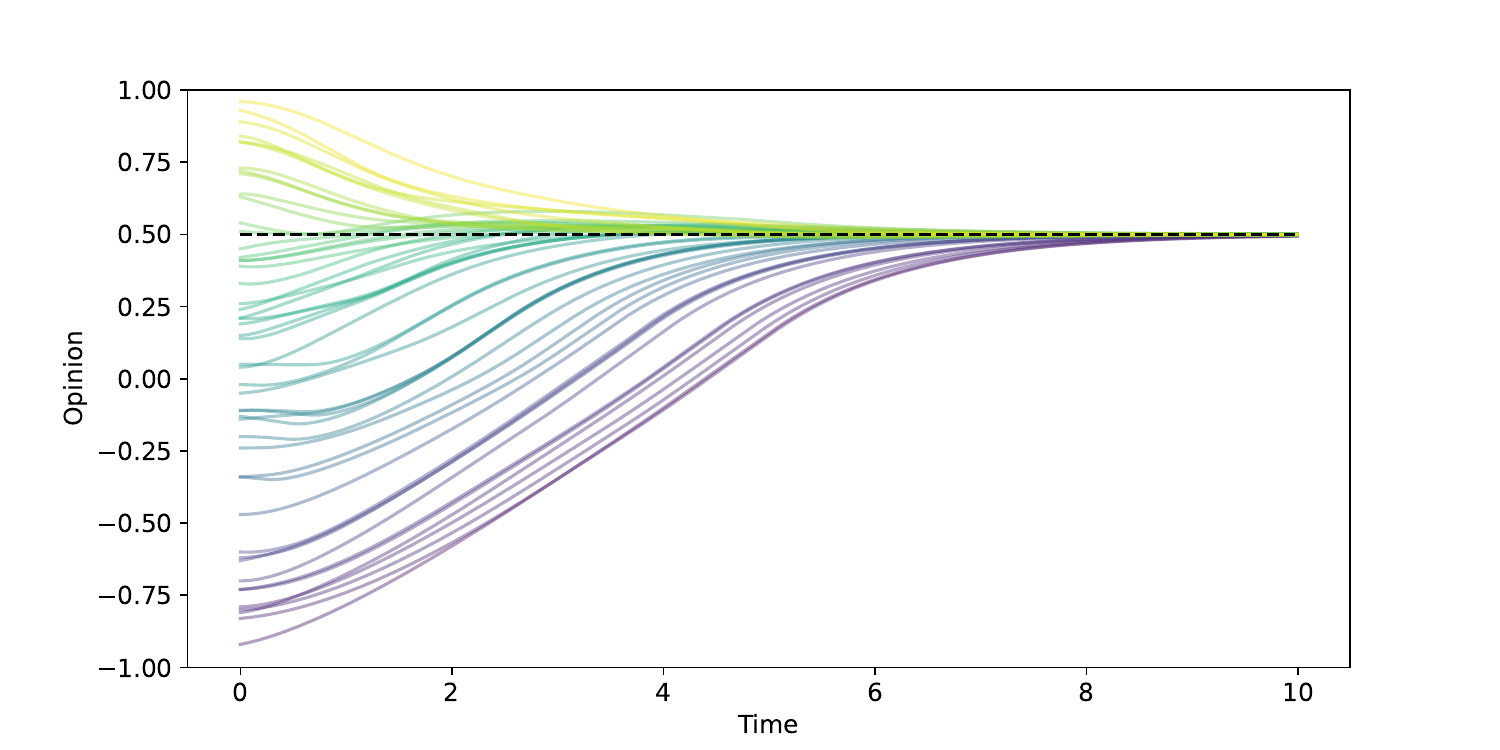}
        \caption{Opinion dynamics under the optimal controls for the cost functional \eqref{Eqn: cost functional}. Opinion trajectories are coloured according to individuals' initial opinions. The target opinion $x^*$ is indicated by a black dashed line.}
        \label{fig:optimal_control_dynamics}
    \end{subfigure}
    
    \begin{subfigure}{\linewidth}
        \centering
        \includegraphics[width = 0.8\linewidth, trim = {0.5cm 0.5cm 1.5cm 1.5cm}, clip]{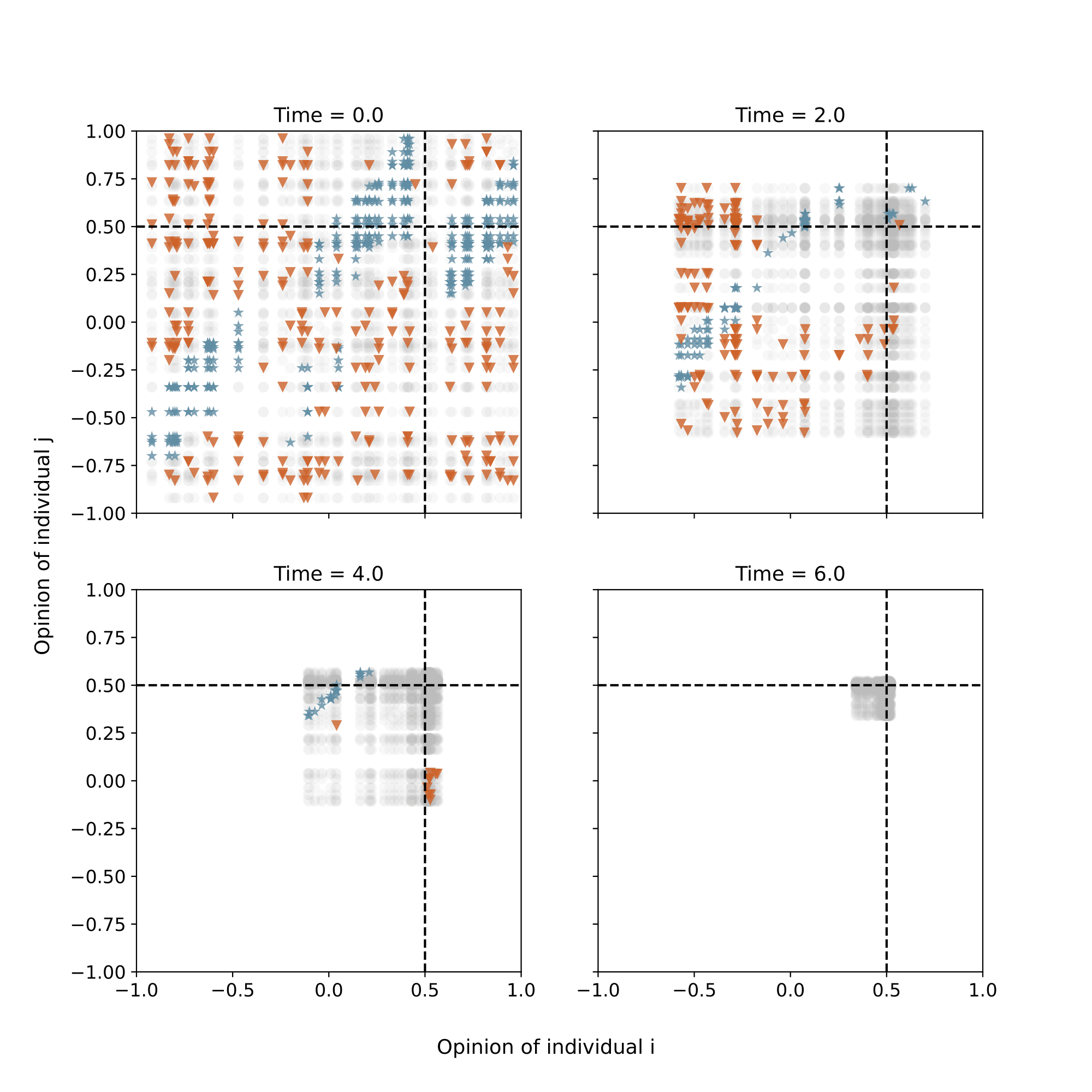}
        \caption{Snapshots of the optimal control at times $t=0,2,4,6$. The horizontal axis gives the opinions $x_i$ for $i\in\Lambda$, the vertical axis gives the opinions $x_j$ for $j\in\Lambda$ and points show the control $u_{ij}$. Blue stars show positive controls, where edges are created/strengthened. Red triangles show negative controls, where edges are being weakened. Grey circles indicate no control. Dashed lines show the location of the target opinion $x^*$, hence as $t$ increases opinions are brought near this value.}
        \label{fig:optimal_control_controls}
    \end{subfigure}
    
    \caption{Results of the FBS to find the optimal controls under \eqref{Eqn: cost functional}, using edge weight dynamics of the form \eqref{Eqn: Memory weight controls} with $s$ and $\ell$ given by \eqref{Eqn: ess and ell}. The same initial conditions were used as for the examples in Figure \ref{fig: Controllability from empty network example} and Figure \ref{fig: Controllability from non-empty network example}.}
    \label{fig:optimal_control_dynamics_and_control}
\end{figure}

Figure \ref{fig:optimal_control_controls} shows the optimal controls at four timepoints ($t = 0,2,4,6$). In each panel we show the value of each $u_{ij}(t)$ in the following way: a point is placed at $(x_i(t), x_j(t))$, that is the horizontal axis corresponds to the opinion of individual $i$ and the vertical axis to the opinion of individual $j$, and this point is coloured according to the value of $u_{ij}(t)$. As our search for the optimal control converges towards a bang-bang control, all control values lie at (or extremely close to) $0$, $+1$ or $-1$, hence the values have been rounded so that a different marker may be used for each value. $0$ appears in grey circles, $+1$ in blue stars and $-1$ in red triangles. 

At time $t=0$ we see a definite striped pattern in the $+1$ controls, showing where edges are being created or strengthened. The width and position of this stripe indicates that edges are strengthened only when pairs of individuals lie within a distance $r^*$ and so could potentially interact. Where $x_i < x^*$ this stripe lies (for the most part) above the diagonal, indicating that individuals are being connected to those with a higher opinion that would bring them up towards $x^*$. When $x_i > x^*$ the opposite occurs. There are some $+1$ controls that do not follow this rule as well as some `missing' points where they might otherwise be expected, suggesting that this typical behaviour is not the only consideration. The distribution of $-1$ controls, showing where edges are being removed, appears to be much more uniform, making the motivation behind removing such edges unclear. 

By time $t=2$ the opinion diameter has reduced and the behaviour of the controls has changed. The stripe of $+1$ controls is still present but has reduced in width as many of the useful edges will have already been established. For $x_i > x^*$ the majority of controls are by now set to zero. There is a new cluster of $-1$ points in the (approximate) region $[-0.6,-0.3]\times[0.5,0.7]$. The purpose of this appears to be to prevent these individuals $i$ with $x_i < x^*$ from increasing above $x^*$, by removing edges connecting them to individuals $j$ with $x_j > x^*$. While it is beneficial to connect individuals to those with higher opinions, in order to bring them closer to $x^*$, the control cannot afford to `overshoot' and encourage individuals to cross $x^*$. 

Moving to time $t=4$ we see that almost all controls are now set to zero. A small stripe of $+1$ controls remains and a new cluster of $-1$ controls has appeared, this time to prevent individuals above $x^*$ from moving below $x^*$. By $t=6$ essentially all controls are set to zero (the very last control switches off at time 7.46). This switching off of controls before the population has neared consensus is made possible by the nature of the forwards-backwards-sweep, in which the control is effectively given knowledge of the future dynamics of the system and thus can adjust weights to account for this. This mirrors the usage of the intermediate value theorem in Proposition \ref{Proposition: Controllability from empty intial network} and Theorem \ref{Theorem: Controllability from non-empty intial network}, where it is shown that the system can be controlled by choosing the correct time to switch on/off controls, but that knowledge of the full future dynamics is needed to identify these times. In a similar way, the optimal control can iteratively update earlier controls to account for the full dynamics, allowing controls to be switched off well before consensus. 

Appendix \ref{Appendix: Additional Examples} contains further examples using the same initial opinions with an empty $w(0)$, complete $w(0)$ and an example using different $x(0)$, $x^*$ and a Watts-Strogatz random network for $w(0)$. In all these examples, similar qualitative behaviours are observed. This provides some support for using examples of optimal control to learn more about what strategies are most effective in general, with the hope that this may pave the way to future analytic results concerning controllability and the optimality of such controls. 

It is clear from Figure \ref{fig:optimal_control_controls} and the examples in Appendix \ref{Appendix: Additional Examples} that the optimal control in this setup is not sparse, meaning that many different edges are controlled throughout the dynamics. In future work we will consider several modifications to the cost function \eqref{Eqn: cost functional}, including methods to encourage a sparse control or penalise deviations from the initial network. The question then shifts from how close the population can be brought to $x^*$, to how much/little control is needed to achieve this. 

\section{Conclusion} \label{Section: Conclusion}

The control problem we have discussed poses several interesting challenges. The difficulty of guaranteeing consensus in opinion formation is exacerbated by the problem of keeping the target point $x^*$ inside the range of current opinions. Controllability can be proven under some rather strong assumptions about the range and speed of controls, yet we observe that successful controls can be found even when these assumptions do not hold. The examples of optimal controls point towards a promising alternative strategy for further theoretical results, as well as the possibility of searching for sparse controls. 

The explicit strategy \eqref{Eqn: instantaneous minimusation of V control strategy} arising from instantaneous control of $V$ showed mixed results, but further work is required to identify precisely where and why its failures occur. One possibility, suggested by the success of optimal control and nature of the controllability results, is that information about the current state only is not sufficient. This raises the question of how much knowledge of the future dynamics a control requires in order to be effective in this setting. 

Recent interest in the extension of mean-field limits to fixed and adaptive networks \cite{gkogkas2021continuum,ayi2024review} also raises the possibility of moving this type of control to the partial differential equation setting. Network control could also be considered at the finer scale of a stochastic agent-based model. Indeed, the type of network control considered here offers many interesting questions about the coupling between opinion and network dynamics and the possibility of successfully influencing their outcome. 

\section{Acknowledgements}

AN was supported by the Engineering and Physical Sciences Research Council through the Mathematics of Systems II Centre for Doctoral Training at the University of Warwick (reference EP/S022244/1). MTW acknowledges partial support from the EPSRC Small Grant EPSRC EP/X010503/1.

For the purpose of open access, the authors have applied a Creative Commons Attribution (CC-BY) license
to Any Author Accepted Manuscript version arising from this submission.

\bibliographystyle{unsrt}
\bibliography{bibliography.bib}

\newpage
\appendix
\counterwithin*{equation}{section}
\renewcommand\theequation{\thesection\arabic{equation}}

\section{Proofs} \label{Appendix: Proofs}

\begin{lemma} \label{Lemma: F continuity}
    Fix some $T_{21}>0$, then the function $F(T_{12},T_{21})$ for $T_{12}\in\mathds{R}^+$ defined by \eqref{Eqn: F definition} in Proposition \ref{Proposition: Controllability from empty intial network} is continuous in $T_{12}$. 
\end{lemma}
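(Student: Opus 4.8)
The plan is to obtain continuity of $F(\cdot, T_{21})$ by comparing, for two parameter values, the limits $F(T_{12}, T_{21})$ and $F(T_{12}', T_{21})$ through the value $x_1(\mathcal{T}; \cdot)$ at a large but finite time $\mathcal{T}$: on $[0,\mathcal{T}]$ I will use classical continuous dependence of ODE solutions on a parameter, and for the tail $t \ge \mathcal{T}$ I will use an exponential decay estimate for the opinion diameter that is uniform in $T_{12}$. Throughout, $T_{21} > 0$ is fixed and I restrict to $T_{12}$ in a bounded neighbourhood $I = [0, T_0]$ of the point of interest; for $T_{12} \le 0$ the interval $[0, T_{12}]$ in \eqref{Eqn: Controllability from empty network control scheme} is empty, so $u_{12} \equiv 0$, the individual $x_1$ never moves and $F \equiv x_1(0)$, making continuity there immediate.

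First I would record that, since $s(0) = 0$ and $f$ has the form \eqref{Eqn: Memory weight controls}, the control scheme \eqref{Eqn: Controllability from empty network control scheme} produces the weights $w_{ij}(t) = \ell^+\big(1 - e^{-s^+\min(t, T_{ij})}\big)$, which are jointly continuous in $(t, T_{ij})$ and lie in $[0,\ell^+] \subseteq [0,1]$. Hence the $N=2$ opinion dynamics read
\begin{equation*}
    \dot x_1 = \frac{w_{12}(t;T_{12})}{1 + w_{12}(t;T_{12})}\,\phi(x_2 - x_1)(x_2 - x_1), \qquad \dot x_2 = -\,\frac{w_{21}(t)}{1 + w_{21}(t)}\,\phi(x_2 - x_1)(x_2 - x_1),
\end{equation*}
whose right-hand side is continuous in $(x_1, x_2, t, T_{12})$ and Lipschitz in $(x_1, x_2)$ with a constant uniform over $t$ and $T_{12} \in I$, using that $\phi$ is bounded and Lipschitz (Assumption \ref{Assumptions on phi}) and that opinions remain in $[-1,1]$. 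By the standard continuous-dependence theorem, for each fixed $\mathcal{T} > 0$ the map $T_{12} \mapsto x_1(\mathcal{T}; T_{12})$ is continuous on $I$.

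Next I would establish the uniform tail bound. Writing $D = x_2 - x_1$, one gets $\dot D = -\big(\tfrac{w_{12}}{1+w_{12}} + \tfrac{w_{21}}{1+w_{21}}\big)\phi(D)\,D \le 0$, so $D(t) \le D(0) < r^*$ for all $t \ge 0$; since $\phi$ is continuous and has no root in $[0, r^*)$, it has a positive minimum $c_\phi$ on $[0, D(0)]$. For $t \ge T_{21}$ we have $w_{21}(t) = \ell^+(1 - e^{-s^+ T_{21}}) =: c_2^* > 0$, whence $\dot D \le -\lambda D$ with $\lambda := \tfrac{c_2^*}{1 + c_2^*}\,c_\phi > 0$ independent of $T_{12}$, giving $D(t) \le D(0)\,e^{-\lambda(t - T_{21})}$ for $t \ge T_{21}$. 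Since $x_1$ is nondecreasing and bounded it converges to $F(T_{12}, T_{21})$, and integrating $0 \le \dot x_1 \le \phi(D)\,D \le D$ over $[\mathcal{T}, \infty)$ yields $|F(T_{12}, T_{21}) - x_1(\mathcal{T}; T_{12})| \le \tfrac{D(0)}{\lambda}\,e^{-\lambda(\mathcal{T} - T_{21})}$ for every $\mathcal{T} \ge T_{21}$ and every $T_{12} \in I$. Combining the two ingredients: given $\varepsilon > 0$, choose $\mathcal{T} \ge T_{21}$ so that the tail bound is below $\varepsilon/3$, then $\delta > 0$ so that $|x_1(\mathcal{T}; T_{12}) - x_1(\mathcal{T}; T_{12}')| < \varepsilon/3$ whenever $|T_{12} - T_{12}'| < \delta$, and conclude via the triangle inequality that $|F(T_{12}, T_{21}) - F(T_{12}', T_{21})| < \varepsilon$.

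The step I expect to require the most care is the continuous-dependence claim, because the control is only piecewise-continuous in $t$; the key observation that makes it routine is that $T_{12}$ enters the opinion equations solely through the jointly continuous weight $w_{12}(t; T_{12}) = \ell^+(1 - e^{-s^+\min(t, T_{12})})$, so no discontinuity in the parameter ever reaches the $x$-system and the usual Gr\"onwall estimate applies. As an alternative route, once both weights are constant, i.e.\ for $t \ge t_0 := \max(T_{12}, T_{21})$, the combination $\tfrac{w_{21}}{1+w_{21}}x_1 + \tfrac{w_{12}}{1+w_{12}}x_2$ is conserved, which together with $x_1(\infty) = x_2(\infty) = F$ gives the closed expression $F = \big(\tfrac{w_{21}}{1+w_{21}}x_1(t_0) + \tfrac{w_{12}}{1+w_{12}}x_2(t_0)\big)\big/\big(\tfrac{w_{21}}{1+w_{21}} + \tfrac{w_{12}}{1+w_{12}}\big)$, whose denominator is bounded away from $0$ because $T_{21} > 0$; continuity of $F$ then follows from continuity of $x(t_0;\cdot)$ and of the weights, with the mild kink at $T_{12} = T_{21}$ harmless since $t_0$ is continuous in $T_{12}$.
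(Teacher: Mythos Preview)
Your argument is correct and takes a genuinely different route from the paper. The paper fixes two parameter values $\tau$ and $\tau+h$, compares the two trajectories directly, introduces a time-rescaling $z^{(2)}(t)=z^{(1)}(t+\delta(t))$ to relate the diameters, and then bounds $|x_1^{(2)}(t)-x_1^{(1)}(t)|$ uniformly in $t$ by hand. You instead use the standard two-stage decomposition: continuous dependence of the ODE solution on the parameter over a finite horizon $[0,\mathcal{T}]$, combined with a tail estimate $|F-x_1(\mathcal{T})|\le D(0)\lambda^{-1}e^{-\lambda(\mathcal{T}-T_{21})}$ that is uniform in $T_{12}$ because the rate $\lambda$ comes from the $w_{21}$-term alone. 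Your observation that the parameter enters the opinion equations only through the jointly continuous weight $w_{12}(t;T_{12})=\ell^+\bigl(1-e^{-s^+\min(t,T_{12})}\bigr)$ is exactly what makes the continuous-dependence step routine despite the piecewise control. This approach is more modular and avoids the somewhat delicate time-rescaling identity in the paper. Your alternative via the conserved quantity $\tfrac{w_{21}}{1+w_{21}}x_1+\tfrac{w_{12}}{1+w_{12}}x_2$ for $t\ge\max(T_{12},T_{21})$ is cleaner still, yielding an explicit formula for $F$ whose continuity is immediate once finite-time continuous dependence gives continuity of $x(t_0;\cdot)$; the paper does not exploit this invariant.
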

\begin{proof}
    Fix a $\tau \geq 0$. Consider two versions of the dynamics, the first $(x^{(1)},w^{(1)})$ in which $T_{12} = \tau$ and the second $(x^{(2)},w^{(2)})$ in which $T_{12} = \tau + h$, for some small $h > 0$. Both versions are identical until time $\tau$, so we consider only $t\geq\tau$. As both versions have the same fixed $T_{21}$ we have
    \begin{align*}
        w^{(1)}_{21}(t) = w^{(2)}_{21}(t) \,. 
    \end{align*}
    for all $t\geq\tau$. Moreover, given the values of $T_{12}$ we also have
    \begin{align} 
        w^{(1)}_{12}(t) &= w^{(1)}_{12}(\tau) = \ell^+ (1 - e^{-s^+ \tau} ) \,, \label{Eqn: Lemma A1 w_12 v1} \\[0.4em]
        w^{(2)}_{12}(t) &=
        \begin{cases}
            \ell^+ (1 - e^{-s^+ t} ) & \text{if } \tau \leq t < \tau + h \,, \\
            \ell^+ (1 - e^{-s^+ (\tau + h)} ) & \text{if } t \geq \tau + h \,.
        \end{cases} \label{Eqn: Lemma A1 w_12 v2} 
    \end{align}
    Hence there exists a positive function $\gamma:[\tau,\infty)\rightarrow\mathds{R}^+$ such that
    \begin{equation} \label{Eqn: Lemma A1, difference in w12's}
        w^{(2)}_{12}(t) = w^{(1)}_{12}(t) + \gamma(t) \,.
    \end{equation}

    Define $z^{(v)}(t) = x_2^{(v)}(t) - x_1^{(v)}(t) \geq 0$ for $v=1,2$. Then
    \begin{align*}
        \frac{dz^{(v)}}{dt} 
        &= \bigg(\frac{w_{21}^{(v)}}{1 + w_{21}^{(v)}}\bigg) \phi\big( | x_1^{(v)} - x_2^{(v)}| \big) \big( x_1^{(v)} - x_2^{(v)} \big) - \bigg(\frac{w_{12}^{(v)}}{1 + w_{12}^{(v)}}\bigg) \phi\big( | x_2^{(v)} - x_1^{(v)}| \big) \big( x_2^{(v)} - x_1^{(v)} \big) \,,\\
        &= - \bigg( \frac{w_{21}^{(v)}}{1 + w_{21}^{(v)}} + \frac{w_{12}^{(v)}}{1 + w_{12}^{(v)}} \bigg) \phi\big(|z^{(v)}|\big) \, z^{(v)} \,,\\[0.4em]
        &= - \Omega^{(v)}(t) \, \phi\big(|z^{(v)}|\big) \, z^{(v)} \,,
    \end{align*}
    where the function $\Omega^{(v)}(t)$ is given by
    \begin{equation*}
        \Omega^{(v)}(t) =  \frac{w_{21}^{(v)}(t)}{1 + w_{21}^{(v)}(t)} + \frac{w_{12}^{(v)}(t)}{1 + w_{12}^{(v)}(t)} \,.
    \end{equation*}
    As $\tau > 0$ and all $w_{ij}$'s are non-decreasing, $\Omega^{(v)}(t) \geq \Omega^{(v)}(\tau)$. Furthermore, $z^{(v)}$ is decreasing and $z^{(v)}(0) < r^*$, hence $\Omega^{(v)}(t) \, \phi\big(|z^{(v)}|\big) \geq \Omega^{(v)}(\tau) \,\phi\big(|z^{(v)}(0)|\big) > 0$. Thus there exists a positive constant $c_1$ (independent of the version $(v)$) such that, for all $t\geq\tau$, $v=1,2$
    \begin{align} \label{Eqn: Lemma A1, exponential bound on z}
        z^{(v)}(t) \leq z^{(v)}(\tau)  \, e^{ c_1 (t - \tau) } \,.
    \end{align}
    Note that if $\tau = 0$ a slight modification is required as $\Omega^{(v)}(\tau) = 0$, but a bound of the same form can still be obtained. 
    
    Additionally, by \eqref{Eqn: Lemma A1, difference in w12's} we can write $\Omega^{(2)}(t) = \Omega^{(1)}(t) + \Gamma(t)$ for a positive function $\Gamma:[\tau,\infty)\rightarrow\mathds{R}^+$. Due to this, $z^{(1)}$ and $z^{(2)}$ differ only in a rescaling of time. Specifically we can write 
    \begin{align} \label{Eqn: Lemma A1 time change for z}
        z^{(2)}(t) = z^{(1)}\big(t + \delta(t)\big) \,,
    \end{align}
    for 
    \begin{align*}
        \delta(t) = \int_\tau^t \frac{\Gamma(r)}{\Omega^{(1)}(r)} \, dr \,.
    \end{align*}
    Using \eqref{Eqn: Lemma A1 w_12 v1} and \eqref{Eqn: Lemma A1 w_12 v2} it can be verified that $\delta(t) \leq m(h)\,t$ for some function $m(h)$ with $m(h)\rightarrow0$ as $h\rightarrow0$. 
    
    Combining this time rescaling with \eqref{Eqn: Lemma A1, exponential bound on z} we obtain, for any $t\geq\tau$, 
    \begin{align}
        \big|z^{(2)}(t) - z^{(1)}(t)\big| 
        &= \big|z^{(1)}\big(t + \delta(t)\big) - z^{(1)}(t)\big| \nonumber\\[0.4em]
        &= \bigg|\int_t^{t + \delta(t)} \Omega^{(v)}(r) \, \phi\big(|z^{(v)}(r)|\big) \, z^{(v)}(r) \,dr \,\bigg| \nonumber\\
        &\leq \int_t^{t + \delta(t)} z^{(v)}(r) \,dr \nonumber\\
        &\leq \int_t^{t + \delta(t)} z^{(v)}(\tau) \, \, e^{ c_1 (r - \tau) } \,dr \nonumber\\[0.4em]
        &= z^{(v)}(\tau) \, e^{ c_1 \tau } \frac{1}{c_1} \big( e^{ - c_1 t } - e^{ - c_1 (t+\delta(t)) } \big) \nonumber\\
        &\leq c_2 \, e^{ - c_1 t } \big( 1 - e^{ - c_1 m(h) \, t } \big) \nonumber\\[0.4em]
        &\leq c_2 \, m(h)\,t\, e^{ - c_1 t } \label{Eqn: Lemma A1 bound on z difference}
    \end{align}
    where the constant $c_2$ is given by $c_2:= z^{(v)}(\tau) \, e^{ c_1 \tau } \frac{1}{c_1}$. 

    We can now obtain a uniform-in-time estimate for the difference between $x_1^{(1)}(t)$ and $x_1^{(2)}(t)$, allowing us to then considering the difference between their limits at $t\rightarrow\infty$. 
    \begin{align*}
        | x_1^{(2)}(t) - x_1^{(1)}(t) | &= \bigg| \int_\tau^t \bigg(\frac{w_{12}^{(2)}}{1 + w_{12}^{(2)}}\bigg) \phi\big( | z^{(2)} | \big) \, z^{(2)} - \bigg(\frac{w_{12}^{(1)}}{1 + w_{12}^{(1)}}\bigg) \phi\big( | z^{(1)} | \big) \, z^{(1)} \,dr \bigg| \\
        &\leq \int_\tau^t \bigg| \frac{w_{12}^{(2)}}{1 + w_{12}^{(2)}} - \frac{w_{12}^{(1)}}{1 + w_{12}^{(1)}} \bigg| \, \phi\big( | z^{(2)} | \big) \, z^{(2)} \, + \bigg(\frac{w_{12}^{(1)}}{1 + w_{12}^{(1)}}\bigg) \Big| \phi\big( | z^{(2)} | \big) \, z^{(2)} - \phi\big( | z^{(1)} | \big) \, z^{(1)} \Big|\,dr \\
        &\leq \Bigg| \frac{w_{12}^{(2)}(\tau  +h)}{1 + w_{12}^{(2)}(\tau  +h)} - \frac{w_{12}^{(2)}(\tau)}{1 + w_{12}^{(2)}(\tau)} \Bigg| \, \int_\tau^t \, z^{(2)} \, dr + \frac{1}{2}(1 + L_\phi)\, \int_\tau^t \big| z^{(2)}(r) - z^{(1)}(r) \big|\,dr \\
    \end{align*}
    where $L_\phi$ is the Lipschitz constant for $\phi$. As both $z^{(2)}$ and $\big| z^{(2)}(r) - z^{(1)}(r) \big|$ are exponentially decreasing their integrals are bounded as $t\rightarrow\infty$. In addition, by \eqref{Eqn: Lemma A1 bound on z difference}
    \begin{equation*}
        \int_\tau^\infty \big| z^{(2)}(r) - z^{(1)}(r) \big|\,dr \leq c_3 \, m(h) \,,
    \end{equation*}
    for some positive constant $c_3$. Furthermore, as $w_{12}^{(2)}$ is continuous, 
    \begin{equation*}
        \lim_{h\rightarrow0} \,\Bigg| \frac{w_{12}^{(2)}(\tau  +h)}{1 + w_{12}^{(2)}(\tau  +h)} - \frac{w_{12}^{(2)}(\tau)}{1 + w_{12}^{(2)}(\tau)} \Bigg| = 0 \,.
    \end{equation*}
    Overall this gives that $| x_1^{(2)}(t) - x_1^{(1)}(t) |$ is bounded above, uniformly in time, with a bound that tends to $0$ as $h\rightarrow\infty$. Thus for $h>0$
    \begin{equation*}
        \lim_{h\rightarrow0} F(\tau + h, T_{21}) = F(\tau, T_{21}) \,.
    \end{equation*}
    By an almost identical argument, reversing the time change in \eqref{Eqn: Lemma A1 time change for z}, the same holds for $h < 0$.  Hence we can conclude that, for a fixed $T_{21} > 0$, $F(T_{12}, T_{21})$ is continuous in $T_{12}$. 
\end{proof}

\begin{lemma} \label{Lemma: G continuity}
    Fix some $T_{ba}>T$, as defined in Theorem \ref{Theorem: Controllability from non-empty intial network}. Then the function $G(T_{ab},T_{ba})$ for $T_{ab}\in(T,\infty)$, defined by \eqref{Eqn: G definition}, is continuous in $T_{ab}$. 
\end{lemma}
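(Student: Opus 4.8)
The plan is to mirror the strategy of Lemma \ref{Lemma: F continuity}, while exploiting a simplification available here: from the time $T_R$ at which $D<R$ (as in the proof of Theorem \ref{Theorem: Controllability from non-empty intial network}), we have $\phi\equiv 1$ for the remainder of the dynamics, and the schedule \eqref{Eqn: Controllability from non-empty network control scheme} only actively reschedules $u_{ab}$ and $u_{ba}$ -- all other controls active after $T_R$ are open-loop (constant $u^-$, or a fixed schedule) and involve no further state-dependent switching. Since the weight equations \eqref{eqn: weight ODE} do not depend on $x$, the entire weight trajectory is a prescribed function of time, and \eqref{eqn: opinion ODE} reduces to a linear, time-varying consensus flow $\dot x_i=\tfrac1{k_i}\sum_j w_{ij}(x_j-x_i)$ with nonnegative, bounded coefficients and $k_i\ge 1$.

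First I would fix $T_{ba}$ and compare the version with $T_{ab}=\tau$ against the version with $T_{ab}=\tau+h$ for small $h>0$. The two are identical up to time $\tau$, so it suffices to work on $[\tau,\infty)$, and the only discrepancy is the trajectory of $w_{ab}$: on $[\tau,\tau+h]$ the first version has $w_{ab}$ increasing towards $\ell^+$ while the second still has it decreasing towards $0$, and on $[\tau+h,\infty)$ both satisfy $\dot w_{ab}=s^+(\ell^+-w_{ab})$. Writing $\gamma(t):=|w_{ab}^{(2)}(t)-w_{ab}^{(1)}(t)|$, one checks $\gamma$ is $O(h)$ uniformly in $t$ and decays like $e^{-s^+(t-\tau-h)}$ for $t\ge\tau+h$, whence $\int_\tau^\infty\gamma(s)\,ds\to 0$ as $h\to 0$.

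Next I would estimate $\Delta_i(t):=x_i^{(2)}(t)-x_i^{(1)}(t)$. For $i\neq a$ the two versions share the same weights, so $\dot\Delta_i=\tfrac1{k_i}\sum_j w_{ij}(\Delta_j-\Delta_i)$; for $i=a$ there is, in addition, a forcing term $E(t)$ arising from the perturbation of $w_{ab}$ (both in the numerator and through $k_a$ in the denominator). A short computation, using $|x_b-x_a|\le D$, the bound $\bigl|\sum_j w_{aj}(x_j-x_a)\bigr|\le k_a D$, and $k_a\ge 1$, gives $|E(t)|\le 2D(t)\gamma(t)\le 2R\,\gamma(t)$. Since $\Delta$ then solves a consensus ODE plus a forcing term supported on the single coordinate $a$, the maximum/minimum principle for such flows (at an extremal coordinate the consensus part has the favourable sign, and $\Delta(\tau)=0$) yields $\|\Delta(t)\|_\infty\le\int_\tau^t|E(s)|\,ds\le 2R\int_\tau^\infty\gamma(s)\,ds=:\eta(h)$, uniformly in $t\ge\tau$, with $\eta(h)\to 0$ as $h\to 0$.

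Finally, letting $t\to\infty$ -- both versions reach consensus by the argument in Theorem \ref{Theorem: Controllability from non-empty intial network}, and for the second version this also follows from its closeness to the first -- we obtain from \eqref{Eqn: G definition} that $|G(\tau+h,T_{ba})-G(\tau,T_{ba})|=|\lim_{t\to\infty} x_a^{(2)}(t)-\lim_{t\to\infty} x_a^{(1)}(t)|\le\eta(h)\to 0$. The case $h<0$ is identical after relabelling, comparing the two versions on $[\tau+h,\infty)$. The main obstacle is the third step: one must show that perturbing a single edge weight changes the limiting consensus value by at most $O(\int\gamma)$, which requires carefully isolating the contracting (consensus) part of the dynamics from the forcing and bounding the forcing by $D(t)\gamma(t)$; crucially, since $D$ is only known to be bounded (not to decay at any prescribed rate), the finiteness and smallness of the time integral must come from the exponential decay of $\gamma$, not of $D$.
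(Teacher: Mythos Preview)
Your argument is correct and follows a genuinely different route from the paper's. Both proofs start the same way: after $T_R$ one has $\phi\equiv 1$, the weight trajectories are prescribed functions of time, perturbing $T_{ab}$ by $h$ changes only $w_{ab}$, and the resulting discrepancy $\gamma(t)$ satisfies $\int_\tau^\infty\gamma\to 0$ as $h\to 0$. The paper then writes the linear system as $\dot x=A(t)x$, expresses the solution as $x(t)=x(0)\exp\bigl(\int_0^t A\bigr)$, decomposes $A^{(1)}=A^{(2)}+\Gamma$ with entries of $\Gamma$ bounded by $\gamma$, and concludes via continuity of the matrix exponential and operator norm. You instead subtract the two flows directly: the difference $\Delta=x^{(2)}-x^{(1)}$ satisfies a time-varying consensus equation with a forcing term supported on coordinate $a$ alone, bounded by $2R\gamma(t)$, and the maximum principle yields $\|\Delta(t)\|_\infty\le 2R\int_\tau^\infty\gamma$ uniformly in $t$. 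Your route is more elementary and, importantly, sidesteps a delicate point in the paper's calculation: the identities $x(t)=x(0)\exp\bigl(\int_0^t A\bigr)$ and $\exp\bigl(\int A^{(2)}+\Gamma\bigr)=\exp\bigl(\int A^{(2)}\bigr)\exp\bigl(\int\Gamma\bigr)$ require the integrands to commute at different times, which is not verified there; your comparison argument needs no such hypothesis and gives an explicit constant. The paper's packaging, on the other hand, would extend more mechanically to simultaneous perturbations of several edges.
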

\begin{proof}
    As weights cannot be driven to exactly zero, only made exponentially small, we do not look specifically at the location of $\lim_{t\rightarrow\infty} x_a(t)$, but instead consider the behaviour of the entire ODE system. We show that making a small change to $T_{ab}$, and therefore a small change to the weight $w_{ab}$ has a correspondingly small impact on the location of $\lim_{t\rightarrow\infty} x(t)$ as a whole, and thus on the location of $\lim_{t\rightarrow\infty} x_a(t)$. 

    Recall that for $t\geq T$, $\phi\big(|x_j(t) - x_i(t)|\big) = 1$ for all $i,j\in\Lambda$. Hence the ODE system \eqref{eqn: ODE system} becomes
    \begin{equation*}
        \frac{dx_i}{dt} = \frac{1}{k_i} \sum_{j=1}^N w_{ij} \,(x_j - x_i) = \Bigg( \sum_{j\neq i} \frac{w_{ij}}{k_i} \, x_j \Bigg) - \Bigg( \sum_{j\neq i} \frac{w_{ij}}{k_i} \Bigg) \, x_i  \,.
    \end{equation*}
    The system can then be written as 
    \begin{equation*}
        \frac{dx}{dt} = A(t) \, x \,,
    \end{equation*}
    for the matrix
    \begin{equation*}
        A_{ij} = 
        \begin{cases}
            \dfrac{w_{ij}}{k_i} \,&\text{for } i\neq j \,,\\[0.5em]
            - \sum\limits_{j\neq i} \dfrac{w_{ij}}{k_i} \,&\text{for } i=j \,.\\
        \end{cases}
    \end{equation*}
    The solution to this system can then be written in the form $x(t) = x(0)\,\exp\Big( \int_0^t A(r) \,dr \Big)$. 

    As in the proof of Lemma \ref{Lemma: F continuity} we consider two versions of the system, the first $(x^{(1)},w^{(1)})$ in which $T_{ab} = \tau$ and the second $(x^{(2)},w^{(2)})$ in which $T_{ab} = \tau + h$, for some small $h > 0$. Define $\gamma(t) = w_{ab}^{(1)}(t) - w_{ab}^{(2)}(t)$. The controls $u_{ab}$ are known in both versions, hence we can compute $\gamma(t)$ exactly. For $t\leq\tau$, $\gamma(t) = 0$. Then for $\tau < t < \tau + h$, 
    \begin{equation*}
        \gamma(t) = \ell^+ - \big(\ell* - w_{ab}(\tau) \big) \, e^{-s^+(t - \tau)} - w_{ab}(\tau) \, e^{-\mathcal{S}(t - \tau)} \,.
    \end{equation*}
    Finally, for $t \geq \tau + h$, 
    \begin{equation*}
        \gamma(t) = e^{-s^+ (t-\tau)} \bigg( \ell^+ \big(e^{s^+h} - 1\big) + w_{ab}(\tau)\Big( 1 - e^{(s^+ - \mathcal{S})h} \Big) \bigg) \,.
    \end{equation*}
    Crucially,
    \begin{equation} \label{Eqn: Lemma A2, gamma integral bound}
        \int_\tau^\infty | \gamma(t) | \, dt < m(h)
    \end{equation}
    where $m(h):\mathds{R}\rightarrow\mathds{R}$ is a bounded function with $m(h)\rightarrow0$ as $h\rightarrow0$. 

    This change in $w_{ab}$ translates into a small alteration to the matrix $A(t)$. Let $A^{(1)}(t) = A^{(2)}(t) + \Gamma(t)$. This gives that 
    \begin{align}
        \| x^{(1)}(t) - x^{(2)}(t) \| &= \bigg\| x(0)\,\exp\bigg( \int_0^t A^{(1)}(r) \,dr \bigg) - x(0)\,\exp\bigg( \int_0^t A^{(2)}(r) \,dr \bigg) \bigg\| \nonumber\\
        &= \bigg\| x(0)\,\exp\bigg( \int_0^t A^{(2)}(r) + \Gamma(r) \,dr \bigg) - x(0)\,\exp\bigg( \int_0^t A^{(2)}(r) \,dr \bigg) \bigg\| \nonumber\\
        &= \bigg\| x(0)\,\exp\bigg( \int_0^t A^{(2)}(r) \,dr \bigg) \exp\bigg( \int_0^t \Gamma(r) \,dr \bigg) - x(0)\,\exp\bigg( \int_0^t A^{(2)}(r) \,dr \bigg) \bigg\| \nonumber\\
        &\leq \bigg\| x(0)\,\exp\bigg( \int_0^t A^{(2)}(r) \,dr \bigg) \bigg\| \times \bigg\|\exp\bigg( \int_0^t \Gamma(r) \,dr \bigg) - \mathds{1}_{N \times N} \bigg\|_{op} \nonumber\\
        &= \| x^{(2)}(t)  \| \times \bigg\|\exp\bigg( \int_0^t \Gamma(r) \,dr \bigg) - \mathds{1}_{N \times N} \bigg\|_{op} \label{Eqn: Lemma A2, operator norm bound}
    \end{align}
    where $\mathds{1}_{N \times N}$ is the $N\times N$ identity matrix and $\|\cdot\|_{op}$ is the operator norm. 
    
    We will now bound the entries of $\int_0^t \Gamma(r) \,dr$. As the only change between the two versions is in $T_{ab}$ and therefore in $w_{ab}$, $\Gamma_{ij} = 0$ for $i\neq a$. For $i=a$,
    \begin{align*}
        \Gamma(t)_{ab} 
        &= \frac{w_{ab}^{(2)}(t) + \gamma(t)}{k_a^{(2)}(t) + \gamma(t)} - \frac{w_{ab}^{(2)}(t)}{k_a^{(2)}(t)} \\
        &= \frac{1}{k_a^{(2)}(t) + \gamma(t)} \bigg( 1 - \frac{1}{k_a^{(2)}(t)} \bigg)\,\gamma(t) \\
        &\leq \gamma(t) \,,
    \end{align*}
    and for $j\neq a,b$
    \begin{align*}
        \Gamma(t)_{aj} 
        &= \frac{w_{aj}^{(2)}(t)}{k_a^{(2)}(t) + \gamma(t)} - \frac{w_{aj}^{(2)}(t)}{k_a^{(2)}(t)} \\
        &= \frac{w_{aj}^{(2)}(t)}{k_a^{(2)}(t) + \gamma(t)} \bigg( 1 - \frac{k_a^{(2)}(t) + \gamma(t)}{k_a^{(2)}(t)} \bigg) \\
        &= \frac{w_{aj}^{(2)}(t)}{k_a^{(2)}(t) + \gamma(t)} \gamma(t) \\
        &\leq \gamma(t) \,.
    \end{align*}
    Finally,
    \begin{align*}
        \big|\Gamma(t)_{aa}\big| = \bigg|- \sum_{j\neq a} \Gamma_{aj}(t) \bigg| \leq (N-1)\,\gamma(t) \,.
    \end{align*}
    Therefore by \eqref{Eqn: Lemma A2, gamma integral bound}, for any $h>0$,
    \begin{equation*}
        \int_0^\infty \Gamma(r) \,dr < \infty \,,
    \end{equation*}
    and 
    \begin{equation} \label{Eqn: Lemma A2, Gamma intergral tends to zero}
        \lim_{h\rightarrow0} \int_0^\infty \Gamma(r) \,dr = 0 \,.
    \end{equation}
    As both the operator norm and matrix exponential are continuous, combining \eqref{Eqn: Lemma A2, operator norm bound} and \eqref{Eqn: Lemma A2, Gamma intergral tends to zero} gives that 
    \begin{equation}
        \Big\| \lim_{t\rightarrow\infty} x^{(1)}(t) - \lim_{t\rightarrow\infty} x^{(2)}(t) \Big\| \rightarrow 0 
    \end{equation}
    as $h\rightarrow0$. An almost identical argument holds for small $h < 0$ (simply consider $-\gamma(t)$). Hence for a fixed $T_{ba}$, the function $G(T_{ab},T_{ba})$ is continuous in $T_{ab}$. 
\end{proof}

\section{Maximisation Principle} \label{Appendix: Max principle}

Taking $s$ and $\ell$ as defined in \eqref{Eqn: ess and ell}, the Hamiltonian \eqref{Eqn: Hamiltonian} becomes
\begin{align*} 
    H(x,w,p,q,u) 
    &= \sum_{i=1}^N \sum_{j=1}^N \bigg( \frac{p_i}{k_i}\,w_{ij}\,\phi(x_j - x_i)(x_j - x_i) + q_{ij}\,\mathcal{S}u_{ij}^2 \big( (u_{ij}+1)/2 - w_{ij} \big)- \alpha u_{ij}^2 \bigg) \\ &- \beta \sum_{i=1}^N \big(x_i - x^* \big)^2 \,.
\end{align*}
When trying to maximise this function with respect to $u$, we can consider the effect of each $u_{ij}$ independently. Furthermore, much of $H$ is independent of the choice of $u$. Hence we define a new function
\begin{equation}
    h(\upsilon \,;q,w) = q \,\mathcal{S}\upsilon^2 \big( (\upsilon+1)/2 - w \big)- \alpha \upsilon^2 \,,
\end{equation}
and set
\begin{equation}
    u_{ij} = \argmax_{\upsilon\in[-1,1]} \, h(\upsilon, q_{ij}, w_{ij}) \,.
\end{equation}
If $\mathcal{S}q = 0$ then clearly $h$ has its maximum at $\upsilon = 0$, hence we assume this is not the case. The function $h$ can then be written (dropping the explicit dependence on $q$ and $v$)
\begin{align*}
    h(\upsilon) &= \frac{\mathcal{S}q}{2} \upsilon^2 \bigg( \upsilon + \Big(1 - 2w - \frac{2\alpha}{\mathcal{S}q} \Big) \bigg) \,, \\
    &= a \upsilon^2 ( \upsilon + b) \,,
\end{align*}
for $a = \frac{\mathcal{S}q}{2}$ and $b = \Big(1 - 2w - \frac{2\alpha}{\mathcal{S}q} \Big)$, both of which can take positive or negative values. The local extrema of $h$ are at $\upsilon = 0$ and $\upsilon^* := -\frac{2b}{3}$, with $h(0)=0$ and $h(\upsilon^*) = \frac{4ab^3}{27}$.

Recall that $w\in[0,1]$, $\alpha > 0$, $\mathcal{S}>0$ and $q\in\mathds{R}$. We aim to show that under these conditions, the maximum of $h(\upsilon) $ over $\upsilon\in[-1,1]$ is never at $\upsilon = \upsilon^*$, hence must be in the set $\{-1,0,1\}$. We therefore assume that there exists $w,\alpha,\mathcal{S},q$ satisfying the above conditions for which the maximum does indeed lie at $\upsilon^*$ and aim to find a contradiction.

As the maximum lies at $\upsilon^*$ we have the following inequalities
\begin{align*}
    h(\upsilon^*) &\geq h(0) = 0 \\
    h(\upsilon^*) &\geq h(1) = a(b+1) \\
    h(\upsilon^*) &\geq h(-1) = a(b-1)
\end{align*}
These are all satisfied only when $a\geq 0$ and $b\geq 3$, or when $a\leq0$ and $b\leq-1$. Note that the sign of $a$ coincides with the sign of $q$ and $q\neq0$, so these can be rewritten as $q> 0$ and $b\geq 3$, or $q<0$ and $b\leq-1$.

However, if $q>0$ then $b\leq 1 - 2w \leq 1 < 3$, so the first pair of inequalities is not possible. Also, if $q<0$ then $b> 1-2w \geq -1$, so the second pair of inequalities is not possible. 

Overall this means the maximum of $h(\upsilon) $ over $\upsilon\in[-1,1]$ is always in the set $\{-1,0,1\}$. Comparing the values of $h(-1), h(0), h(1)$ gives \eqref{Eqn: bang bang optimal controls}.

\section{Additional Examples of Instantaneous Control} \label{Appendix: Additional Examples Inst}

Figure \ref{fig: Instantaneous control of V} demonstrates examples in which the control strategy \eqref{Eqn: instantaneous minimusation of V control strategy} succeeds in guiding the population to consensus at a moderate value of $x^*=0.2$, but fails for the more extreme value $x^*=-0.8$. Both examples use the setup described in Section \ref{Section: Instantaneous control}, but with identical initial conditions. 

\begin{figure}[ht!]
    \centering
    \includegraphics[width = .8\linewidth]{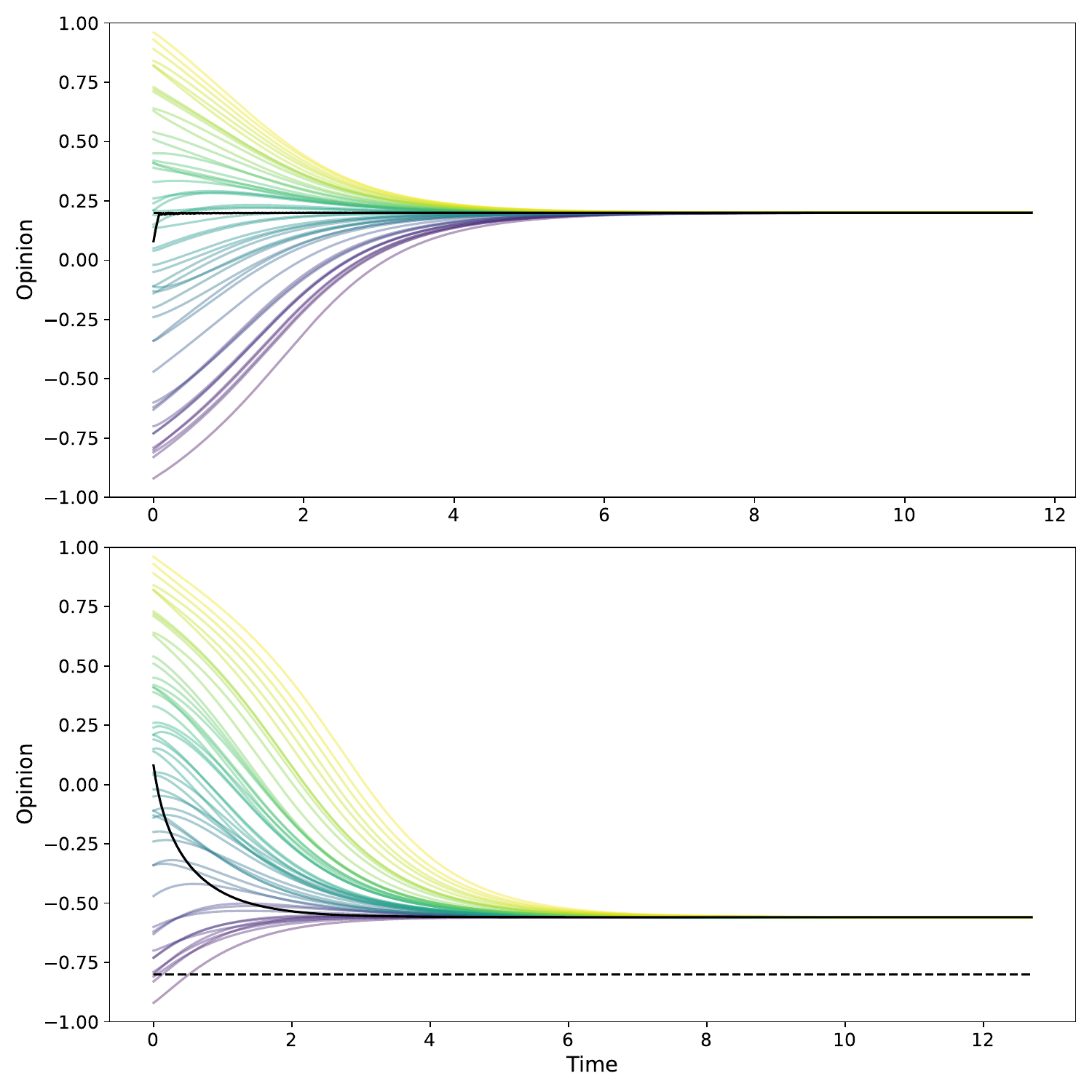}
    \caption{Example implementations of the instantaneous control \eqref{Eqn: instantaneous minimusation of V control strategy} with exponential interaction function \eqref{Eqn: exponential interaction function}. This control succeeds for the moderate target of $+0.2$ (top panel), but fails for the more extreme target of $-0.8$ (bottom panel). Opinion trajectories are coloured according to individuals' initial opinions. The target opinion $x^*$ is indicated by a black dashed line. The degree-weighted mean opinion is given by the solid black line. Note that in the top panel the dashed line is not visible as the degree-weighted mean opinion is almost immediately brought to the target.}
    \label{fig: Instantaneous control of V}
\end{figure}

Figure \ref{fig:final diameter} shows the final opinion diameter in each of the 10,000 simulations described in Section \ref{Section: Instantaneous control}. In all cases the population can be considered to have reached consensus as the opinion diameter is of the order $10^{-5}$. 

\begin{figure}[ht!]
    \centering
    \includegraphics[width = \linewidth]{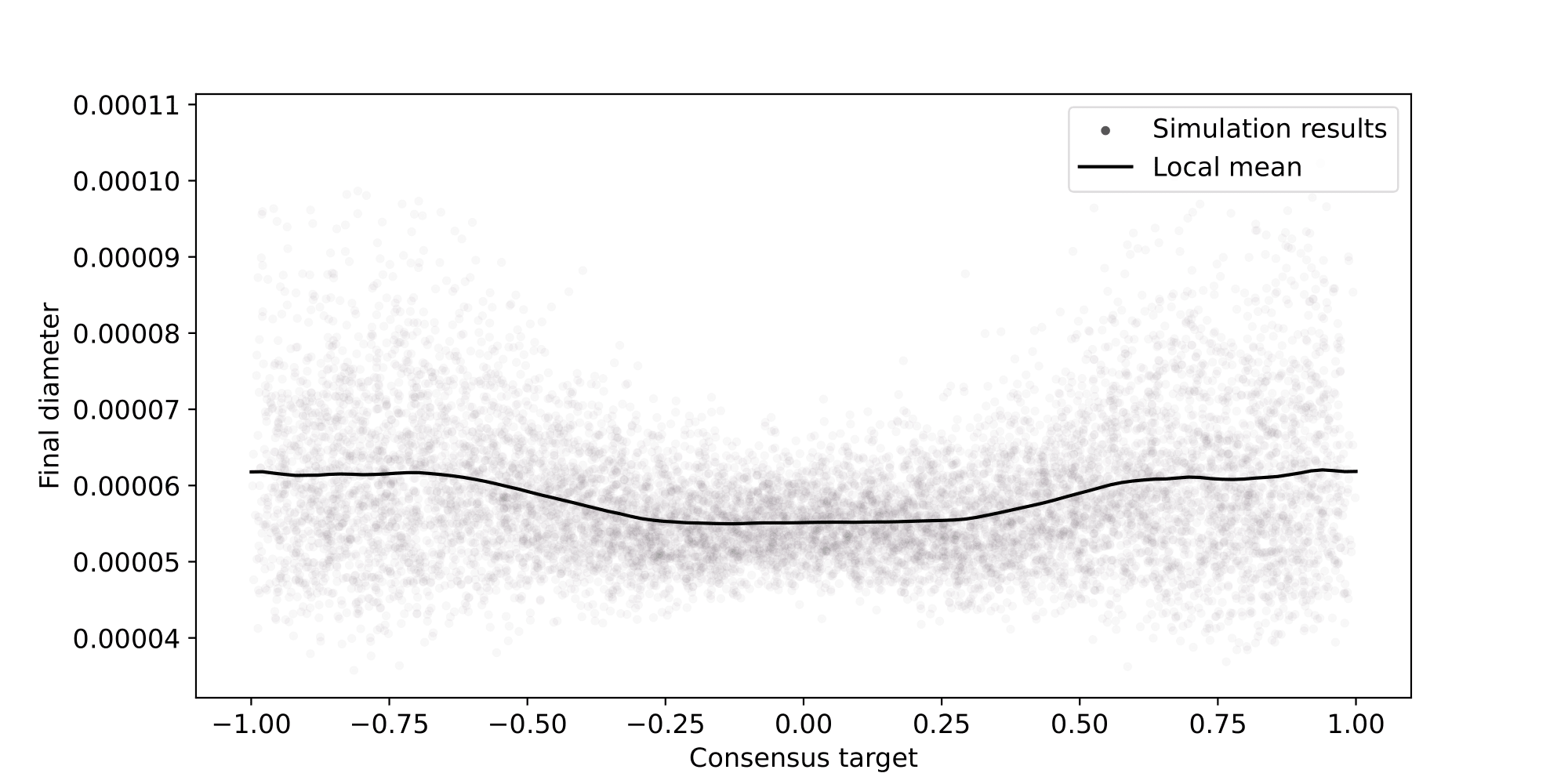}
    \caption{Results of repeated tests of the instantaneous control \eqref{Eqn: instantaneous minimusation of V control strategy} with $\mathcal{S}=1$. Each simulation uses uniformly random initial conditions and a weighted Erdos-Renyi random initial network and runs until opinions have reached a steady state. Each point shows the opinion diameter at the end of the simulation. The black line shows a local mean.}
    \label{fig:final diameter}
\end{figure}

\begin{figure}[ht!]
    \centering
    \includegraphics[width = \linewidth]{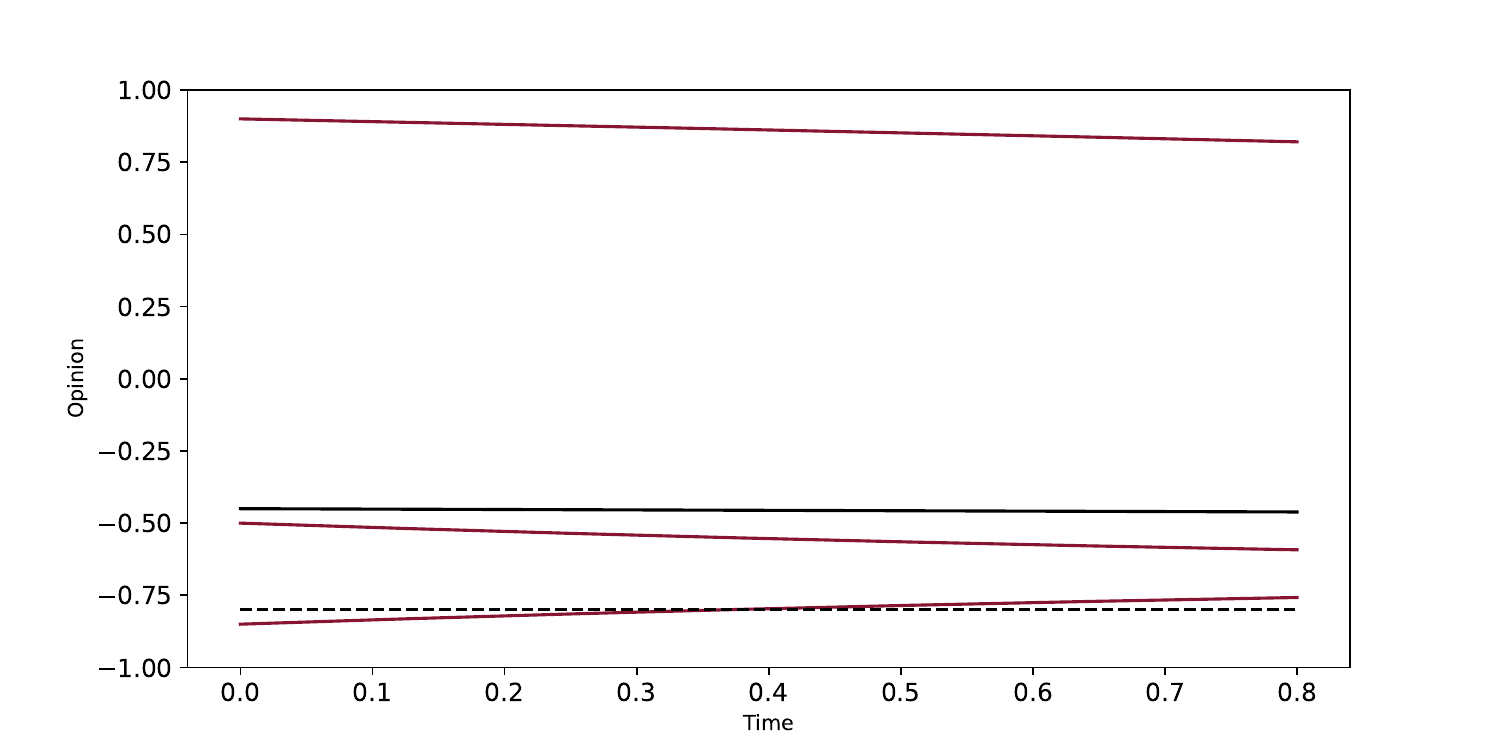}
    \caption{Example trajectory demonstrating that, for certain initial conditions, control to $x^*$ is impossible regardless of the speed of controls ($\mathcal{S}$). Opinion trajectories are shown in red. The target opinion $x^*$ is indicated by a black dashed line. The degree-weighted mean opinion is given by the solid black line and is constant.}
    \label{fig: Instantaneous control abject failure}
\end{figure}

Figure \ref{fig: Instantaneous control abject failure} gives an example trajectory demonstrating that, for certain initial conditions, control to $x^*$ is impossible regardless of the speed of controls ($\mathcal{S}$). A small population size of $N=3$ is chosen for simplicity. Initial opinions are given by $x(0) = (-0.85,-0.5, 0.9)$ with target $x^* = -0.8$ (shown by a dashed black line in Figure \ref{fig: Instantaneous control abject failure}). The initial weights $w(0)$ are chosen to match the steady state of their controlled weight dynamics. That is, for $i\neq j$, $w_{ij}(0)=1$ if $u_{ij}(0) = 1$ and $w_{ij}(0)=0$ if $u_{ij}(0) = -1$, where $u_{ij}$ is determined by \eqref{Eqn: instantaneous minimusation of V control strategy}. The initial network is 
\begin{equation*}
    w(0) = \begin{pmatrix}
        1 & 1 & 0 \\
        1 & 1 & 0 \\
        1 & 1 & 1
    \end{pmatrix} \,.
\end{equation*}

As the degree-weighted mean opinion $\Bar{x}$ (shown by a solid black line in Figure \ref{fig: Instantaneous control abject failure}) does not intersect any individual's opinion trajectory or the target $x^*$, $u_{ij}(t) = u_{ij}(0)$ for all $i,j\in\Lambda$ and $t\in[0,0.8]$. As a result all weights remain constant throughout the dynamics as they begin at the correct steady state. Hence the speed of controls $\mathcal{S}$ is not significant, as the edge weights never change. Note that even though weights begin in a steady state determined by \eqref{Eqn: instantaneous minimusation of V control strategy}, a strategy designed to minimise the distance between $\Bar{x}$ and $x^*$, $\Bar{x}(0)\neq x^*$. This is a result of setting $w_{ii}(t)=1$ for all $i\in\Lambda$ and $t\geq0$, which restricts the possible values of $\Bar{x}$. This represents the idea that each individual always gives their own opinion maximal weight, and that this weight cannot be affected by any control. 

At time $t\approx3.5$ the individual with the lowest opinion crosses $x^*$, meaning that $x^*$ is outside the opinion interval and thus control to consensus at $x^*$ is impossible. Hence for certain initial conditions and consensus targets, control to consensus using the instantaneous control \eqref{Eqn: instantaneous minimusation of V control strategy} is not possible for any value of $\mathcal{S}$. 

\section{Additional Examples of Optimal Control} \label{Appendix: Additional Examples}

In this appendix we show several additional examples of the optimal control problem described in Section \ref{Section: Optimal control}. The figures are presented in the same format as Figure \ref{fig:optimal_control_dynamics_and_control}: the top panel shows the optimal dynamics while the lower panels show several snapshots of the optimal controls. 

Figure \ref{fig:optimal control empty w0} shows the optimal dynamics and controls using the same target and initial opinions as for Figure \ref{fig:optimal_control_dynamics_and_control} but with an empty initial network. As a result the majority of negative controls are replaced with zero control, as there is no need to remove undesired edges. The pattern of positive controls is largely the same as seen in Figure \ref{fig:optimal_control_dynamics_and_control}. 

Figure \ref{fig:optimal control complete w0} shows the optimal dynamics and controls using the same target and initial opinions as for Figure \ref{fig:optimal_control_dynamics_and_control} but with a complete initial network. Here the situation is reversed, as the majority of positive controls are now replaced with zero controls and the pattern of negative controls matches that in Figure \ref{fig:optimal_control_dynamics_and_control}. 

Figure \ref{fig:optimal_control_dynamics WS} shows the optimal dynamics and controls using a different target, different random initial opinions and a new Watts-Strogatz small-world initial network \cite{watts1998collective}. Although the locations of positive and negative controls are different from those in Figure \ref{fig:optimal_control_dynamics_and_control} there is a similar overall pattern: stripes of positive controls whose width matches the interaction function, bringing individuals towards $x^*$; many zero controls, especially towards the end of the dynamics; negative controls that do not form a consistent pattern. 

\begin{figure}[ht!]
    \begin{subfigure}{\linewidth}
        \centering
        \includegraphics[width = .8\linewidth, trim = {1cm 0cm 2cm 1cm}, clip]{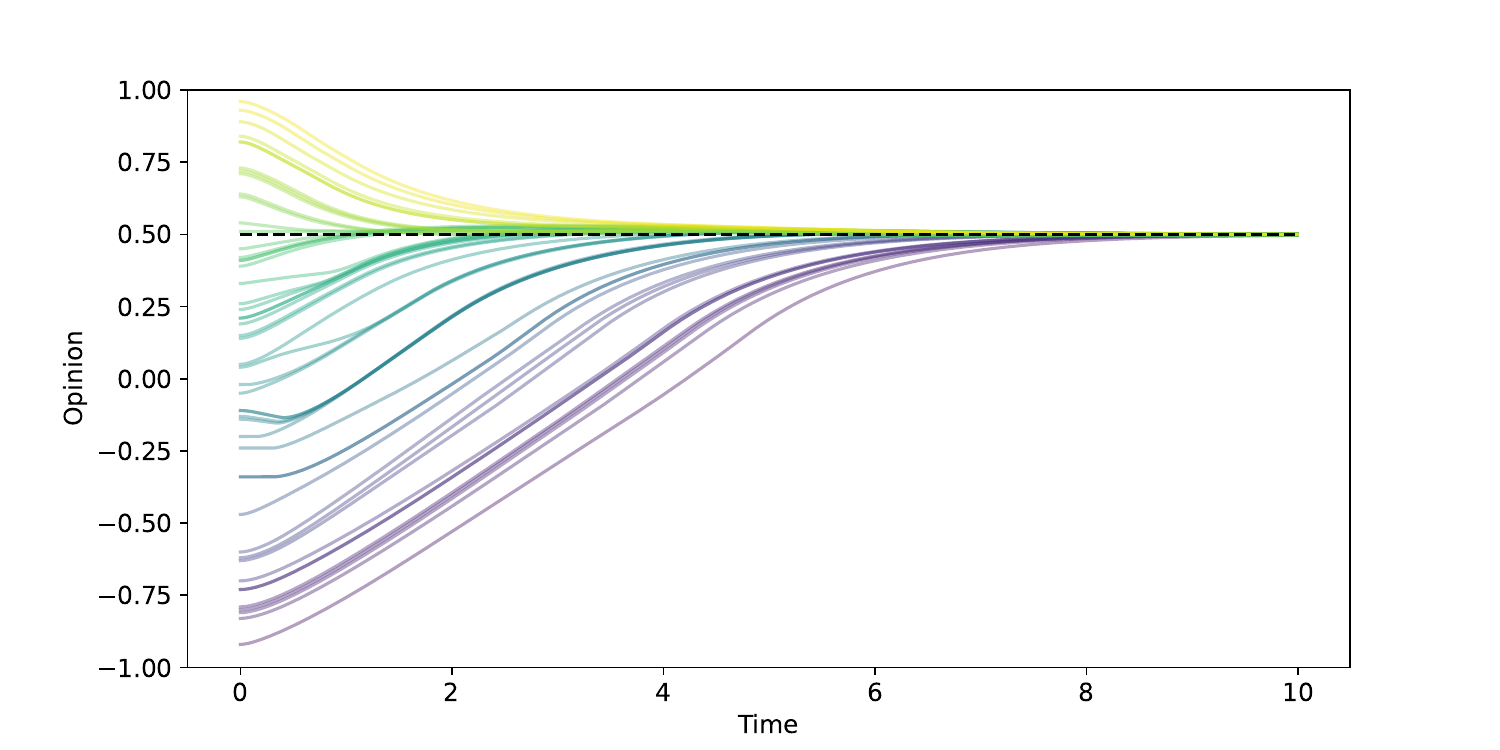}
        \caption{Opinion dynamics under the optimal controls for the cost functional \eqref{Eqn: cost functional}. Opinion trajectories are coloured according to individuals' initial opinions. The target opinion $x^*$ is indicated by a black dashed line. }
        \label{fig:optimal_control_dynamics empty w0}
    \end{subfigure}

    \begin{subfigure}{\linewidth}
        \centering
        \includegraphics[width = 0.9\linewidth, trim = {0.5cm 0.5cm 1.5cm 2.5cm}, clip]{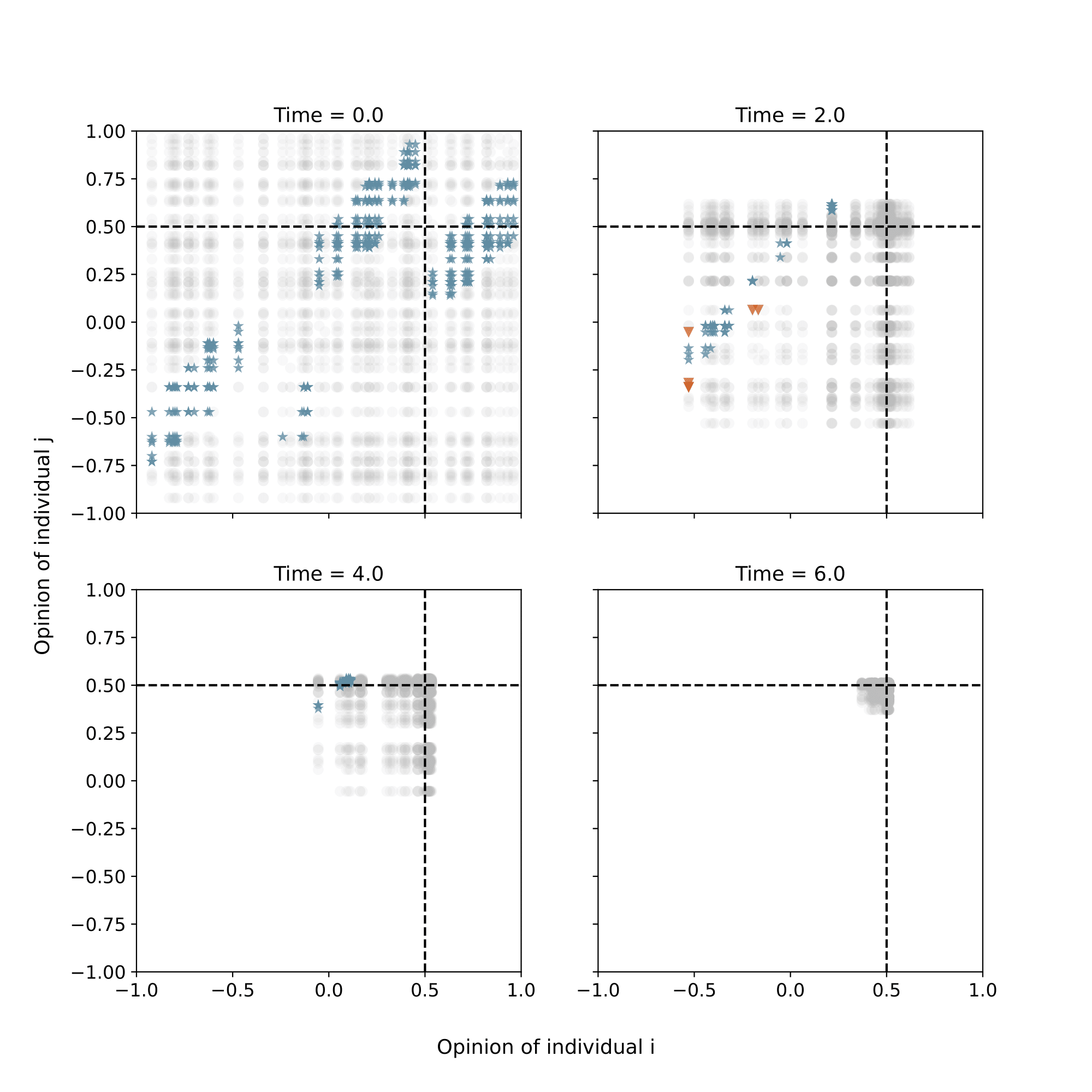}
        \caption{Snapshots of the optimal control at times $t=0,2,4,6$. The horizontal axis gives the opinions $x_i$ for $i\in\Lambda$, the vertical axis gives the opinions $x_j$ for $j\in\Lambda$ and points show the control $u_{ij}$. Blue stars show positive controls, where edges are created/strengthened. Red triangles show negative controls, where edges are being weakened. Grey circles indicate no control. Dashed lines show the location of the target opinion $x^*$, hence as $t$ increases opinions are brought near this value.}
        \label{fig:optimal_control_controls empty w0}
    \end{subfigure}

    \caption{Results of the FBS to find the optimal controls under \eqref{Eqn: cost functional}, using edge weight dynamics of the form \eqref{Eqn: Memory weight controls} with $s$ and $\ell$ given by \eqref{Eqn: ess and ell}. The same initial opinions are used as for the examples in Figure \ref{fig: Controllability from empty network example} and Figure \ref{fig: Controllability from non-empty network example}. The initial network is \textbf{empty}.}
    \label{fig:optimal control empty w0}
\end{figure}

\begin{figure}[ht!]
    \begin{subfigure}{\linewidth}
        \centering
        \includegraphics[width = .8\linewidth, trim = {1cm 0cm 2cm 1cm}, clip]{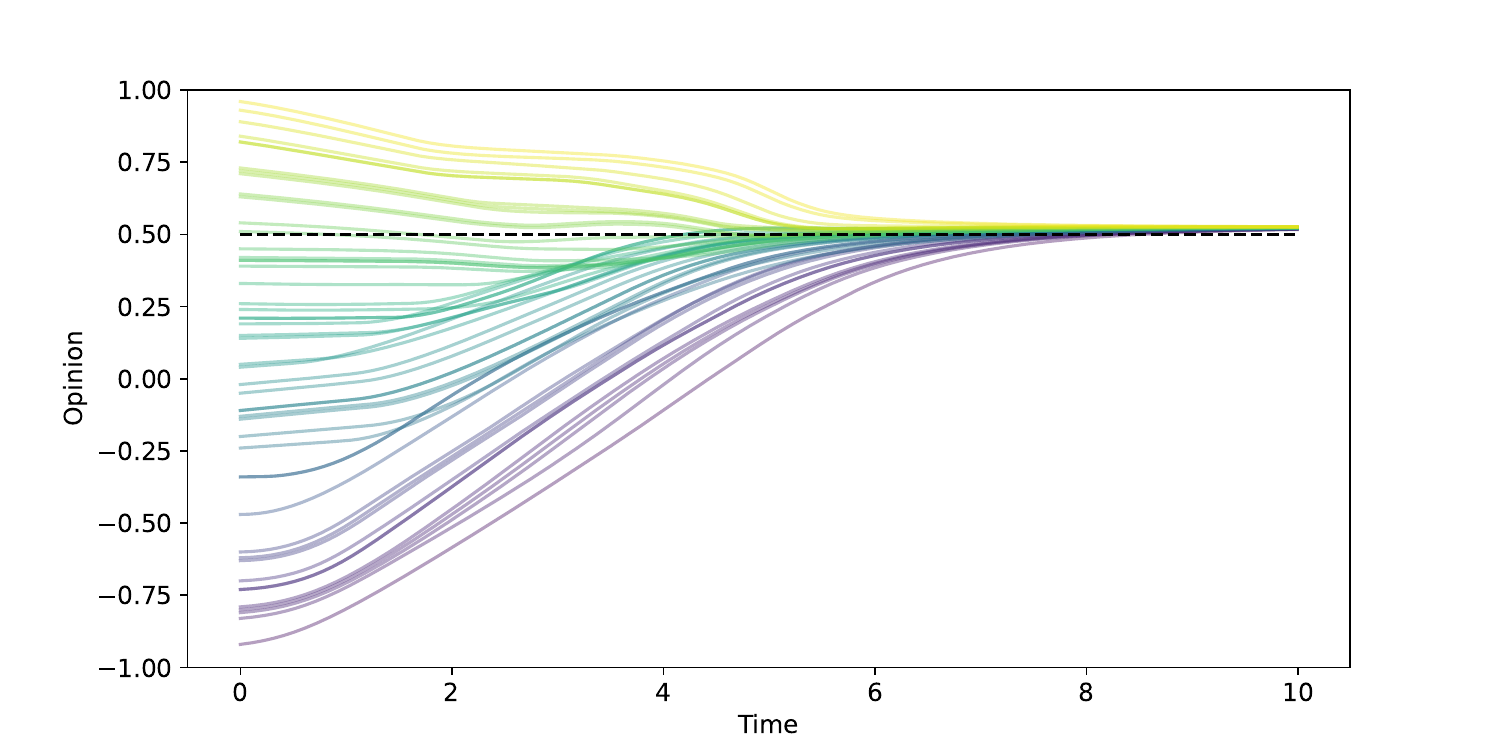}
        \caption{Opinion dynamics under the optimal controls for the cost functional \eqref{Eqn: cost functional}. Opinion trajectories are coloured according to individuals' initial opinions. The target opinion $x^*$ is indicated by a black dashed line. }
        \label{fig:optimal_control_dynamics complete w0}
    \end{subfigure}

    \begin{subfigure}{\linewidth}
        \centering
        \includegraphics[width = 0.9\linewidth, trim = {0.5cm 0.5cm 1.5cm 2.5cm}, clip]{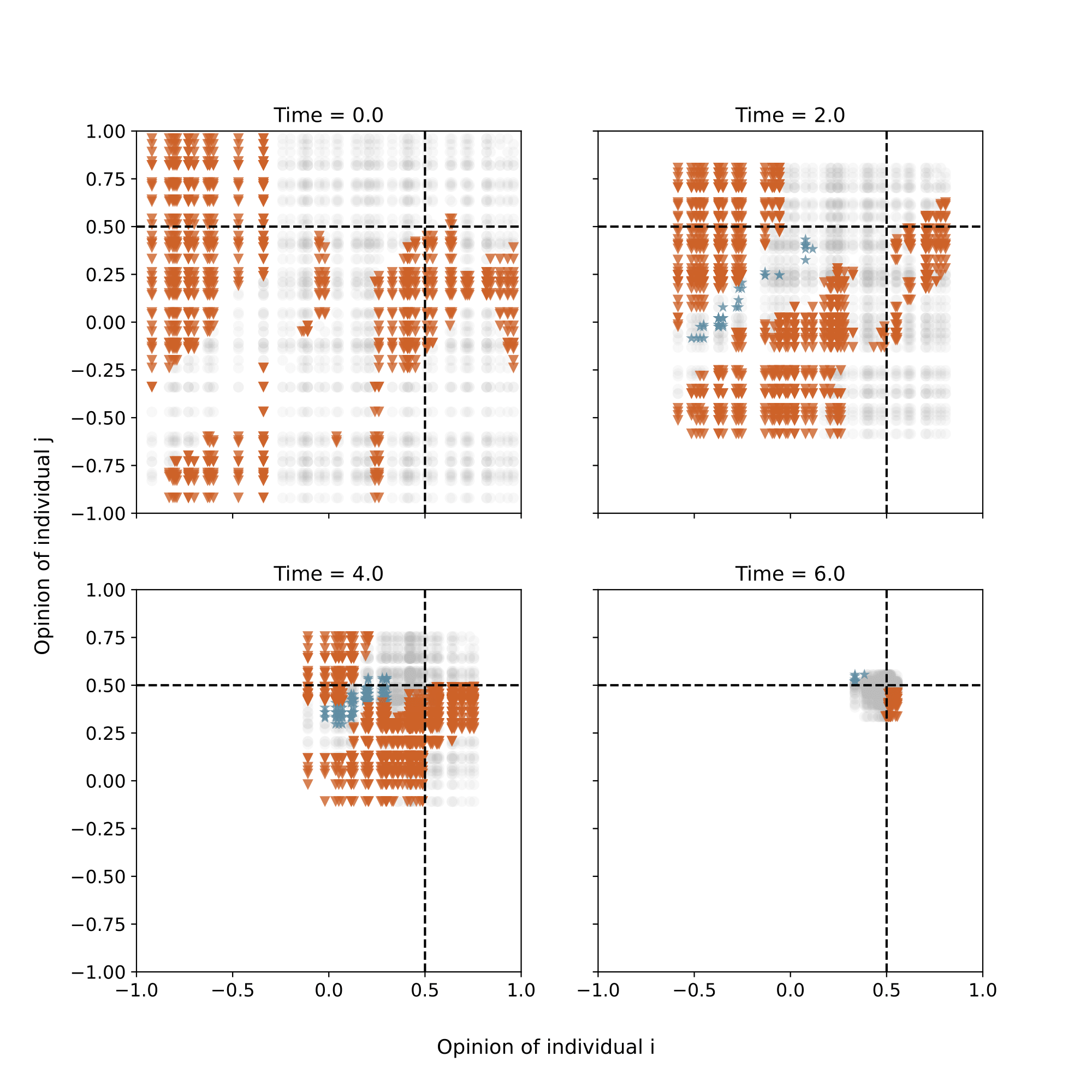}
        \caption{Snapshots of the optimal control at times $t=0,2,4,6$. The horizontal axis gives the opinions $x_i$ for $i\in\Lambda$, the vertical axis gives the opinions $x_j$ for $j\in\Lambda$ and points show the control $u_{ij}$. Blue stars show positive controls, where edges are created/strengthened. Red triangles show negative controls, where edges are being weakened. Grey circles indicate no control. Dashed lines show the location of the target opinion $x^*$, hence as $t$ increases opinions are brought near this value.}
        \label{fig:optimal_control_controls complete w0}
    \end{subfigure}

    \caption{Results of the FBS to find the optimal controls under \eqref{Eqn: cost functional}, using edge weight dynamics of the form \eqref{Eqn: Memory weight controls} with $s$ and $\ell$ given by \eqref{Eqn: ess and ell}. The same initial opinions are used as for the examples in Figure \ref{fig: Controllability from empty network example} and Figure \ref{fig: Controllability from non-empty network example}. The initial network is \textbf{complete}.}
    \label{fig:optimal control complete w0}
\end{figure}

\begin{figure}[ht!]
    \begin{subfigure}{\linewidth}
        \centering
        \includegraphics[width = .8\linewidth, trim = {1cm 0cm 2cm 1cm}, clip]{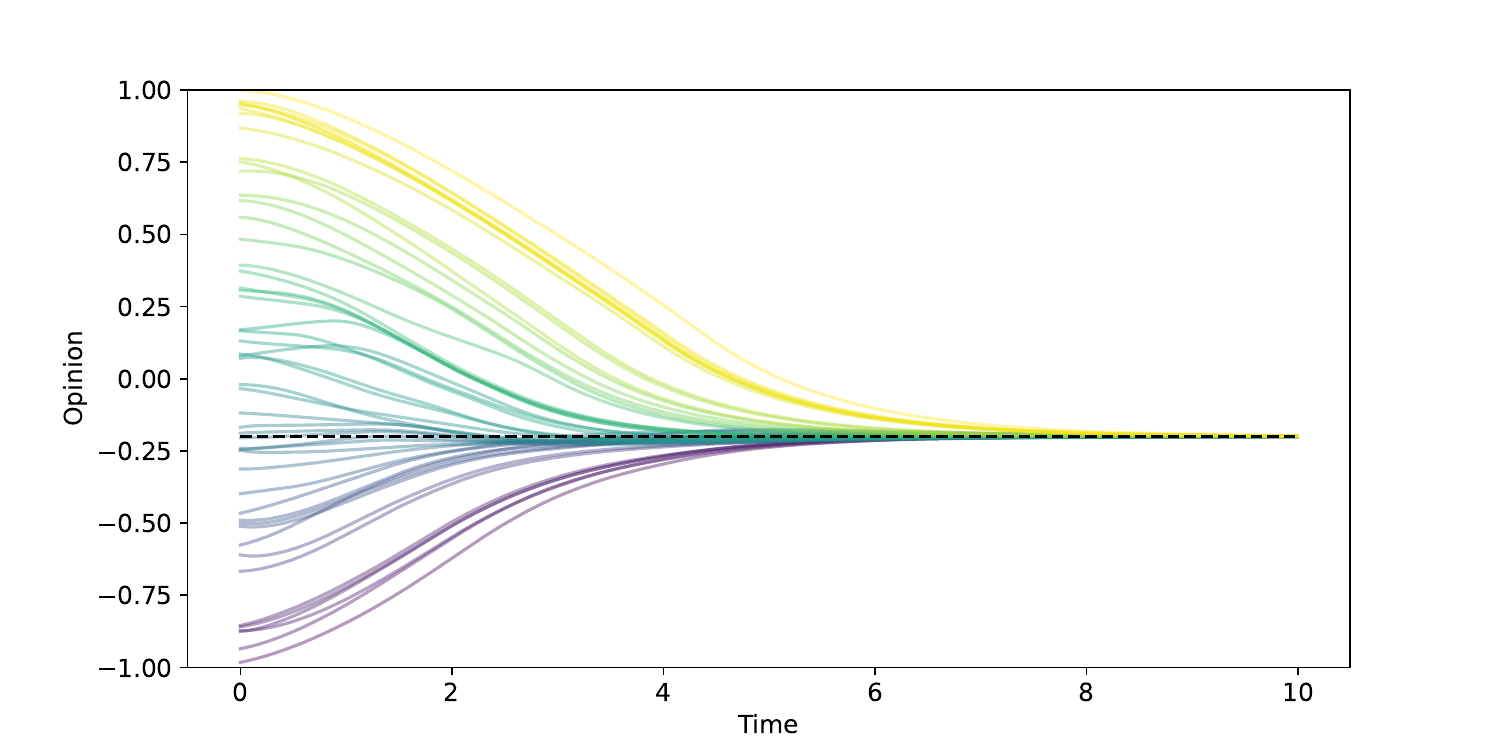}
        \caption{Opinion dynamics under the optimal controls for the cost functional \eqref{Eqn: cost functional}. Opinion trajectories are coloured according to individuals' initial opinions. The target opinion $x^*$ is indicated by a black dashed line. }
        \label{fig:optimal_control_dynamics WS}
    \end{subfigure}

    \begin{subfigure}{\linewidth}
        \centering
        \includegraphics[width = 0.9\linewidth, trim = {0.5cm 0.5cm 1.5cm 2.5cm}, clip]{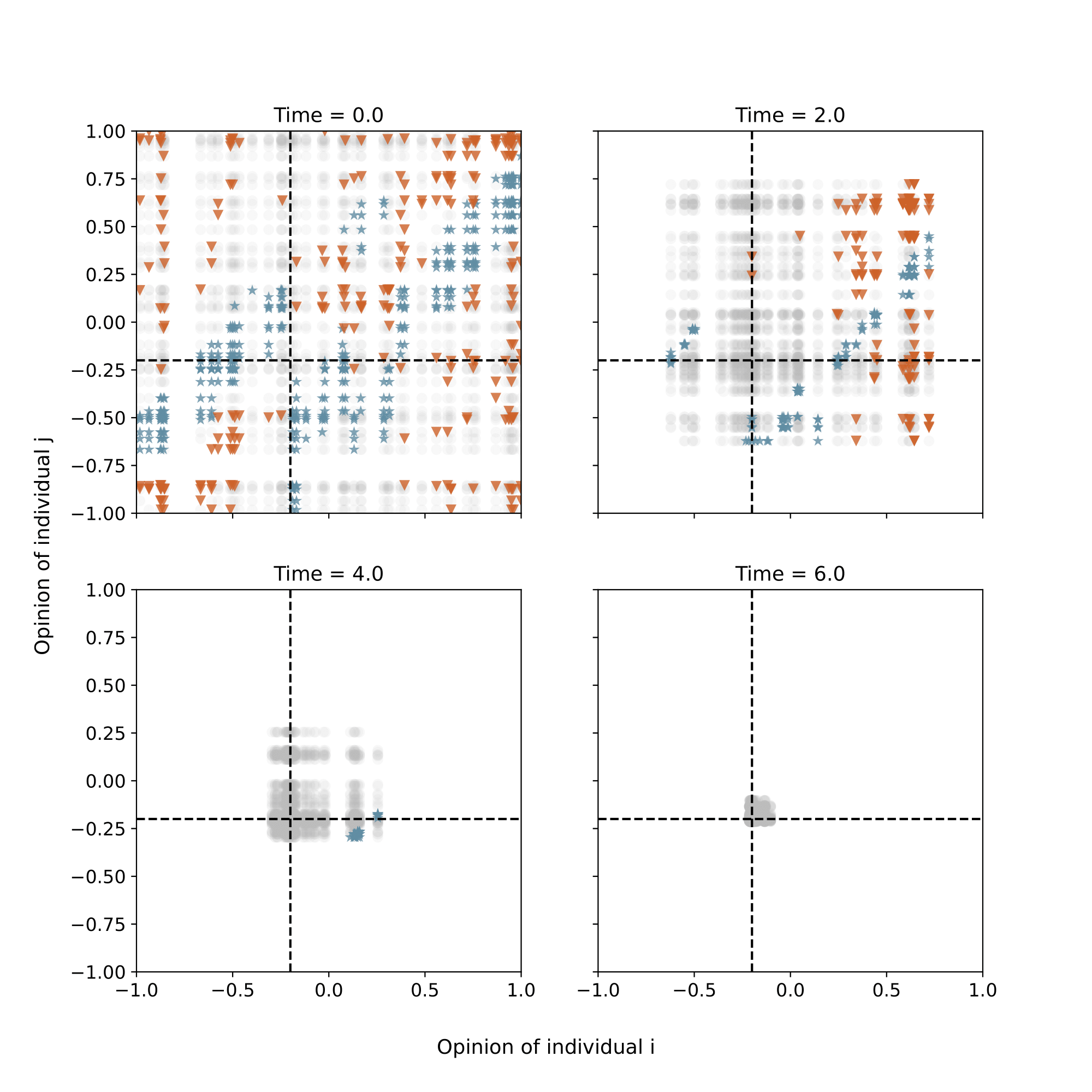}
        \caption{Snapshots of the optimal control at times $t=0,2,4,6$. The horizontal axis gives the opinions $x_i$ for $i\in\Lambda$, the vertical axis gives the opinions $x_j$ for $j\in\Lambda$ and points show the control $u_{ij}$. Blue stars show positive controls, where edges are created/strengthened. Red triangles show negative controls, where edges are being weakened. Grey circles indicate no control. Dashed lines show the location of the target opinion $x^*$, hence as $t$ increases opinions are brought near this value.}
        \label{fig:optimal_control_controls WS}
    \end{subfigure}

    \caption{Results of the FBS to find the optimal controls under \eqref{Eqn: cost functional}, using edge weight dynamics of the form \eqref{Eqn: Memory weight controls} with $s$ and $\ell$ given by \eqref{Eqn: ess and ell}. Initial opinions were sampled uniformly at random n $[-1,1]$. The initial network is a \textbf{Watts-Strogatz random graph}.}
    \label{fig:optimal control WS}
\end{figure}

\end{document}